\providecommand{\tabularnewline}{\\}
\theoremstyle{definition}
 \newtheorem{example}{\protect\examplename}
\theoremstyle{plain}
\newtheorem{thm}{Theorem}[section]
\newtheorem{prop}[thm]{Proposition}
\theoremstyle{definition}
\newtheorem{defn}[thm]{Definition}
\theoremstyle{remark}
\newtheorem{rem}[thm]{Remark}
\icmltitlerunning{Intrinsic Sliced Wasserstein Distances on Manifolds and Graphs}
\def\E{\mathbb{E}}
\def\P{\mathcal{P}}
\def\B{\mathcal{B}}
\def\R{\mathbb{R}}
\def\X{\mathcal{X}}
\def\D{\mathcal{D}}
\def\W{\mathcal{W}_2}
\def\H{\mathcal{H}}
\def\M{C}
\providecommand{\examplename}{Example}
\begin{document}
\twocolumn[
\icmltitle{Intrinsic Sliced Wasserstein Distances for Comparing Collections of\\ Probability Distributions on Manifolds and Graphs}

\icmlsetsymbol{equal}{*}
\begin{icmlauthorlist} 
\icmlauthor{Raif Rustamov}{amz,att} 
\icmlauthor{Subhabrata Majumdar}{sm,att} 
\end{icmlauthorlist}

\icmlaffiliation{amz}{Amazon, New York, NY, USA} 
\icmlaffiliation{sm}{AI Risk and Vulnerability Alliance, Seattle, WA, USA}
\icmlaffiliation{att}{work done while at AT\&T.}
\icmlcorrespondingauthor{Subhabrata Majumdar}{zoom.subha@gmail.com} 

\icmlkeywords{Machine Learning, ICML}
\vskip 0.3in ]

\printAffiliationsAndNotice{} 
\begin{abstract}
Collections of probability distributions arise in a variety of applications ranging from user activity pattern analysis to brain connectomics. In practice these distributions can be defined over diverse domain types including finite intervals, circles, cylinders, spheres, other manifolds, and graphs. This paper introduces an approach for detecting differences between two collections of distributions over such general domains. To this end, we propose the intrinsic slicing construction that yields a novel class of Wasserstein distances on manifolds and graphs. These distances are Hilbert embeddable, allowing us to reduce the distribution collection comparison problem to a more familiar mean testing problem in a Hilbert space. We provide two testing procedures one based on resampling and another on combining p-values from coordinate-wise tests. Our experiments in various synthetic and real data settings show that the resulting tests are powerful and the p-values are well-calibrated.
\end{abstract}

\section{Introduction}
Distributional data defined over general domains such as manifolds and graphs arise in a variety
of statistical applications. 
In this paper we consider the problem of comparing two collections
of distributions over such a general domain. Our goal is to test for homogeneity---whether all of
the distributions come from the same \emph{meta-distribution}---in an interpretable manner. While
conceptually similar to two-sample testing, this is a higher order
notion in the sense that our units of analysis are distributions/histograms.

For instance, given collections of personal activity histograms (over cylinder: time of day $\times$ intensity) for
two sub-populations, one may be interested in determinining whether
there are statistically significant differences between activity patterns
of these sub-populations. As another example, consider normalized
counts of events  per geographic region on a daily
basis. Collected over a year, this gives a set of 365 daily probability
distributions over the region adjacency graph, and one may wish to compare the
collection of distributions from weekdays to those from weekends.
Testing specific aspects of homogeneity is preferable: for example, in regular two-sample testing, detecting that the means are unequal provides interpretable insights,
whereas a general test that only says there are unspecified differences
between the distributions is less useful for interpretation.

Limited settings of this problem tackling distributions over the interval/circle
have been considered in the literature \cite{10.1093/biomet/asz052}, yet the general case of distributions over
graphs and manifolds is open. The requirement to test for specific
differences is non-trivial on general domains: what is the equivalent
of mean for a collection of distributions? While Fr\'echet mean \cite{OTBook} may
seem like the natural choice, there are a number of problems with testing
Fr\'echet mean equality. First, the existence and uniqueness of the
Fr\'echet mean is not guaranteed, and it can be sensitive to small changes.
Second, computing the Fr\'echet mean is expensive and
can become prohibitive when resampling is used to compute the null
distribution. Finally, resampling poses conceptual problems: using
permutation null will detect differences beyond the equality of Fr\'echet
means (same problem exists in regular two-sample testing, see e.g.
\citet{permuteornot}), and using bootstrap requires designing
the null case, which is highly non-trivial.

We attack this problem using insights from recent developments that
utilize \emph{Hilbert embeddings} for simplifying distributional data
problems \cite{DBLP:conf/icml/SolomonRGB14,densityHilbert}.
The simplification comes as a result of linearity of Hilbert spaces,
which allows adapting existing statistical approaches 
to distributional data. A crucial requirement on
the embedding is that the distance in the embedding space should give
a meaningful distance between measures; it is this property that renders
quantities computed in the embedding space such as means and variances
meaningful. While transportation based distances are efficient at
capturing many aspects of distributional data such as horizontal variation
\cite{wass_review1,OTBook,wass_review2}, yet the transportation theoretic
approaches hit a roadblock beyond the real line case due to their
Hilbert non-embeddability \cite{OTBook}. 

\begin{figure}
\centering{}\includegraphics[width=\columnwidth]{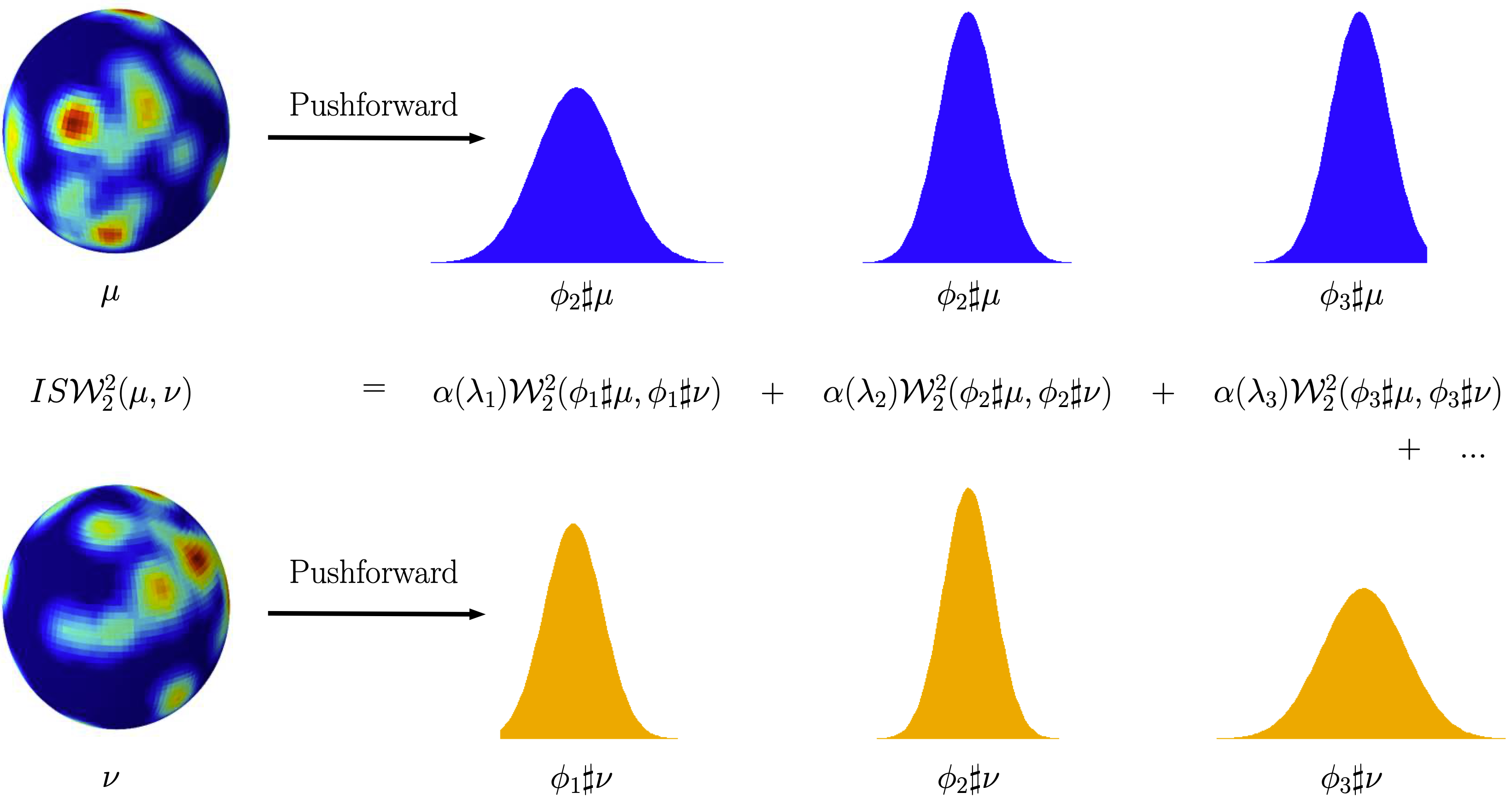}\caption{\label{fig:pushforward}Schematic of the proposed intrinsic slicing
construction. Given two probability measures on the sphere (here the
darkest blue corresponds to zero mass), different aspects of their
dissimilarities become apparent after pushforward to the real line
using the eigenfunctions of the Laplace-Beltrami operator, $\{\phi_{i}\}$,
in this case spherical harmonics. As a particular example of our general
construction, the (squared) intrinsic sliced 2-Wasserstein distance
$IS\W^{2}(\cdot,\cdot)$ is the weighted sum of the dissimilarities
of the corresponding pushforwards of $\mu$ and $\nu$ as measured
by squared 2-Wasserstein distance $\W^{2}(\cdot,\cdot)$ \emph{on
the real line}.}
\end{figure} 

We overcome these difficulties by introducing a new slicing construction
on manifolds and graphs (Figure \ref{fig:pushforward}) inspired by
the sliced 2-Wasserstein ($\W$) distances in high dimensional spaces \cite{Sliced1,sliced2}.
Our construction leverages eigenvalues and eigenfunctions/eigenvectors of the Laplace-Beltrami operator on manifolds and Laplacian matrix on graphs to capture the intrinsic
geometry and connectivity of the data domain. We apply this slicing
construction to obtain a novel class of intrinsic sliced 2-Wasserstein
distances on manifolds and graphs. The resulting distances are Hilbert
embeddable, have a number of desirable properties, and can be truncated
to obtain finite-dimensional embeddings.

Using the corresponding embedding allows us to reduce the distribution collection comparison problem to
the comparison of means in a high-dimensional Euclidean space. At the theoretical level our test checks equality of Fr\'echet means along slicings (see discussion after Example \ref{ex:1dim} in Section \ref{subsec:hilb_emb}). These means are transparently tied to the input data, whereby rejections lead to interpretable insights. We provide two approaches for hypothesis testing and verify via extensive
experiments 
that these tests are powerful, and the $p$-values are well-calibrated.

\section{Related Work}
Our framework is not simply a higher order version of a two-sample kernel test \cite{mmd} since we test for equality of a specific aspect of meta-distributions. This renders our null hypothesis different from \citet{mmd}, and we need a different set of techniques both for proofs and computations. For example, testing in \citet{mmd} can use the permutation null which is valid due to the stronger null hypothesis of equal distributions. In contrast, with our null hypothesis we cannot use the permutation null and have to resort to a bootstrap procedure. 
Other approaches such as the general Hilbert embedding framework of \citet{densityHilbert}
is not tied to a distance between probability distributions and so
can be problematic for capturing the location and variability aspects
of distribution collections. In addition, \citet{densityHilbert} has
difficulties in higher dimensions and does not provide constructions
suitable for manifolds or graphs.

\paragraph{Sliced Distances}
The Sliced Wasserstein (SW) distance and its generalized variant \citep[GSW]{sliced2} sets up the idea of approximating Wasserstein distances using multiple nonlinear projections, it is presented in extrinsic terms (i.e. Euclidean space) and can suffer
from the curse of dimensionality when a low dimensional data manifold
lives in a high-dimensional space. Our choice of eigenfunctions for
projection is very different from the one-parameter function families
in GSW and allows us to rigorously prove a number of general and testing-specific properties. 

Moreover, the GSW construction does not directly apply to
graphs. While the tree-sliced variant of GSW \cite{treesliced}
can be applied in an intrinsic manner (the clustering variant), it
relies on a different type of distance, in the limit related to the
euclidean/geodesic distance. This can be seen by comparing our lower
bound to theirs: our lower bound for ISW is in terms of the MMD using
the spectral distance (Proposition \ref{prop:MMD-equiv}). The recently-proposed Sobolev transport \cite{sobolev} does consider measures supported on graphs, but in absence of a slicing construction it is limited to testing for unspecific differences and requires extensive compute due to much slower permutation based calibration (see Section~\ref{subsec:Simulations}). \citet{maxsliced} proposed Max Sliced Wasserstein (MSW) distance---taking maximum over projected $\W$ distances vs the average $\W$ distances of SW---in the context of generative models.

Finally, the robust sliced Wasserstein distance of \citet{LaiZhao} does make
use of the geometric properties of the underlying manifold. However,
their goal is to compute a correspondence between two manifolds by mapping them into $\mathbb{R}^{d}$ using eigenmaps and treating the mapped manifolds as measures in $\mathbb{R}^{d}$ and minimizing some version of Euclidean slicing. None of the aforementioned works consider the problem of comparing collections of distributions, provide ways for obtaining calibrated $p$-values, or prove properties of the distances that make them desirable for hypothesis testing.

\section{\label{sec:prelims}Preliminaries}
\subsection{Problem Setup}
Given a compact metric space $\X$, let $\P(\X)$ denote the set of
Borel probability measures on $\X$. Our main interest is in the case
where $\X$ is a graph or a manifold with the shortest/geodesic distance
as the metric, and thus the compactness restriction.  The 2-Wasserstein
distance can be defined on $\P(\X)$ using the metric of $\X$ as
the ground distance \cite{OTBook,wass_review1}, giving $\W^{\X}:\P(\X)\times\P(\X)\rightarrow\R_{\geq0}$. Due to the repeated use of the real line case we use the shorthand $\W$ when $\X=\R$, i.e. $\W:=\W^{\R}$. Central to our study are distributions on the space
of probability measures $\P(\P(\X))=(\P(\X),\B(\P(\X)))$, where $\B(\P(X))$
is the Borel $\sigma$-algebra generated by the topology induced by
$\W^{\X}$ \cite{wass_review2}. To avoid confusion, we will refer
to the elements of $\P(\P(\X))$ as \emph{meta-distributions}. 

Let $P,Q\in\P(\P(\X))$, and assume that we are given two collections
of probability measures $\{\mu_{i}\}_{i=1}^{N_{1}}$ and $\{\nu_{i}\}_{i=1}^{N_{2}}$
that are drawn from ${P}$ and ${Q}$: $\mu_{i}\sim{P}$ and $\nu_{i}\sim{Q}$
in an independent-and-identically-distributed (hereafter i.i.d.) manner.
Our goal is to use this sample to test the null hypothesis of whether
$P=Q$. While this is conceptually a two-sample test, note that our
data points are distributions; in practice, the distributions $\mu_{i}$
or $\nu_{i}$ are given by histograms. 

\begin{rem}
Let us compare this with the usual two-sample testing. Consider $P\in\P(\P(\X))$
constructed as follows. Let $\mu^{*}\in\P(\X)$ be a fixed probability
measure. Let $x_{1},x_{2},...x_{A}\sim\mu^{*}$ and construct the
histogram summarizing this sample: $\frac{1}{A}\sum_{a=1}^{A}\delta_{x_{a}}$.
Now, $\frac{1}{A}\sum_{a=1}^{A}\delta_{x_{a}}\in\P(\X)$ is one sample
drawn from $P$. Suppose one gets the collection $\{\mu_{i}\}_{i=1}^{N_{1}}$,
where each histogram is obtained as above: $\mu_{i}\sim P$. Similarly,
consider $Q\in\P(\P(\X))$ of the same type based on some other fixed
$\nu^{*}\in\P(\X)$, and let $\{\nu_{i}\}_{i=1}^{N_{2}}$ the corresponding
collection of histograms. Testing whether $P=Q$ in the limit boils
down to $\mu^{*}=\nu^{*}$. When compared to the usual two-sample
testing this may seem rather inefficient, requiring $A$ times more
samples (resp. $N_{1}A$ and $N_{2}A$ samples from $\mu^{*}$ and
$\nu^{*}$). However, in our general setup it is\emph{ not assumed} that the
histograms in the collections come from meta-distributions of the
above simple type (i.e. all $\mu_{i}$ are generated by drawing from
the same underlying distribution $\mu^{*}$). In fact, the target
use-case for our approach is when these histograms are collected by
observing different individuals who have their \emph{person-specific}
behaviors/distributions.
\end{rem}

\begin{rem}
To gain further insight into our problem setup, consider it through the lens of statistical manifold theory \citep[SMT]{smt}. Assume that we observe two collections of parametric distributions: $\mu_i=F(x, \beta_i), \beta_i \sim G(\beta)$ for $i=1,...,N_1$, and $\nu_i=F(x, \gamma_i), \gamma_i \sim H(\gamma)$ for $i=1,...,N_2$. Thus, each collection is generated from a distribution of its parameters, resp. $G(\cdot)$ and $H(\cdot)$. The goal of our test is to find out whether $G$ and $H$ are the same solely by observing the collections $\{\mu_{i}\}_{i=1}^{N_{1}}$ and $\{\nu_{i}\}_{i=1}^{N_{2}}$. While SMT and the Fisher information metric provides a powerful framework for studying parametric families of probability distributions, we focus on more general spaces of probability measures and the Fisher metric may not be easily extended to such nonparametric spaces of measures.
\end{rem}

\subsection{\label{subsec:hilb_emb}Hilbert Embeddings}

Let $\mathcal{D}(\cdot,\cdot):\P(\X)\times\P(\X)\rightarrow\mathbb{R}_{\geq0}$
be a distance between probability distributions. $\mathcal{D}(\cdot,\cdot)$
is called \emph{Hilbertian} (this is just a naming convention; no implication that the map is a Hilbert map) if there exist a Hilbert space $\H$ and
a map $\eta:\P(\X)\rightarrow\mathcal{\H}$ such that $\D(\mu,\nu)=\Vert\eta(\mu)-\eta(\nu)\Vert_{\mathcal{\H}}$.
For example, it is well-known that 2-Wasserstein distance on $\X=\mathbb{R}$
is Hilbertian \cite{OTBook} (also see Section \ref{subsec:Approximate-Hilbert-Embedding})
and Maximum Mean Discrepancy (MMD) on any $\X$ is Hilbertian \cite{mmd};
however, the 2-Wasserstein distance $\W^{\X}$ on general $\X$ is
not Hilbertian \cite{OTBook}.

Since the map $\eta$ takes every measure on $\X$ to a point in $\H$,
we see that a meta-distribution $P\in\P(\P(\X))$ gives a rise to a measure
on $\H$ given by pushforward operation, $\eta\#P=P\circ\eta^{-1}\in\P(\H)$.
In addition, if a finite dimensional approximation $\eta_{D}:\P(\X)\rightarrow\R^{D}$
of $\eta$ is available, then $\eta_{D}\#P$ is a measure on $\R^{D}$.
This observation is enormously useful: problems about the elements
of the rather abstract space $\P(\P(\X))$ are reduced to problems
about familiar measures on $\H$ or even $\R^{D}$. For example, the
usual notions of mean and variance can be applied to the measure $\eta\#P$
to gain insights about the meta-distribution $P$. The validity of
these insights hinges on the $\eta$-map coming from a Hilbertian
distance, as distances are central to the statistical quantities of
interest.

Testing for $\eta\#P=\eta\#Q$ can serve as a proxy for our original
testing problem of $P=Q$. As typical with two-sample tests, various
aspects of the equality $\eta\#P=\eta\#Q$ can be tested, such as
the mean or variance equality; unspecific tests of equality can be
applied as well. We will concentrate on testing certain aspects of
the equality so that one can easily drill down on the results. This
is similar to the regular two-sample testing where checking for equality
of, say, means is often preferable as it gives immediately interpretable
insights, whereas a general test that only says there are unspecified
differences between the distributions is less useful for interpretation.
To obtain succint and interpretable tests we concentrate on the mean
of the resulting pushforward measure in $\H$. 
\begin{defn}
For a meta-distribution $P\in\P(\P(\X))$, define its \emph{Hilbert
centroid} with respect to the Hilbertian distance $\D$ as $\M_{\eta\#P}:=\E_{\mu\sim P}[\eta(\mu)]\in\H,$
assuming it exists.
\end{defn}
Our testing procedure is based on checking the equality $\M_{\eta\#P}=\M_{\eta\#Q}$,
or more explicitly: $\E_{\mu\sim{P}}[\eta(\mu)]=\E_{\nu\sim{Q}}[\eta(\nu)]$.
Intuitively, each ``dimension'' of the map $\eta$ probes some aspect
of the two involved meta-distributions and makes sure that they are
in agreement in expectation. One of our testing approaches will use
the statistic 
\begin{equation}
\mathbb{T}({P},{Q}):=\Vert\M_{\eta\#P}-\M_{\eta\#Q}\Vert_{\H}^{2}.
\label{eq:defn}
\end{equation}
to capture the deviations from equality; this quantity can be written
directly in terms of pairwise distances. 
\begin{prop}
\label{prop-defn}For $P,Q\in\P(\P(\X))$, the following holds:
\begin{align*}
\mathbb{T}({P},{Q}) &=
\E_{\mu\sim{P},\nu\sim{Q}}[\D^{2}(\mu,\nu)] - \\
& \frac{1}{2}\E_{\mu,\mu'\sim{P}}[\D^{2}(\mu,\mu')]-\frac{1}{2}\E_{\nu,\nu'\sim{Q}}[\D^{2}(\nu,\nu')].
\end{align*}
\end{prop}
Next we give an example of what Hilbert centroid equality implies in
an important special case.
\begin{example}
\label{ex:1dim}
Let $\X=[0,T]\subset\R$ with $\D$ being the 2-Wasserstein distance
$\W$. Given a probability measure $\mu\in\mathrm{\P}([0,T])$, let
$F_{\mu}$ be its cumulative distribution function: $F_{\mu}(x)=\mu([0,x])=\int_{0}^{x}d\mu$.
The generalized inverse of cumulative distribution function (CDF)
is defined by $F_{\mu}^{-1}(s):=\inf\{x\in[0,T]:F_{\mu}(x)>s\}$.
The squared 2-Wasserstein distance has a rather simple expression
in terms of the inverse CDF \cite{OTBook}:
\begin{equation}
\W^{2}(\mu,\nu)=\int_{0}^{1}(F_{\mu}^{-1}(s)-F_{\nu}^{-1}(s))^{2}ds.\label{eq:w2-via-quantiles}
\end{equation}
This formula immediately establishes the Hilbertianity of $\W$ through
the map $\eta:\P([0,T])\rightarrow L_{2}([0,T])$ defined by $\eta(\mu)=F_{\mu}^{-1}$.
Note that $\eta$ is invertible for increasing normalized functions
in the embedding space. Using this insight, we see that the corresponding
``average measure'' of $P\in\P(\P(\X))$ can be introduced via ${P}_{\mathrm{av}}=\eta^{-1}(\E_{\mu\sim{P}}[\eta(\mu)])$.
It is easy to prove that ${P}_{\mathrm{av}}$ satisfies the following:
${P}_{\mathrm{av}}=\arg\min_{\rho\in\P(\X)}\E_{\mu\sim{P}}[\W(\mu,\rho)^{2}]$,
which is the definition of the Fr\'echet mean, see for example \cite{OTBook}.
\textit{In this setting, $\M_{\eta\#P}=\M_{\eta\#Q}$ boils down to having
the same Fr\'echet means, ${P}_{\mathrm{av}}={Q}_{\mathrm{av}}$.}\qed
\end{example}

We will later see that the Hilbert embedding corresponding to the
intrinsic sliced 2-Wasserstein distance is assembled of embeddings
like in Example 1 applied after pushforwards (see Figure \ref{fig:pushforward}
for an intuition). This means that the resulting equality $\M_{\eta\#P}=\M_{\eta\#Q}$
becomes more stringent, making it a better proxy for detecting the
deviations from $P=Q$ without losing the interpretability aspect.

\section{\label{sec:Intrinsic-Sliced-2-Wasserstein}Intrinsic Sliced 2-Wasserstein
Distance }

We introduce a Hilbertian version of $\W$ on manifolds and graphs
via a construction we call \emph{intrinsic slicing} due to its use
of the domain's intrinsic geometric properties. To focus our discussion
we concentrate on the manifold case, as the graph case is simpler
and is obtained by replacing the Laplace-Beltrami operator by the
graph Laplacian. 

Let $\lambda_{\ell},\phi_{\ell};\ell=0,1,....$ be the eigenvalues
and eigenfunctions of the Laplace-Beltrami operator on $\X$ with
Neumann boundary conditions. The eigenfunctions are sorted by increasing
eignevalue and assumed to be orthonormal with respect to some fixed
(e.g. uniform) measure on $\X$; also $\phi_{0}=const$ and $\lambda_{0}=0$.
One can define the spectral kernel $k(x,y)=\sum_{\ell}\alpha(\lambda_{\ell})\phi_{\ell}(x)\phi_{\ell}(y)$
and the corresponding spectral distance on the manifold $d(x,y)=k(x,x)+k(y,y)-2k(x,y)=\sum\alpha(\lambda_{\ell})(\phi_{\ell}(x)-\phi_{\ell}(y))^{2}$,
where $\alpha:\mathbb{R}_{\geq0}\rightarrow\mathbb{R}_{\geq0}$ is
a function that controls contribution from each spectral band. By
setting $\alpha(\lambda)=e^{-t\lambda}$ for some $t>0$, we get the
heat/diffusion kernel and the corresponding diffusion distance \cite{diffusion_map}.
Another important case is $\alpha(\lambda)=1/\lambda^{2}$ if $\lambda>0$
and $\alpha(0)=0$, which gives the biharmonic kernel and distance
\cite{Biharmonic}. In both of these constructions $\alpha(\cdot)$
is a decreasing function, allowing the smoother low-frequency (i.e.
smaller $\lambda_{\ell}$) eigenfunctions to contribute more.

\subsection{Definition and properties}
\label{subsec:defn}

A real-valued function $\phi:\X\rightarrow\mathbb{R}$ can be used
to map the manifold $\X$ onto the real line. Any probability measure
$\mu\in\P(\X)$ can likewise be projected onto the real line using
the pushforward of $\phi$, which we denote by $\phi\sharp\mu=\mu\circ\phi^{-1}\in\P(\R)$.
While the pushforward notions used here and in previous sections are
conceptually the same, for clarity we use $\sharp$ for measures and
$\#$ for meta-distributions. We define intrinsic slicing as follows.
\begin{defn}
\label{def: isd}Given a function $\alpha:\mathbb{R}_{\geq0}\rightarrow\mathbb{R}_{\geq0}$
and a probability distance $\D(\cdot,\cdot)$ on $\P(\R)$, we define
the intrinsic sliced distance $IS\D(\cdot,\cdot)$ on $\P(\X)$ by
\[
IS\D^{2}(\mu,\nu)=\sum_{\ell}\alpha(\lambda_{\ell})\D^{2}(\phi_{\ell}\sharp\mu,\phi_{\ell}\sharp\nu).
\]
The choice of the Laplacian eigenfunctions in the definition can be
justified by a number of their properties. Eigenfunctions are intrinsic
quantities of a manifold and are ordered by smoothness. Thus, they
allow capturing the intrinsic connectivity of the underlying domain.
Furthermore, due to the orthogonality of eigenfunctions, their pushforwards
can capture complementary aspects of the distribution. 
\end{defn}
While the definition is general, our focus in this paper is on the
case when $\D=\W$; we remind that we always use $\W$ to denote the
2-Wasserstein distance on $\P(\R)$. We call the resulting distance
\emph{Intrinsic Sliced 2-Wasserstein Distance}, and denote it by $IS\W$.
First, we discuss the convergence of the infinite sum in Definition
\ref{def: isd}. 
\begin{prop}
\label{Prop-well-defined}If $\X$ is a smooth compact $n$-dimensional
manifold and $\sum_{\ell}\lambda_{\ell}^{(n-1)/2}\alpha(\lambda_{\ell})<\infty$,
then $IS\W$ is well-defined. 
\end{prop}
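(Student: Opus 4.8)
The plan is to establish that the infinite series defining $IS\W^2$ in \eqref{eq:isd-definition} converges absolutely for every pair $\mu,\nu\in\P(\X)$, by bounding each summand and comparing with the hypothesized convergent series $\sum_\ell\lambda_\ell^{(n-1)/2}\alpha(\lambda_\ell)$. First I would discard the $\ell=0$ term: since $\phi_0$ is constant, both $\phi_0\sharp\mu$ and $\phi_0\sharp\nu$ equal the same Dirac mass, so $\W^2(\phi_0\sharp\mu,\phi_0\sharp\nu)=0$. Every remaining summand is nonnegative, so it suffices to bound the tail $\ell\ge 1$.

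The elementary ingredient is that $\W^2$ between two one-dimensional pushforwards is controlled by the oscillation of the slicing function. For any $\ell$, both $\phi_\ell\sharp\mu$ and $\phi_\ell\sharp\nu$ are supported in the interval $[\min_\X\phi_\ell,\max_\X\phi_\ell]$, of length $\mathrm{osc}(\phi_\ell):=\max_\X\phi_\ell-\min_\X\phi_\ell\le 2\|\phi_\ell\|_\infty$. Since by \eqref{eq:w2-via-quantiles} both inverse CDFs take values in that interval, their difference is pointwise at most $\mathrm{osc}(\phi_\ell)$, so
\[
\W^2(\phi_\ell\sharp\mu,\phi_\ell\sharp\nu)\ \le\ \mathrm{osc}(\phi_\ell)^2\ \le\ 4\|\phi_\ell\|_\infty^2,
\]
uniformly over all $\mu,\nu$.

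Next I would invoke the classical sup-norm estimate for Laplace--Beltrami eigenfunctions on a smooth compact $n$-manifold (H\"ormander, Levitan, Avakumovi\'c; with the boundary/Neumann version needed here established by Grieser and others): for $L^2$-normalized eigenfunctions there is a constant $C=C(\X)$ with $\|\phi_\ell\|_\infty\le C(1+\lambda_\ell)^{(n-1)/4}$ for all $\ell$, hence $\|\phi_\ell\|_\infty^2\le C^2(1+\lambda_\ell)^{(n-1)/2}$. Combining with the previous display,
\[
IS\W^2(\mu,\nu)\ \le\ 4C^2\sum_{\ell\ge 1}(1+\lambda_\ell)^{(n-1)/2}\,\alpha(\lambda_\ell).
\]
There are only finitely many $\ell$ with $\lambda_\ell\le 1$, each contributing a finite amount; for $\lambda_\ell>1$ one has $(1+\lambda_\ell)^{(n-1)/2}\le 2^{(n-1)/2}\lambda_\ell^{(n-1)/2}$, so the tail is dominated by $2^{(n-1)/2}\sum_\ell\lambda_\ell^{(n-1)/2}\alpha(\lambda_\ell)<\infty$. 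Thus the defining series converges and $IS\W$ is finite---indeed uniformly bounded---on $\P(\X)\times\P(\X)$. For completeness I would also record that $IS\W$ is symmetric and, being a weighted $\ell_2$-combination of the metrics $\W(\cdot,\cdot)$ applied to pushforwards, satisfies the triangle inequality.

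The only nontrivial step is the eigenfunction sup-norm bound: it is precisely what fixes the exponent $(n-1)/2$ appearing in the hypothesis, and obtaining a constant $C$ uniform in $\ell$ on a manifold with boundary under Neumann conditions is the single point requiring care (one uses the boundary versions of the H\"ormander-type estimates rather than the closed-manifold statement). Everything else is a routine comparison of nonnegative series.
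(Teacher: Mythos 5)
Your argument is correct and follows essentially the same route as the paper's proof: bound each summand by $4\Vert\phi_{\ell}\Vert_{\infty}^{2}\alpha(\lambda_{\ell})$ (the paper phrases this via the maximum transport distance rather than the quantile formula, but it is the same estimate), invoke the H\"ormander-type sup-norm bound $\Vert\phi_{\ell}\Vert_{\infty}\lesssim\lambda_{\ell}^{(n-1)/4}$ for $L^{2}$-normalized eigenfunctions, and conclude by comparison with the hypothesized convergent series. Your extra care with the $\ell=0$ term, the $(1+\lambda_{\ell})$ normalization for small eigenvalues, and the remark about the Neumann boundary version of the sup-norm estimate are refinements the paper glosses over, but they do not change the substance of the argument.
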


Next, we prove a number of properties of $IS\D$.
\begin{prop}
\label{prop:hilbertian}If $\D$ is a Hilbertian probability distance
such that $IS\D$ is well-defined, then (i) $IS\D$ is Hilbertian,
and (ii) $IS\D$ satisfies the following metric properties: non-negativity,
symmetry, the triangle inequality, and $IS\D(\mu,\mu)=0$.
\end{prop}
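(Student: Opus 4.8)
The plan is to reduce everything to the real-line building blocks. Since $\D$ is Hilbertian on $\P(\R)$, fix a Hilbert space $\H_0$ and a map $\eta_0:\P(\R)\to\H_0$ with $\D(\rho,\sigma)=\|\eta_0(\rho)-\eta_0(\sigma)\|_{\H_0}$. For each eigenfunction $\phi_\ell$, define the map $\mu\mapsto \eta_0(\phi_\ell\sharp\mu)$, and combine these into a single map into the weighted direct-sum (Hilbert) space
\begin{equation}
\H=\bigoplus_\ell \sqrt{\alpha(\lambda_\ell)}\,\H_0,\qquad
\eta(\mu)=\bigl(\sqrt{\alpha(\lambda_\ell)}\,\eta_0(\phi_\ell\sharp\mu)\bigr)_\ell.\nonumber
\end{equation}
(When $\alpha(\lambda_\ell)=0$ the corresponding summand is simply dropped.) Then for any $\mu,\nu$,
\begin{equation}
\|\eta(\mu)-\eta(\nu)\|_\H^2=\sum_\ell \alpha(\lambda_\ell)\|\eta_0(\phi_\ell\sharp\mu)-\eta_0(\phi_\ell\sharp\nu)\|_{\H_0}^2=\sum_\ell\alpha(\lambda_\ell)\D^2(\phi_\ell\sharp\mu,\phi_\ell\sharp\nu)=IS\D^2(\mu,\nu),\nonumber
\end{equation}
which is exactly the Hilbertian property (i), provided the image actually lands in $\H$ — i.e. the displayed sum is finite for each $\mu$. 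That finiteness is precisely the hypothesis that $IS\D$ is well-defined (and in the $IS\W$ case it is guaranteed by Proposition \ref{Prop-well-defined}), so up to centering $\eta_0$ at a fixed reference measure there is nothing more to check here.

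For part (ii), all four metric properties follow formally from the Hilbert embedding established in (i), so I would simply invoke (i). Non-negativity and $IS\D(\mu,\mu)=0$ are immediate from the norm expression ($IS\D^2(\mu,\mu)=\|\eta(\mu)-\eta(\mu)\|_\H^2=0$). Symmetry is clear since $\|\eta(\mu)-\eta(\nu)\|_\H=\|\eta(\nu)-\eta(\mu)\|_\H$. The triangle inequality is inherited from the norm: $IS\D(\mu,\nu)=\|\eta(\mu)-\eta(\nu)\|_\H\le\|\eta(\mu)-\eta(\rho)\|_\H+\|\eta(\rho)-\eta(\nu)\|_\H=IS\D(\mu,\rho)+IS\D(\rho,\nu)$. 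Alternatively, one can prove (ii) termwise directly from \eqref{eq:isd-definition}: each $\D(\phi_\ell\sharp\cdot,\phi_\ell\sharp\cdot)$ is a pseudometric on $\P(\X)$ (pullback of a metric under the pushforward map), and a weighted $\ell^2$-combination of pseudometrics is again a pseudometric — this is the same Minkowski-inequality argument that underlies the direct-sum construction, so the two routes are really the same computation.

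The main subtlety — really the only place requiring care — is the well-definedness/convergence issue, namely ensuring $\eta(\mu)\in\H$ rather than merely being a formal series. This is where the hypothesis "$IS\D$ is well-defined" does the work, and it is worth noting that one should anchor $\eta_0$ so that $\eta_0(\phi_\ell\sharp\mu)$ is measured relative to, say, $\eta_0(\phi_\ell\sharp\mu_0)$ for a fixed $\mu_0$, so that the relevant quantity $\sum_\ell\alpha(\lambda_\ell)\|\eta_0(\phi_\ell\sharp\mu)-\eta_0(\phi_\ell\sharp\mu_0)\|_{\H_0}^2=IS\D^2(\mu,\mu_0)$ is finite by assumption. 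One should also remark that $IS\D$ need not be a genuine metric (only a pseudometric): distinct $\mu,\nu$ whose pushforwards $\phi_\ell\sharp\mu$ and $\phi_\ell\sharp\nu$ agree for all $\ell$ would have $IS\D(\mu,\nu)=0$; the statement only claims the four listed properties, so this is consistent, but it is the reason identity of indiscernibles is deliberately omitted from the list. Beyond that, everything is a routine direct-sum-of-Hilbert-spaces argument.
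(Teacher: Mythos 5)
Your proof is correct and follows essentially the same route as the paper's: build the weighted direct sum $\H=\oplus_\ell\H^0$ with $\ell$-th component $\sqrt{\alpha(\lambda_\ell)}\,\eta^0(\phi_\ell\sharp\mu)$, read off the Hilbertian identity, and deduce the four metric properties formally from the norm. Your additional remarks---anchoring $\eta^0$ at a reference measure so that $\eta(\mu)$ genuinely lies in $\H$, and noting that only the pseudometric properties (not identity of indiscernibles, which is deferred to Theorem \ref{thm:metric}) are claimed---are correct refinements of details the paper's proof leaves implicit.
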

\begin{proof}
By Hilbertian property of $\D$, there exists a Hilbert space $\H^{0}$
and a map $\eta^{0}:\P(\R)\rightarrow\H^{0}$ such that $\D(\rho_{1},\rho_{2})=\Vert\eta^{0}(\rho_{1})-\eta^{0}(\rho_{2})\Vert_{\mathcal{\H^{0}}}$
for all $\rho_{1},\rho_{2}\in\P(\R)$. Plugging this into Definition
\ref{def: isd} we have $IS\D(\mu,\nu)=\Vert\eta(\mu)-\eta(\nu)\Vert_{\mathcal{\H}}$,
where $\H=\oplus_{\ell}\H^{0}$ and the $\ell$-th component of $\eta(\mu)$
is $\sqrt{\alpha(\lambda_{\ell})}\eta_{0}(\phi_{\ell}\sharp\mu)\in\H$.
The second part of Proposition \ref{prop:hilbertian} directly follows
from the Hilbert property.
\end{proof}
Since $\W$ is Hilbertian on $\P(\R)$, the application of Proposition
\ref{prop:hilbertian} yields that $IS\W$ is also Hilberitan, making
it possible to use $IS\W$ for our hypothesis tests in Section \ref{sec:Hypothesis-testing}.


For a simple choice of distance $\D$ on $\P(\R)$, namely the absolute
mean difference, the corresponding intrinsic sliced distance is the
well-known MMD \cite{mmd}. 
\begin{prop}
\label{prop:MMD-equiv}Let $\D(\rho_{1},\rho_{2})=\vert\E_{x\sim\rho_{1}}[x]-\E_{y\sim\rho_{2}}[y]\vert$
for $\rho_{1},\rho_{2}\in\P(\R)$, then the corresponding $IS\D$
 is equivalent to the MMD with the spectral kernel $k(\cdot,\cdot)$.
\end{prop}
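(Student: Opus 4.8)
The plan is to compute both $IS\D^{2}$ and the squared MMD explicitly in terms of the spectral coefficients of the measures and observe that the two expressions coincide. For $\mu\in\P(\X)$ write $\hat{\mu}_{\ell}=\int_{\X}\phi_{\ell}\,d\mu=\E_{x\sim\mu}[\phi_{\ell}(x)]$ for its $\ell$-th spectral coefficient. The elementary change-of-variables identity for pushforwards gives $\E_{t\sim\phi_{\ell}\sharp\mu}[t]=\E_{x\sim\mu}[\phi_{\ell}(x)]=\hat{\mu}_{\ell}$, so with $\D$ the absolute mean difference we obtain $\D^{2}(\phi_{\ell}\sharp\mu,\phi_{\ell}\sharp\nu)=(\hat{\mu}_{\ell}-\hat{\nu}_{\ell})^{2}$. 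Substituting into Definition \ref{def: isd},
\begin{equation}
IS\D^{2}(\mu,\nu)=\sum_{\ell}\alpha(\lambda_{\ell})(\hat{\mu}_{\ell}-\hat{\nu}_{\ell})^{2}.\label{eq:isd-mmd-1}
\end{equation}

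Next I would expand the squared MMD associated with the spectral kernel $k(x,y)=\sum_{\ell}\alpha(\lambda_{\ell})\phi_{\ell}(x)\phi_{\ell}(y)$, using $\mathrm{MMD}^{2}(\mu,\nu)=\E_{x,x'\sim\mu}[k(x,x')]-2\,\E_{x\sim\mu,y\sim\nu}[k(x,y)]+\E_{y,y'\sim\nu}[k(y,y')]$. Interchanging the sum over $\ell$ with the expectations and using that the two draws in each expectation are independent, every term factorizes through the spectral coefficients: $\E_{x,x'\sim\mu}[k(x,x')]=\sum_{\ell}\alpha(\lambda_{\ell})\hat{\mu}_{\ell}^{2}$, the cross term is $\sum_{\ell}\alpha(\lambda_{\ell})\hat{\mu}_{\ell}\hat{\nu}_{\ell}$, and the $\nu$-term is $\sum_{\ell}\alpha(\lambda_{\ell})\hat{\nu}_{\ell}^{2}$. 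Collecting the three pieces yields $\mathrm{MMD}^{2}(\mu,\nu)=\sum_{\ell}\alpha(\lambda_{\ell})(\hat{\mu}_{\ell}-\hat{\nu}_{\ell})^{2}$, which is exactly Eq. (\ref{eq:isd-mmd-1}); hence $IS\D=\mathrm{MMD}$, so that "equivalent" here means literally equal and both the metric and the Hilbert-embedding structure transfer.

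The only step needing care is the justification of interchanging summation and integration in the MMD expansion. I would handle it exactly as in Proposition \ref{Prop-well-defined}: on a compact manifold the eigenfunctions obey a sup-norm bound growing at most polynomially in $\lambda_{\ell}$, so the decay of $\alpha$ assumed there makes $\sum_{\ell}\alpha(\lambda_{\ell})\vert\phi_{\ell}(x)\phi_{\ell}(y)\vert$ converge uniformly on $\X\times\X$, and Fubini--Tonelli then applies to each of the three expectations. I expect this convergence bookkeeping --- not the algebra, which is immediate once it is in place --- to be the only mildly delicate point; I would also remark that the $\ell=0$ term drops out of both sides automatically since $\hat{\mu}_{0}=\hat{\nu}_{0}$ for probability measures, and that the graph case is identical with $\phi_{\ell}$ the Laplacian eigenvectors.
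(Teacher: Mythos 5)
Your proof is correct and follows essentially the same route as the paper's: both reduce $IS\D^{2}$ to $\sum_{\ell}\alpha(\lambda_{\ell})(\E_{x\sim\mu}[\phi_{\ell}(x)]-\E_{y\sim\nu}[\phi_{\ell}(y)])^{2}$ via the pushforward change of variables, expand the square into products of expectations over independent draws, and swap the sum with the expectations to recognize the three-term MMD expansion with the spectral kernel. Your explicit justification of the interchange via the H\"ormander sup-norm bound is a small addition the paper leaves implicit, but it is not a different argument.
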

When $k(x,y)$ is the heat kernel, the sliced distance in Proposition
\ref{prop:MMD-equiv} is very much like the MMD with the Gaussian
kernel, with the parameter $t$ in $\alpha(\lambda)=e^{-t\lambda}$
controlling the kernel width. Indeed, the two kernels coincide on
$\mathbb{R}^{d}$, and on general manifolds Varadhan's formula gives
asymptotic equivalence for small $t$ \cite{varadhanformula}. 

An interesting insight derived from the above result is that $IS\W$
is in a sense a ``stronger'' distance than MMD that uses the corresponding
spectral kernel. The $IS\W$ compares the quantiles of the pushforward
distributions (Eq. (\ref{eq:w2-via-quantiles})), whereas MMD compares
their expectations only. We formalize this notion in the next result,
also providing a theoretical reason for preferring $IS\W$ for hypothesis
testing. 
\begin{prop}
\label{prop:stronger-than-MMD}$MMD(\mu,\nu)\leq IS\W(\mu,\nu)$ when
the same $\alpha(\cdot)$ is used in both constructions.
\end{prop}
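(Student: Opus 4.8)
The plan is to reduce the claimed inequality to a per--slice comparison on the real line and then sum. First I would invoke Proposition \ref{prop:MMD-equiv}: for the distance $\D(\rho_1,\rho_2)=\bigl|\E_{z\sim\rho_1}[z]-\E_{z\sim\rho_2}[z]\bigr|$ on $\P(\R)$ one has $MMD=IS\D$ with the spectral kernel determined by $\alpha(\cdot)$, so by the defining formula (\ref{eq:isd-definition}),
\begin{align*}
MMD^2(\mu,\nu) &= \sum_\ell \alpha(\lambda_\ell)\,\bigl|\E_{z\sim\phi_\ell\sharp\mu}[z]-\E_{z\sim\phi_\ell\sharp\nu}[z]\bigr|^2, \\
IS\W^2(\mu,\nu) &= \sum_\ell \alpha(\lambda_\ell)\,\W^2(\phi_\ell\sharp\mu,\phi_\ell\sharp\nu).
\end{align*}
Since all the weights $\alpha(\lambda_\ell)$ are nonnegative, it suffices to establish, for every $\ell$, the pointwise bound $\bigl|\E_{z\sim\rho_1}[z]-\E_{z\sim\rho_2}[z]\bigr|\le\W(\rho_1,\rho_2)$ for $\rho_1=\phi_\ell\sharp\mu$ and $\rho_2=\phi_\ell\sharp\nu$, and then add these up against $\alpha(\lambda_\ell)$. (Both measures have compact support in $\R$ since $\X$ is compact and $\phi_\ell$ is continuous, so the means and the $\W$--distance are finite for each $\ell$.)

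For the slice--wise bound I would use the quantile representation (\ref{eq:w2-via-quantiles}). Writing $m_i=\E_{z\sim\rho_i}[z]=\int_0^1 F_{\rho_i}^{-1}(s)\,ds$, we get
\begin{align*}
|m_1-m_2| &= \Bigl|\int_0^1 \bigl(F_{\rho_1}^{-1}(s)-F_{\rho_2}^{-1}(s)\bigr)\,ds\Bigr| \le \int_0^1 \bigl|F_{\rho_1}^{-1}(s)-F_{\rho_2}^{-1}(s)\bigr|\,ds \\
&\le \Bigl(\int_0^1 \bigl(F_{\rho_1}^{-1}(s)-F_{\rho_2}^{-1}(s)\bigr)^2\,ds\Bigr)^{1/2} = \W(\rho_1,\rho_2),
\end{align*}
where the first inequality is the triangle inequality for integrals and the second is Jensen's inequality (equivalently Cauchy--Schwarz) for the uniform probability measure $ds$ on $[0,1]$. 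Squaring, multiplying by $\alpha(\lambda_\ell)\ge 0$, and summing over $\ell$ gives $MMD^2(\mu,\nu)\le IS\W^2(\mu,\nu)$; taking square roots finishes the proof.

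There is essentially no hard step here; the only point to handle with a little care is the legitimacy of the term--by--term comparison of the two series, which is guaranteed once $IS\W$ is well defined (Proposition \ref{Prop-well-defined}) — and, conversely, the chain of inequalities re-derives $MMD^2\le IS\W^2<\infty$, so convergence of the $MMD$ series is automatic. One may also remark that the bound is tight: equality in the slice--wise estimate holds exactly when $F_{\rho_1}^{-1}-F_{\rho_2}^{-1}$ is an a.e.\ constant of fixed sign, i.e.\ when the pushforwards differ by a translation, mirroring the familiar fact that $\W$ collapses to the mean difference within a location family.
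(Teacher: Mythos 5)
Your proof is correct and follows essentially the same route as the paper: both reduce the claim, via Proposition \ref{prop:MMD-equiv} and the common weights $\alpha(\lambda_\ell)\ge 0$, to the one-dimensional inequality $\vert\E_{z\sim\rho_1}[z]-\E_{z\sim\rho_2}[z]\vert\le\W(\rho_1,\rho_2)$ applied slice by slice. The only (harmless) difference is that you derive this inequality directly from the quantile representation (\ref{eq:w2-via-quantiles}) via Cauchy--Schwarz, whereas the paper obtains it by citing the centroid bound $\vert\E_{\rho_1}-\E_{\rho_2}\vert\le\mathcal{W}_1(\rho_1,\rho_2)$ together with the ordering $\mathcal{W}_1\le\W$ of Wasserstein distances.
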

We are now in a position to prove that $IS\W$ is a true metric.

\begin{thm}
\label{thm:metric}If $\alpha(\lambda)>0$ for all $\lambda>0$ ,
then $IS\W$ is a metric on $\P(\X)$.
\end{thm}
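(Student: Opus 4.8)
The plan is to verify the one remaining metric axiom, the identity of indiscernibles: if $IS\W(\mu,\nu)=0$ then $\mu=\nu$. By Proposition~\ref{prop:hilbertian}, all other axioms (non-negativity, symmetry, triangle inequality, and $IS\W(\mu,\mu)=0$) are already in hand, so this is the only thing to prove. Since $IS\W^{2}(\mu,\nu)=\sum_{\ell}\alpha(\lambda_{\ell})\W^{2}(\phi_{\ell}\sharp\mu,\phi_{\ell}\sharp\nu)$ is a sum of non-negative terms, and $\alpha(\lambda_{\ell})>0$ for all $\ell\ge 1$ by hypothesis (and the $\ell=0$ term vanishes since $\phi_{0}$ is constant), $IS\W(\mu,\nu)=0$ forces $\W(\phi_{\ell}\sharp\mu,\phi_{\ell}\sharp\nu)=0$, hence $\phi_{\ell}\sharp\mu=\phi_{\ell}\sharp\nu$ in $\P(\R)$, for every $\ell\ge 1$.

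The core of the argument is then to show that equality of all the pushforwards $\phi_{\ell}\sharp\mu=\phi_{\ell}\sharp\nu$ implies $\mu=\nu$. First I would note that equality of two measures on $\R$ is equivalent to equality of all their moments when the measures are compactly supported (which they are, since $\X$ is compact and each $\phi_{\ell}$ is continuous, so $\phi_{\ell}\sharp\mu$ is supported on a bounded interval). In particular, matching first moments gives $\int_{\X}\phi_{\ell}\,d\mu=\int_{\X}\phi_{\ell}\,d\nu$ for all $\ell\ge 1$, and this also holds trivially for $\ell=0$ since both integrals equal $1$ (total mass). Because $\{\phi_{\ell}\}_{\ell\ge 0}$ is an orthonormal basis of $L^{2}(\X)$ with respect to the fixed reference measure, the continuous functions in its linear span are dense in $C(\X)$ (or one can invoke Stone--Weierstrass-type completeness of the eigenfunction system), so matching $\int\phi_{\ell}\,d\mu=\int\phi_{\ell}\,d\nu$ for all $\ell$ already forces $\int f\,d\mu=\int f\,d\nu$ for all continuous $f$, whence $\mu=\nu$ by the Riesz representation theorem. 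Thus I would not even need the higher moments --- matching the first moments of every pushforward suffices.

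The step I expect to be the main obstacle is the density/completeness claim for the eigenfunctions: one must argue carefully that finite linear combinations of $\{\phi_{\ell}\}$ are dense in $C(\X)$ in the uniform norm, not merely in $L^{2}(\X)$. For a smooth compact manifold this follows from elliptic regularity and Sobolev embedding (sufficiently high Sobolev norms of partial sums control the $C^{0}$ norm, and the eigenfunction expansion of a smooth function converges in every Sobolev norm), but it does require the manifold to be at least reasonably regular; in the graph case the issue disappears since $\P(\X)$ is finite-dimensional and $\{\phi_{\ell}\}$ spans all of $\R^{|V|}$ outright. An alternative route that sidesteps uniform density is to use the full moment information: for each fixed $\ell$, equality $\phi_{\ell}\sharp\mu=\phi_{\ell}\sharp\nu$ gives $\int g(\phi_{\ell})\,d\mu=\int g(\phi_{\ell})\,d\nu$ for every bounded continuous $g$, and then one checks that the algebra generated by $\{\phi_{\ell}\}_{\ell\ge 0}$ separates points of $\X$ (the eigenfunctions do separate points, since otherwise two distinct points would be indistinguishable by the heat kernel), so Stone--Weierstrass applies directly to this algebra and yields $\mu=\nu$. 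I would present whichever of these is cleanest, likely the Stone--Weierstrass version, and would flag explicitly the mild regularity assumption on $\X$ needed for the eigenfunctions to form a separating family.
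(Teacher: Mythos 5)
Your main argument is correct, and it reaches the conclusion by a different route than the paper. The paper's proof is shorter and less self-contained: it invokes Proposition~\ref{prop:stronger-than-MMD} to get $MMD(\mu,\nu)=0$ from $IS\W(\mu,\nu)=0$, then cites the fact that $\alpha(\lambda)>0$ for $\lambda>0$ makes the spectral kernel \emph{universal}, hence \emph{characteristic}, so $MMD(\mu,\nu)=0$ forces $\mu=\nu$. Your argument and the paper's actually coincide up to the midpoint: your observation that the first moments of all pushforwards agree, i.e.\ $\int\phi_{\ell}\,d\mu=\int\phi_{\ell}\,d\nu$ for all $\ell$, is exactly the statement $MMD(\mu,\nu)=0$ by Proposition~\ref{prop:MMD-equiv}. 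Where you diverge is the last step: instead of citing universality, you prove directly that the uniform closure of the linear span of the eigenfunctions is all of $C(\X)$ (via Sobolev embedding and uniform convergence of eigenfunction expansions of smooth functions) and then apply Riesz representation. This is essentially an unpacking of the universality claim the paper outsources to the literature, and the regularity issue you flag is precisely what that citation covers; your version buys self-containedness at the cost of length, and it correctly trivializes in the graph case. One caveat: your ``alternative route'' via Stone--Weierstrass applied to the \emph{algebra} generated by $\{\phi_{\ell}\}$ does not work as sketched. The hypothesis $\phi_{\ell}\sharp\mu=\phi_{\ell}\sharp\nu$ only gives $\int g(\phi_{\ell})\,d\mu=\int g(\phi_{\ell})\,d\nu$ for functions of a \emph{single} eigenfunction at a time; it says nothing about mixed products such as $\int\phi_{1}\phi_{2}\,d\mu$, which lie in the generated algebra. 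So although that algebra is dense by Stone--Weierstrass, you cannot conclude that $\mu$ and $\nu$ agree on it. Stick with your primary argument (first moments plus uniform density of the \emph{linear} span), which needs no algebra structure and is sound.
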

We remind that 2-Wasserstein distance can be defined directly on $\P(\X)$
using the geodesic distance as the ground metric; we denote this distance
as $\W^{\X}$. Lipschitz properties of the eigenfunctions imply the
following:
\begin{prop}
\label{prop:lipschitz}There exists a constant $c$ depending only
on $\X\subseteq\R^{n}$ such that for all $\mu,\nu\in\P(\X)$ the
inequality $IS\W(\mu,\nu)\leq c\W^{\X}(\mu,\nu)\sqrt{\sum_{\ell}\lambda_{\ell}^{(n+3)/2}\alpha(\lambda_{\ell})}$
holds.
\end{prop}
Our final result looks at the quantity $\mathbb{T}$ defined using
$IS\W$ by Eq. (\ref{eq:defn}). We will be using $\mathbb{T}$ computed
on finite collections of measures as a test statistic in the next
section. We show that it enjoys robustness with respect to small perturbations
of the measures in the collection.
\begin{prop}
\label{prop:robust}Let $\{\mu_{i}\}_{i=1}^{N}$ and $\{\nu_{i}\}_{i=1}^{N}$
be two collections of probability measures on $\P(\X)$, such that
$\forall i,\W^{\X}(\mu_{i},\nu_{i})\leq\epsilon$, then $\mathbb{T}(\{\mu_{i}\}_{i=1}^{N},\{\nu_{i}\}_{i=1}^{N})\leq C^{2}\epsilon^{2}$.
Here $C=c\sqrt{\sum_{\ell}\lambda_{\ell}^{(n+3)/2}\alpha(\lambda_{\ell})}$
from previous proposition and is assumed to be finite.
\end{prop}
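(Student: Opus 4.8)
\section*{Proof plan for Proposition \ref{prop:robust}}

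The plan is to reduce the statement to a short chain of the results already established. First I would observe that when $\mathbb{T}$ is evaluated on finite collections it is computed with respect to the empirical meta-distributions $P_N=\frac1N\sum_{i=1}^N\delta_{\mu_i}$ and $Q_N=\frac1N\sum_{i=1}^N\delta_{\nu_i}$, so that by Eq. (\ref{eq:defn0}) together with Proposition \ref{prop-defn} we have $\mathbb{T}(\{\mu_i\}_{i=1}^N,\{\nu_i\}_{i=1}^N)=\Vert\M_{\eta\#P_N}-\M_{\eta\#Q_N}\Vert_{\H}^2$, where $\eta:\P(\X)\to\H$ is the Hilbert embedding of $IS\W$ furnished by Proposition \ref{prop:hilbertian}. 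Each $\eta(\mu_i),\eta(\nu_i)$ lies in $\H$: since $\X$ is a compact manifold the eigenfunctions are bounded and the same sup-norm estimate that underlies the well-definedness of $IS\W$ (Proposition \ref{Prop-well-defined}) shows $\Vert\eta(\mu)\Vert_{\H}^2\le\sum_\ell\alpha(\lambda_\ell)\Vert\phi_\ell\Vert_\infty^2<\infty$. Consequently the Hilbert centroids are the plain averages $\M_{\eta\#P_N}=\frac1N\sum_i\eta(\mu_i)$ and $\M_{\eta\#Q_N}=\frac1N\sum_i\eta(\nu_i)$.

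Second I would apply the triangle inequality in $\H$ and use that $\eta$ is an isometric embedding of $IS\W$:
\[
\Vert\M_{\eta\#P_N}-\M_{\eta\#Q_N}\Vert_{\H}=\Bigl\Vert\tfrac1N\sum_{i=1}^N\bigl(\eta(\mu_i)-\eta(\nu_i)\bigr)\Bigr\Vert_{\H}\le\tfrac1N\sum_{i=1}^N\Vert\eta(\mu_i)-\eta(\nu_i)\Vert_{\H}=\tfrac1N\sum_{i=1}^N IS\W(\mu_i,\nu_i).
\]
Third I would bound each summand via Proposition \ref{prop:lipschitz}: $IS\W(\mu_i,\nu_i)\le c\,\W^{\X}(\mu_i,\nu_i)\sqrt{\sum_\ell\lambda_\ell^{(n+3)/2}\alpha(\lambda_\ell)}=C\,\W^{\X}(\mu_i,\nu_i)\le C\epsilon$, where the last step uses the hypothesis $\W^{\X}(\mu_i,\nu_i)\le\epsilon$. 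Averaging over $i$ and squaring then yields $\mathbb{T}(\{\mu_i\}_{i=1}^N,\{\nu_i\}_{i=1}^N)\le(C\epsilon)^2=C^2\epsilon^2$, as claimed.

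There is no deep obstacle here; the argument is a concatenation of Proposition \ref{prop:hilbertian} (Hilbertianity/isometry of $\eta$), the linearity of the Hilbert-centroid map on empirical meta-distributions, the triangle inequality, and the Lipschitz bound of Proposition \ref{prop:lipschitz}. The only point that deserves a sentence of care is justifying that each individual $\eta(\mu_i)$ (not merely the differences) belongs to $\H$, so that the centroids are genuinely the coordinate-wise averages; this is covered by the eigenfunction sup-norm bound on a compact manifold noted above. If one prefers to avoid even that, the alternative is to expand $\mathbb{T}$ directly through Proposition \ref{prop-defn} into the pairwise $IS\W^2$ terms $\frac1{N^2}\sum_{i,j}IS\W^2(\mu_i,\nu_j)-\frac1{2N^2}\sum_{i,j}IS\W^2(\mu_i,\mu_j)-\frac1{2N^2}\sum_{i,j}IS\W^2(\nu_i,\nu_j)$ and recognize this, by the algebraic identity proving Proposition \ref{prop-defn}, as $\bigl\Vert\frac1N\sum_i(\eta(\mu_i)-\eta(\nu_i))\bigr\Vert_{\H}^2$, after which the same triangle-inequality and Lipschitz steps apply verbatim.
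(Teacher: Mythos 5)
Your proposal is correct and follows essentially the same route as the paper's proof: identify $\mathbb{T}$ with the squared norm of the averaged differences $\frac{1}{N}\sum_i(\eta(\mu_i)-\eta(\nu_i))$ in $\H$, pass to the individual terms, and invoke the Lipschitz bound of Proposition \ref{prop:lipschitz} together with the hypothesis $\W^{\X}(\mu_i,\nu_i)\le\epsilon$. The only cosmetic difference is that you use the triangle inequality on the norm and square at the end, while the paper applies convexity of the squared norm directly; both land on the same bound $C^2\epsilon^2$.
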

This bound implies that if the distributions in a collection undergo
horizontal shifts that are small as measured by the geodesic distance
$\W^{\X}$ , then $\mathbb{T}$ is small as well.

\subsection{\label{subsec:Approximate-Hilbert-Embedding}Approximate Hilbert
Embedding}

An important aspect of $IS\W$ is that its Hilbert map $\eta:\P(\X)\rightarrow\mathcal{\H}$
can be approximated by a finite-dimensional embedding $\eta_{D}:\P(\X)\rightarrow\R^{D}$
such that $IS\W(\mu,\nu)\approx\Vert\eta_{D}(\mu)-\eta_{D}(\nu)\Vert_{\R^{D}}$.
This is useful for practical computation and for one of our hypothesis
testing approaches. 

Using the formula for $IS\W$ on $\P(\R)$ in terms of the quantile
function, Eq. (\ref{eq:w2-via-quantiles}), the Hilbert map is defined
by $\eta^{0}(\mu)=F_{\mu}^{-1}$. We have $\W(\mu,\nu)=\Vert\eta^{0}(\mu)-\eta^{0}(\nu)\Vert_{L_{2}(\R)}$,
where the norm involves integration. We can discretize the integral
using the Riemann sum for equidistant knots $s_{k}=\frac{k-1}{D'},k=1,...,D'$,
define the approximate embedding $\eta_{D'}^{0}:\P(\R)\rightarrow\R^{D'}$
as:
\begin{equation}
\eta_{D'}^{0}:\mu\rightarrow\frac{1}{\sqrt{D'}}[F_{\mu}^{-1}(s_{1}),...,F_{\mu}^{-1}(s_{D'})].\label{eq:1-dim-embedding}
\end{equation}
Now, $\W(\mu,\nu)\approx\Vert\eta_{D'}^{0}(\mu)-\eta_{D'}^{0}(\nu)\Vert_{\R^{D'}}$
with approximation quality depending on the embedding dimension $D'$. 

To approximate the Hilbert map for $IS\W$ we truncate the series
defining $IS\W$ and use a finite number of eigenfunctions for pushforward:
$\phi_{\ell},\ell=1,...,L$, where $\phi_{0}$ is dropped since it
is a constant. By inspecting the proof of Proposition \ref{prop:hilbertian}
and using Eq. (\ref{eq:1-dim-embedding}), we can define $\eta_{D}:\P(\X)\rightarrow\R^{D}$
with $D=LD'$ as the concatenation of $L$ maps:
\[
(\eta_{D})_{\ell}:\mu\rightarrow\sqrt{\frac{\alpha(\lambda_{\ell})}{D'}}[F_{\phi_{\ell}\sharp\mu}^{-1}(s_{1}),...,F_{\phi_{\ell}\sharp\mu}^{-1}(s_{D'})].
\]
Spectral decompositions of Laplace-Beltrami operators for general
manifolds \cite{diffusion_map,eigencomp} or graph Laplacians can
be computed numerically. For applications involving simple manifolds,
eigenvalues and eigenfunctions can be computed analytically (see Appendix~\ref{subsec:app_compute}).

The hyperparameters $L$ and $D'$ capture distinct aspects of the complexity of the distribution. A larger $L$ allows access to higher order eigenfunctions that possess greater spatial oscillations, linking to the finer details of the geometry and topology of the underlying domain and distribution collection. On the other hand, $D'$ captures how well the pushforward distribution is represented, which is determined by the number of quantiles. Both hyperparameters can have a significant impact on the success of the testing process.

\section{\label{sec:Hypothesis-testing}Hypothesis Testing}

Let $\{\mu_{i}\}_{i=1}^{N_{1}}$ and $\{\nu_{i}\}_{i=1}^{N_{2}}$
be two i.i.d. collections of measures drawn from $P,Q\in\P(\P(\X))$
respectively. Our goal is to use these samples to test the null hypothesis
$H_{0}:\M_{\eta\#P}=\M_{\eta\#Q}$, where $\eta$ is the Hilbert embedding
of the sliced distance $IS\W$ on $\P(\X)$. 

\subsection{\label{subsec:Resampling-Based-Test}Resampling Based Test}

We use the quantity $\mathbb{T}(\cdot,\cdot)$ from Eq. (\ref{eq:defn})
as the test statistic. Its sample version is computed by replacing
the expectations by the empirical means, and excluding the diagonal
terms to achieve unbiasedness
\begin{align}
\hat{\mathbb{T}}\equiv & \sum_{i,j:i\neq j}\frac{IS\W^{2}(\mu_{i},\mu_{j})}{2N_{1}(N_{1}-1)}+\sum_{i,j:i\neq j}\frac{IS\W^{2}(\nu_{i},\nu_{j})}{2N_{2}(N_{2}-1)} \label{eq:T-statistic-isd}\\
& -\sum_{i,j}\frac{IS\W^{2}(\mu_{i},\nu_{j})}{N_{1}N_{2}}. \nonumber
\end{align}
Note that $\E\hat{\mathbb{T}}=\mathbb{T}({P},{Q})$. In practice,
the $IS\W$ values are computed from the approximate embedding: $IS\W(\rho_{1},\rho_{2})\approx\Vert\eta_{D}(\rho_{1})-\eta_{D}(\rho_{2})\Vert_{\R^{D}}$.
We denote the resulting statistic by $\tilde{\mathbb{T}}_{L,D'}$.

The difference between $\tilde{\mathbb{T}}_{L,D'}$ and the population
version (i.e. $\mathbb{T}-\tilde{\mathbb{T}}_{L,D'}$) can be decomposed
as $(\mathbb{T}-\hat{\mathbb{T}})+(\mathbb{\hat{T}}-\hat{\mathbb{T}}_{L})+(\hat{\mathbb{T}}_{L}-\tilde{\mathbb{T}}_{L,D'})$,
where the summands inside the terms $\hat{\mathbb{T}}_{L}$ and $\tilde{\mathbb{T}}_{L,D'}$
correspond to partial sums that approximate $IS\W^{2}(\cdot,\cdot)$
by $\sum_{l=1}^{L}\alpha(\lambda_{l})\W^{2}(\phi_{l}\sharp\cdot,\phi_{l}\sharp\cdot)$,
and $\W^{2}(\phi_{l}\sharp\cdot,\phi_{l}\sharp\cdot)$ by $\|\eta_{D'}(\phi_{l}\sharp\cdot)-\eta_{D'}(\phi_{l}\sharp\cdot)\|^{2}$,
respectively. We show in Appendix~\ref{subsec:app_resamp_proofs} that a) summands in the second and
third terms in the sum can be made infinitesmally small by choosing
large enough $L$ and $D'$, respectively; b) an asymptotic result
for the first difference can be obtained by extending the tools from \citet{mmd,serflingbook}.
These results are based on several assumptions detailed in Appendix~\ref{subsec:app_resamp_proofs}. 
Combining the two results, we establish asymptotic distributions of
$\tilde{\mathbb{T}}_{L,D'}$:
\begin{thm}
\label{Thm:asy-tilde} Assume relevant conditions (see Appendix~\ref{subsec:app_resamp_proofs}) hold. Define $N=N_{1}+N_{2}$,
and suppose that as $N_{1},N_{2}\rightarrow\infty$, we have $N_{1}/N\rightarrow\rho_{1},N_{2}/N\rightarrow\rho_{2}=1-\rho_{1}$,
for some fixed $0<\rho_{1}<1$.With $L\geq L_{N},D'\geq D_{N}$ chosen
in an appropriate way (see Appendix~\ref{subsec:app_resamp_proofs}), under $H_{0}:\M_{\eta\#P}=\M_{\eta\#Q}$
we have
\[
N\tilde{\mathbb{T}}_{L,D'}\leadsto\sum_{m=1}^{\infty}\gamma_{m}(A_{m}^{2}-1),
\]
where $A_{m}\sim N(0,1)$ for $m=1,2,\ldots$, and $\gamma_{m}$ are
the eigenvalues of a certain operator that depends on $P$ and $Q$.
Further, under $H_{1}:\M_{\eta\#P}\neq\M_{\eta\#Q}$ ,$\sqrt{N}\left(\tilde{\mathbb{T}}_{L,D'}-\mathbb{T}\right)$
is asymptotically Gaussian with mean 0 and finite variance.
\end{thm}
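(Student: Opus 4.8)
The plan is to recognize $\tilde{\mathbb{T}}_{L,D'}$ and its two less-approximate cousins $\hat{\mathbb{T}}$ and $\hat{\mathbb{T}}_{L}$ as \emph{unbiased two-sample MMD-type $U$-statistics}, to read off the limit law of the exact version $\hat{\mathbb{T}}$ from classical degenerate $U$-statistic theory, and to transfer it to the computed statistic by bounding the two approximation gaps deterministically. First I would expand $IS\W^{2}(\mu,\nu)=\Vert\eta(\mu)-\eta(\nu)\Vert_{\H}^{2}$ inside Eq.~(\ref{eq:T-statistic-isd}); the squared-norm contributions cancel by the same algebra as for the MMD, leaving
\begin{align*}
\hat{\mathbb{T}}={}&\frac{1}{N_{1}(N_{1}-1)}\sum_{i\neq i'}\langle\eta(\mu_{i}),\eta(\mu_{i'})\rangle_{\H}+\frac{1}{N_{2}(N_{2}-1)}\sum_{j\neq j'}\langle\eta(\nu_{j}),\eta(\nu_{j'})\rangle_{\H}\\
&{}-\frac{2}{N_{1}N_{2}}\sum_{i,j}\langle\eta(\mu_{i}),\eta(\nu_{j})\rangle_{\H},
\end{align*}
the unbiased two-sample statistic of \cite{mmd} for the $\H$-valued feature map $\eta$; replacing $\eta$ by its $L$-term spectral truncation yields $\hat{\mathbb{T}}_{L}$, and by the finite-dimensional quantile--quadrature map $\eta_{D}:\P(\X)\to\R^{D}$ yields $\tilde{\mathbb{T}}_{L,D'}$. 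By Proposition~\ref{prop-defn} and Eq.~(\ref{eq:defn0}), $\E\hat{\mathbb{T}}=\mathbb{T}(P,Q)=\Vert\M_{\eta\#P}-\M_{\eta\#Q}\Vert_{\H}^{2}$, while condition (ii) says exactly that $\eta$ is square-integrable under $P$ and $Q$, so the covariance operators $\Sigma_{P},\Sigma_{Q}$ of $\eta(\mu)-\M_{\eta\#P}$ and $\eta(\nu)-\M_{\eta\#Q}$ on $\H$ are trace-class.

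Next I would pin down the asymptotics of the exact statistic $\hat{\mathbb{T}}$. Under $H_{0}$ we have $\M_{\eta\#P}=\M_{\eta\#Q}$, the H\'ajek projection of $\hat{\mathbb{T}}$ vanishes, and $\hat{\mathbb{T}}$ is a degenerate two-sample $U$-statistic; applying the Hilbert-space CLT to $\sqrt{N}\bigl(\tfrac1{N_{1}}\sum_{i}(\eta(\mu_{i})-\M_{\eta\#P})-\tfrac1{N_{2}}\sum_{j}(\eta(\nu_{j})-\M_{\eta\#Q})\bigr)$ gives a Gaussian limit $G\in\H$ with covariance $\Sigma=\rho_{1}^{-1}\Sigma_{P}+\rho_{2}^{-1}\Sigma_{Q}$, whence, exactly as in the MMD null analysis built on the degenerate $U$-statistic machinery of \cite{serflingbook}, $N\hat{\mathbb{T}}\leadsto\Vert G\Vert_{\H}^{2}-\operatorname{tr}\Sigma=\sum_{m}\gamma_{m}(A_{m}^{2}-1)$ with $A_{m}$ i.i.d.\ $\mathcal{N}(0,1)$ and $\gamma_{m}$ the eigenvalues of $\Sigma$ (up to the standard normalizing constant); the series converges because $\Sigma$ is trace-class by (ii). Under $H_{1}$ the projection no longer vanishes, so $\sqrt{N}(\hat{\mathbb{T}}-\mathbb{T})=2\sqrt{N}\bigl\langle\tfrac1{N_{1}}\sum_{i}(\eta(\mu_{i})-\M_{\eta\#P})-\tfrac1{N_{2}}\sum_{j}(\eta(\nu_{j})-\M_{\eta\#Q}),\,\M_{\eta\#P}-\M_{\eta\#Q}\bigr\rangle_{\H}+o_{P}(1)$, which is asymptotically $\mathcal{N}(0,\sigma^{2})$ with $\sigma^{2}=4\langle\Sigma(\M_{\eta\#P}-\M_{\eta\#Q}),\M_{\eta\#P}-\M_{\eta\#Q}\rangle_{\H}<\infty$ by (ii) and Cauchy--Schwarz.

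Then I would show the two approximation gaps are deterministically negligible. Since $\phi_{\ell}\sharp\mu$ is supported on the range of $\phi_{\ell}$, an interval of length at most $2\Vert\phi_{\ell}\Vert_{\infty}$, H\"ormander's bound $\Vert\phi_{\ell}\Vert_{\infty}\leq c\lambda_{\ell}^{(n-1)/4}$ (already invoked in the text) gives $\W^{2}(\phi_{\ell}\sharp\mu,\phi_{\ell}\sharp\nu)\leq 4c^{2}\lambda_{\ell}^{(n-1)/2}$ for \emph{every} $\mu,\nu\in\P(\X)$; hence, uniformly over the sample, the spectral-tail error satisfies $0\leq IS\W^{2}(\mu_{i},\nu_{j})-\sum_{\ell\leq L}\alpha(\lambda_{\ell})\W^{2}(\phi_{\ell}\sharp\mu_{i},\phi_{\ell}\sharp\nu_{j})\leq 4c^{2}\sum_{\ell>L}\alpha(\lambda_{\ell})\lambda_{\ell}^{(n-1)/2}=:\epsilon_{L}\to0$ by (i), and each integrand $s\mapsto(F_{\phi_{\ell}\sharp\mu_{i}}^{-1}(s)-F_{\phi_{\ell}\sharp\nu_{j}}^{-1}(s))^{2}$, being a squared difference of monotone quantile functions, has total variation $O(\lambda_{\ell}^{(n-1)/2})$, so the equidistant Riemann sum underlying Eq.~(\ref{eq:1-dim-embedding}) introduces an error $O(\lambda_{\ell}^{(n-1)/2}/D')$ and thus $|\hat{\mathbb{T}}_{L}-\tilde{\mathbb{T}}_{L,D'}|=O(1/D')$ with an $L$-independent constant (again by (i)). Because the coefficients in Eq.~(\ref{eq:T-statistic-isd}) sum in absolute value to $2$, both $|\hat{\mathbb{T}}-\hat{\mathbb{T}}_{L}|\leq 2\epsilon_{L}$ and $|\hat{\mathbb{T}}_{L}-\tilde{\mathbb{T}}_{L,D'}|=O(1/D')$ hold deterministically, condition (iii) supplying the regularity that makes the quantile maps $\eta_{0}$ well-defined and the limit laws continuous. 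Writing $\tilde{\mathbb{T}}_{L,D'}=\hat{\mathbb{T}}-(\hat{\mathbb{T}}-\hat{\mathbb{T}}_{L})-(\hat{\mathbb{T}}_{L}-\tilde{\mathbb{T}}_{L,D'})$ and taking $L=L_{N}\to\infty$ with $N\epsilon_{L_{N}}\to0$ together with $D'=D_{N}\to\infty$ with $N/D_{N}\to0$ makes the last two terms $o(1/N)$, so Slutsky's theorem transports the $H_{0}$ limit of $N\hat{\mathbb{T}}$ to $N\tilde{\mathbb{T}}_{L,D'}$; under the weaker rates $\sqrt{N}\epsilon_{L_{N}}\to0$, $\sqrt{N}/D_{N}\to0$ it likewise transports the $H_{1}$ limit of $\sqrt{N}(\hat{\mathbb{T}}-\mathbb{T})$ to $\sqrt{N}(\tilde{\mathbb{T}}_{L,D'}-\mathbb{T})$.

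The hard part will be the degenerate two-sample $U$-statistic limit in the \emph{infinite-dimensional} space $\H$: one must verify that the limiting quadratic form genuinely admits the spectral representation $\sum_{m}\gamma_{m}(A_{m}^{2}-1)$ with $\sum_{m}\gamma_{m}<\infty$---which hinges entirely on the trace-class property of $\Sigma$ secured by condition (ii)---and one must handle the two-sample regime $N_{1}/N\to\rho_{1}$ carefully so that ``the certain operator'' is correctly identified as the pooled $\Sigma=\rho_{1}^{-1}\Sigma_{P}+\rho_{2}^{-1}\Sigma_{Q}$ rather than the single-sample operator of the usual MMD null. By contrast the truncation analysis, being sample-free and deterministic, is comparatively routine, and pinning down explicit admissible rates $L_{N},D_{N}$ in the Appendix reduces to the facts that $\epsilon_{L}\to0$ and the quadrature error is $O(1/D')$ with $L$-independent constants, so essentially any sufficiently fast pair of rates works.
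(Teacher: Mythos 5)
Your proposal is correct and follows the same overall architecture as the paper's proof: first derive the null and alternative limit laws for the exact unbiased statistic $\hat{\mathbb{T}}$ (Proposition \ref{Prop-T-hat-limit-1}), then control the spectral-truncation gap $\hat{\mathbb{T}}-\hat{\mathbb{T}}_{L}$ and the quadrature gap $\hat{\mathbb{T}}_{L}-\tilde{\mathbb{T}}_{L,D'}$ deterministically with explicit admissible rates (Proposition \ref{Prop-LDbound-1}), and conclude by Slutsky. The execution differs in three places, each equivalent or mildly stronger. For the null limit the paper expands the kernel $\langle x,x'\rangle_{\H}$ in a Mercer series with respect to the mixture measure $R=\rho_{2}(\tilde{\eta}\#P)+\rho_{1}(\tilde{\eta}\#Q)$ and applies the degenerate V-statistic CLT coordinate-wise to the $a_{m}$'s, whereas you apply the Hilbert-space CLT to $\sqrt{N}(\bar{X}-\bar{Y})$ and use continuous mapping for $\Vert\cdot\Vert_{\H}^{2}$ together with the V-to-U diagonal correction; the two identifications of $\gamma_{m}$ agree, because the integral operator with kernel $\langle x,x'\rangle_{\H}$ on $L_{2}(\H,R)$ has the same nonzero spectrum as the covariance operator of $R$, which after the paper's rescaling by $(\rho_{1}\rho_{2})^{-1}$ is exactly your pooled $\Sigma=\rho_{1}^{-1}\Sigma_{P}+\rho_{2}^{-1}\Sigma_{Q}$. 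Under $H_{1}$ you obtain the variance from the H\'{a}jek projection as the quadratic form $4\langle\Sigma\Delta,\Delta\rangle_{\H}$ with $\Delta=\M_{\eta\#P}-\M_{\eta\#Q}$, while the paper applies CLTs to the three U-statistic blocks separately and adds the variances; your form is the standard projection variance (the paper's blockwise sum omits the cross-covariances between blocks built on the same sample), but since the theorem only asserts Gaussianity with finite variance, either derivation suffices. Finally, for the spectral tail you use the uniform bound $\W^{2}(\phi_{\ell}\sharp\mu,\phi_{\ell}\sharp\nu)\leq4\Vert\phi_{\ell}\Vert_{\infty}^{2}\leq4c^{2}\lambda_{\ell}^{(n-1)/2}$, needing only $\sum_{\ell}\alpha_{\ell}\lambda_{\ell}^{(n-1)/2}<\infty$, where the paper routes through the Lipschitz bound of Proposition \ref{prop:lipschitz-1} (exponent $(n+3)/2$), and you bound the Riemann-sum error via bounded variation of the quantile integrands rather than via their almost-everywhere derivatives as in the paper; both substitutions are harmless, and your rate requirements $N\epsilon_{L_{N}}\to0$ and $N/D_{N}\to0$ are exactly what the paper's choices of $L_{N},D_{N}$ deliver.
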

We evaluate the power performance of the testing procedure based on
$\tilde{\mathbb{T}}_{L,D'}$ for the sequence of contiguous alternatives
$H_{1N}=\lbrace(P,Q):\M_{\mu\#P}=\M_{\mu\#Q}+\delta_{N},l=1,2,\ldots\rbrace$,
where the deviation from null is quantified collectively by pushforward
differences $\delta_{\ell N}\in\mathcal{H},\delta_{N}=\oplus{}_{\ell}(\sqrt{\alpha_{\ell}}\delta_{\ell N})$
that are made to approach 0 as $N\rightarrow\infty$. The following
theorem establishes consistency of our testing procedure against a
family of such local alternatives.
\begin{thm}
\label{thm:perm-power}Assume the same conditions and choice of $L,D'$
as Theorem~\ref{Thm:asy-tilde}. Then for the sequence
of contiguous alternatives $H_{1N}$ such that $N\|\delta_{N}\|_{\mathcal{H}^{*}}^{2}\rightarrow\infty$,
the test based on $\tilde{\mathbb{T}}_{L,D'}$ is consistent for any
$\alpha\in(0,1)$, that is as $N\rightarrow\infty$ the asymptotic
power approaches 1.
\end{thm}

\paragraph*{Testing Procedure}

In practice, to obtain the $p$-value for the $\tilde{\mathbb{T}}_{L,D'}$-statistic
we use a bootstrap procedure. Remember that $\tilde{\mathbb{T}}_{L,D'}$
is computed via the approximate embedding $\eta_{D}$ with $D=LD'$.
The collection $\{\mu_{i}\}_{i=1}^{N_{1}}$ is mapped to the collection
$\{X_{i}=\eta_{D}(\mu_{i})\}_{i=1}^{N_{1}}$ of vectors in $\R^{D}$
drawn in an i.i.d. manner from $\eta_{D}\#P=P\circ\eta_{D}^{-1}\in\P(\R^{D})$.
Similarly, for the other collection we have a sample $\{Y_{i}=\eta_{D}(\nu_{i})\}_{i=1}^{N_{2}}$
drawn from $\eta_{D}\#Q$. Now, the null $H_{0}:\M_{\eta\#P}=\M_{\eta\#Q}$
implies that the means of the distributions $\eta_{D}\#P$ and $\eta_{D}\#Q$
coincide in $\R^{D}$. 

The bootstrap null distribution for $\tilde{\mathbb{T}}_{L,D'}$ can
be obtained as follows. Let $\bar{X}$ and $\bar{Y}$ be the sample
means; construct the combined sample $\{X_{i}-\bar{X}+\frac{\bar{X}+\bar{Y}}{2}\}_{i=1}^{N_{1}}\bigcup\{Y_{i}-\bar{Y}+\frac{\bar{X}+\bar{Y}}{2}\}_{i=1}^{N_{2}}$.
This centers both samples at $\frac{\bar{X}+\bar{Y}}{2}$. Now, from
the combined sample we select with replacement $N_{1}$ (resp. $N_{2}$)
samples to make bootstrap sample $\{X_{i}^{b}\}_{i=1}^{N_{1}}$ (resp.
$\{Y_{i}^{b}\}_{i=1}^{N_{2}}$). Repeat this process $B$ times (we
take $B=1000$ in our experiments), and collect the null test statistic
values $\tilde{\mathbb{T}}_{L,D'}^{b}=\tilde{\mathbb{T}}_{L,D'}(\{X_{i}^{b}\}_{i=1}^{N_{1}},\{Y_{i}^{b}\}_{i=1}^{N_{2}})$
for $b=1,...,B$. The approximate $p$-value is then given by: $p=\frac{1}{B+1}\left(\vert\{b:\mathbb{\tilde{T}}_{L,D'}^{b}\geq\tilde{\mathbb{T}}_{L,D'}\}\vert+1\right)$.

\begin{rem}
Permutation testing cannot be applied here as it would detect
differences beyond the mean inequality. Such differences help reject
the stronger null hypothesis $H_{0}:P=Q$ which can be the intent
in some situations. However, such a rejection will not allow pinpointing the aspect responsible
for the difference; see \cite{permuteornot}.
\end{rem}

\subsection{\label{subsec:Combination-Approach}Testing via $p$-value Combination}

The bootstrap test above incurs a high computational cost and the
granularity of the $p$-values is determined by the number of resamples,
which can be too coarse in massive multiple comparison settings often
seen in industrial applications. Thus, we propose an approach that
avoids resampling. 

As explained above, testing $H_{0}:\M_{\eta\#P}=\M_{\eta\#Q}$ can
be interpreted as testing whether the means of the distributions $\eta_{D}\#P$
and $\eta_{D}\#Q$ coincide in $\R^{D}$. To this end, we adopt the
approach proposed by \citet{AKME} in a spatial statistics context.

First, we apply the Behrens-Fisher-Welch $t$-test (without assuming
equality of variances) to each coordinate of the samples $\{X_{i}=\eta_{D}(\mu_{i})\}_{i=1}^{N_{1}}$
and $\{Y_{i}=\eta_{D}(\nu_{i})\}_{i=1}^{N_{2}}$ to obtain the $p$-values
$p_{k},k=1,2,...,D$. Second, an overall $p$-value is computed via
the harmonic mean $p$-value combination method which is robust to
dependencies \cite{HarmonicP1958,HarmonicP}: 
$$p^{H}=H\left(D/(\frac{1}{p_{1}}+\frac{1}{p_{2}}+\cdots+\frac{1}{p_{D}})\right),$$
where the function $H$ has a known form described in \cite{HarmonicP}.

Another approach for combining $p$-values is the Cauchy combination
test \cite{CauchyP}, but in our numerical experiments we found that
the Cauchy combination approach encounters problems when any of the
$p$-values is very close to $1$, which can happen in our setting
due to the form of the embedding $\eta_{D}$. Therefore, in contrast
to \citet{AKME}, for us the harmonic combination
is the only appropriate choice.

To guarantee size control, we establish a version of Theorem 1 from
\citet{CauchyP} for the harmonic mean $p$-value. Assume that a test
statistic $Z\in\R^{D}$ has null distribution with zero mean and every
pair of coordinates of $Z$ follows bivariate Gaussian distribution.
Compute the coordinate-wise two-sided $p$-values $p_{k}=2(1-\Phi(\vert Z_{k}\vert))$
where $\Phi$ is the standard Gaussian CDF. 
\begin{thm}
Let $p_{k},k=1,...,D$ be the null $p$-values as above and $p^{H}$
computed via harmonic mean approach, then $\lim_{\alpha\to0} \mathrm{Prob}\{p^{H}\leq\alpha\} / \alpha = 1$.
\end{thm}
The proposed procedure asymptotically controls the size of the
test for small $\alpha$ (Appendix~\ref{subsec:app_comb_proofs}). Our experimental results show that the control
is already achieved for moderate sample sizes and the commonly used
$\alpha=0.05$. 

\begin{figure}
\noindent\begin{minipage}{1\columnwidth}%
\begin{center}
\begin{tabular}{cc}
\includegraphics[width=0.45\columnwidth]{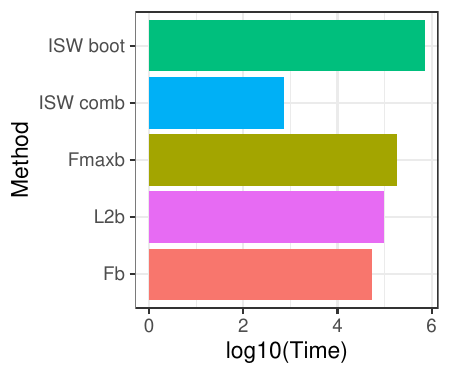} & \includegraphics[width=0.45\columnwidth]{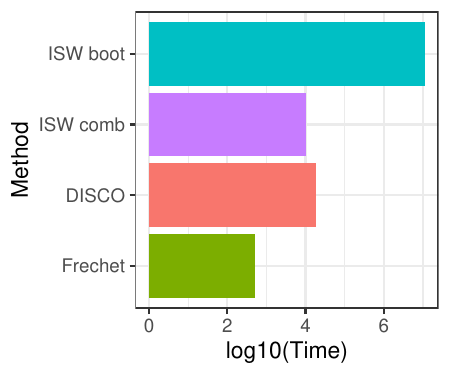}\tabularnewline
(a) & (b) \tabularnewline
\end{tabular}
\par\end{center}
\vspace{-4pt}
\caption{\label{fig:times}Running times for (a) real line and (b) circle experiments.}
\end{minipage}
\end{figure}

\begin{figure*}
\begin{centering}
\begin{tabular}{cccc}
\includegraphics[width=0.23\textwidth]{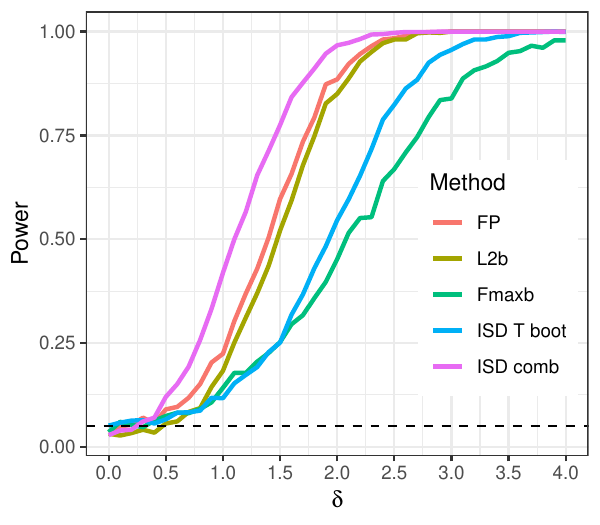} & \includegraphics[width=0.23\textwidth]{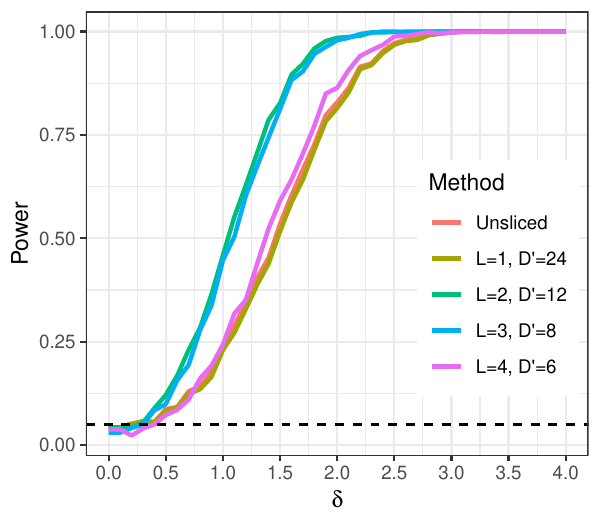} & \includegraphics[width=0.23\textwidth]{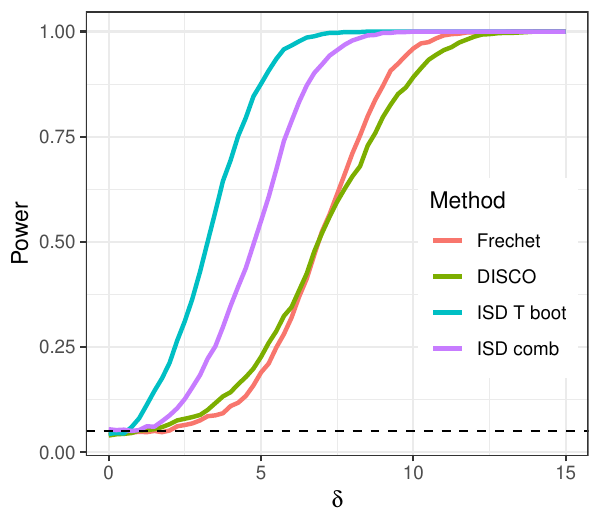} & \includegraphics[width=0.23\textwidth]{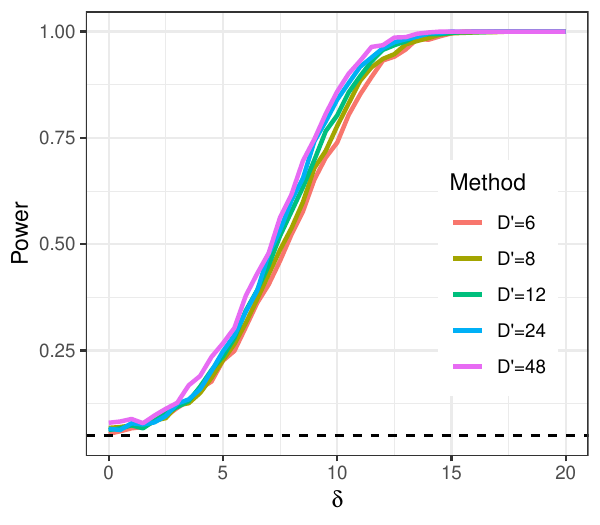}\tabularnewline
(a) & (b) & (c) & (d)\tabularnewline
\end{tabular}
\par\end{centering}
\caption{
\label{fig:power}
Performances on synthetic data. Dotted lines indicates nominal size of all tests ($\alpha=0.05$). (a) comparison with existing methods on finite interval---a test based on basis function representation \citep[FP]{FPtest}, a sum-type $\ell_{2}$ norm-based test (L2b) \cite{L2btest}, and a max-type test \cite{Fmaxbtest} that uses the maximum of coordinate-wise $F$ statistic (Fmaxb), (b) unsliced vs. different settings of $(L,D')$ on finite interval. (c) Circular data, comparing with Fr\'echet ANOVA \cite{10.1093/biomet/asz052}, and the DISCO nonparametric test \cite{disco}; (d) harmonic combination tests on cylindrical data for $L=4$. .}
\end{figure*}
\section{Experiments\label{subsec:Simulations}}

We compare the performance of our tests
with several existing methods, across synthetic and real data, and settings of the embedding parameters $L,D'$. For evaluation, we use empirical power at different degrees of departure from the null hypothesis (Captured by $\delta$ in the plots in Fig.~\ref{fig:power}), calculated by averaging the proportion of rejections at level $\alpha=0.05$ over 1000 independent datasets, with samples divided into two groups of sizes $n_{1}=60,n_{2}=40$. To ensure the tests are well-calibrated, we calculate nominal sizes assuming the two sample groups are drawn from the same meta-distribution. Details on the experiment  settings and computational complexity are in Appendix~\ref{subsec:app_simulations}.

\paragraph*{Finite intervals}


We use embedding dimensions $L=3,D'=10$ to compare our method against
11 functional ANOVA tests---we report results for 3 of
them in Figure~\ref{fig:power}a (see Appendix~\ref{sec:appB} for
complete results). All methods maintain nominal size for $\delta=0$
(Figure \ref{fig:power}a). While the combination test (ISD comb)
based on our proposal outperformed all the other tests across all
values of $\delta$, the bootstrap test that uses the overall $\mathbb{T}$
statistic (ISD T boot) performs better than Fmaxb but worse than others. In terms of computation time, ISD comb takes the least amount of time (Fig.~\ref{fig:times}a).

We also compare the $p$-value combination test based on an \emph{unsliced}
24-dimensional inverse CDF embedding with sliced $IS\W$-based tests
(Figure \ref{fig:power}b). We use multiple pairs of $(L,D')$
values, all of them giving overall embeddings of dimension $D=LD'=24$.
The performance of an $IS\W$-based test that uses slicing over only
the first eigenfunction is almost as good as the unsliced version.
With more eigenfunctions, the powers first improve considerably, then
become similar to the unsliced version again.

\paragraph*{Manifold domains}

We consider data from distributions on circles and cylinders. For
circular data, we take von Mises distributions with randomly chosen
parameters as our samples.
For an angle $x$ (measured in radians),
the von Mises probability density function is given by $f(x|\mu,\kappa)=\exp[\kappa\cos(x-\mu)](2\pi I_{0}(\kappa))^{-1}$,
where $I_{0}(\kappa)$ is the modified Bessel function of order 0.
We fix $\kappa=2$, and use $\mu\equiv\mu_{i}\sim N(0,0.1^{2})$,
$\mu\equiv\nu_{i}\sim N(\delta,0.1^{2})$ for samples from group 1
and 2 respectively---with $\delta\in[0,15]\times\pi/180$ (i.e. 0
to 15 degrees converted to radians). As each observation vector, we
take 100 random draws from each sample-specific distribution.
For our embeddings, we use $L=10,D'=20$. Since the competing methods
cannot handle circular geometry directly, to implement them we cut
the circle into an interval. Figure \ref{fig:power}c
shows that all methods maintain nominal size, but both our tests maintain
considerably higher power than existing methods for all $\delta$. Computationally, ISD comb has comparable order of magnitude as the other two methods (Fig.~\ref{fig:times}b).

To generate cylindrical data, we use the distribution proposed by
\citet{MardiaSutton}.
Samples from each distribution have the form
of a bivariate random vector $(\Theta,X)$, where the first (circular)
marginal $\Theta$ has a von Mises distribution, and $X$ is a Gaussian
conditional on $\Theta=\theta$. We draw the
mean parameters for each coordinate-wise distribution
from $\text{{Unif}}(0,1)$ and $\text{{Unif}}(\delta,\delta+1)$ for
sample groups 1 and 2 respectively, with $\delta\in[0,30]\times\pi/180$.
We then use 500 random draws from each sample distribution to obtain
histograms.
To evaluate the effects of choosing $L,D'$ we calculate
our embeddings for $L\in\lbrace2,3,4,5\rbrace,D'\in\lbrace6,8,12,24,48\rbrace$.
The choice of $L$ has small effect on performance, so we report
results for $L=4$ in Figure \ref{fig:power}d. Higher values of $D'$ result in some increase in power.

\paragraph*{Comparison to existing slicing methods}

As an example showcasing the importance of using intrinsic distances, consider the following graph setup: points A and B are adjacent on a planar regular 200-gon, with the edge AB removed. The intrinsic distance between A and B is large and the (extrinsic) Euclidean distance is small. No matter what Euclidean projection we use, the SW cost of moving probability mass from A to B would be small, leading to testing power loss when we compare two distribution sets that concentrate one around A and the other around B.  To numerically analyze this case we generate distributions on this graph by putting Unif(0,1) weights at each of the vertices, and added an additional weight of 10 to the bin at point A to obtain the first set of distributions with a mode at A. For the second set we put the weight of 10 at B instead to obtain a set of distributions with mode at B.

\begin{table}[h]
\centering %
\scalebox{.975}{
\begin{tabular}{cccc}
\toprule 
{\footnotesize{}Method } & {\footnotesize{}Power } & {\footnotesize{}Size } & {\footnotesize{}Time (sec) }\tabularnewline
\midrule 
{\footnotesize{}$IS\W$ } & {\footnotesize{}1.00 } & {\footnotesize{}0.029 } & {\footnotesize{}630.67}\tabularnewline
{\footnotesize{}SW } & {\footnotesize{}0.128 } & {\footnotesize{}0.029 } & {\footnotesize{}579.3 }\tabularnewline
{\footnotesize{}GSW-circle } & {\footnotesize{}0.128 } & {\footnotesize{}0.029 } & {\footnotesize{}564.82}\tabularnewline
{\footnotesize{}GSW-poly3 } & {\footnotesize{}0.025 } & {\footnotesize{}0.025 } & {\footnotesize{}779.69}\tabularnewline
{\footnotesize{}GSW-poly5 } & {\footnotesize{}0.044 } & {\footnotesize{}0.035 } & {\footnotesize{}1009.85}\tabularnewline
{\footnotesize{} MSW } & {\footnotesize{}1.00 } & {\footnotesize{}0.63 } & {\footnotesize{} 61764.22 }\tabularnewline
{\footnotesize{}ST} & {\footnotesize{}1.00 } & {\footnotesize{}0.048 } & {\footnotesize{}21474.88}\tabularnewline
\bottomrule
\end{tabular}
}
\caption{ \label{table:comparegraph} Comparison of multiple slicing techniques.}
\end{table}

Table~\ref{table:comparegraph} shows comparison of $IS\W$ with Sliced Wasserstein (SW) in two dimensions, GSW with 3 choices of defining functions (circular, homogeneous polynomial of degree 3, and degree 5), MSW, and Sobolev transport (ST). We use $L = 10, D' = 8$ to produce all embeddings, and implement the sliced version of ST by aggregating ST distances originating from 10 random vertices. $IS\W$ achieves the highest possible power, maintains nominal type-I error, and has computational time comparable to SW/GSW. ST and MSW achieve the same power as $IS\W$, but take much longer. This is because unlike slicing-based methods, they need to rely on permutation testing in absence of embeddings. For MSW, the high power comes at the price of a size that is much higher than the acceptable threshold of 0.05.


\setlength{\tabcolsep}{3pt}

\begin{table*}[t]
\begin{minipage}[c]{0.65\textwidth}%
\begin{flushleft}
{\footnotesize{}}%
\begin{tabular}{ccccccccc}
\toprule 
{\footnotesize{}Age Groups} & {\footnotesize{}6--15} & {\footnotesize{}16--25} & {\footnotesize{}26--35} & {\footnotesize{}36--45} & {\footnotesize{}46--55} & {\footnotesize{}56--65} & {\footnotesize{}66--75} & {\footnotesize{}76--85}\tabularnewline
\midrule 
{\footnotesize{}6--15} &  & {\footnotesize{}0.394} & {\footnotesize{}0.098} & {\footnotesize{}0.555} & {\footnotesize{}0.882} & {\footnotesize{}0.985} & {\footnotesize{}0.919} & {\footnotesize{}0.997}\tabularnewline
{\footnotesize{}16--25} & \textbf{\footnotesize{}1.2e-13} &  & {\footnotesize{}0.575} & {\footnotesize{}0.967} & {\footnotesize{}0.126} & {\footnotesize{}0.921} & {\footnotesize{}0.911} & {\footnotesize{}0.977}\tabularnewline
{\footnotesize{}26--35} & \textbf{\footnotesize{}3.1e-21} & \textbf{\footnotesize{}2.7e-04} &  & {\footnotesize{}0.459} & {\footnotesize{}0.197} & {\footnotesize{}0.996} & {\footnotesize{}0.919} & {\footnotesize{}0.565}\tabularnewline
{\footnotesize{}36--45} & \textbf{\footnotesize{}6.1e-22} & \textbf{\footnotesize{}7.9e-08} & \textbf{\footnotesize{}0.042} &  & {\footnotesize{}0.864} & {\footnotesize{}0.637} & {\footnotesize{}0.849} & {\footnotesize{}0.991}\tabularnewline
{\footnotesize{}46--55} & \textbf{\footnotesize{}8.2e-22} & \textbf{\footnotesize{}4.7e-05} & \textbf{\footnotesize{}0.011} & {\footnotesize{}0.343} &  & {\footnotesize{}0.841} & {\footnotesize{}0.165} & {\footnotesize{}0.554}\tabularnewline
{\footnotesize{}56--65} & \textbf{\footnotesize{}1.3e-25} & \textbf{\footnotesize{}0.001} & \textbf{\footnotesize{}0.001} & \textbf{\footnotesize{}5.6e-05} & \textbf{\footnotesize{}0.003} &  & {\footnotesize{}0.991} & {\footnotesize{}0.962}\tabularnewline
{\footnotesize{}66--75} & \textbf{\footnotesize{}3.6e-35} & \textbf{\footnotesize{}7.8e-12} & \textbf{\footnotesize{}1.5e-11} & \textbf{\footnotesize{}4.6e-15} & \textbf{\footnotesize{}1.8e-13} & \textbf{\footnotesize{}0.001} &  & {\footnotesize{}0.989}\tabularnewline
{\footnotesize{}76--85} & \textbf{\footnotesize{}3.8e-46} & \textbf{\footnotesize{}1.4e-26} & \textbf{\footnotesize{}1.7e-30} & \textbf{\footnotesize{}8.4e-37} & \textbf{\footnotesize{}2.1e-35} & \textbf{\footnotesize{}1.3e-17} & \textbf{\footnotesize{}6.5e-09} & \tabularnewline
\bottomrule
\end{tabular}\caption{\label{tab:nhanes-age-groups} Activity intensity comparison across
age groups in the NHANES data. Below diagonal: $p$-values for
the actual data comparisons. Above diagonal: null $p$-values obtained
by combining and randomly splitting the two groups. Bold entries correspond
to rejected hypotheses with the BH procedure at FDR level 0.1.}
\par\end{flushleft}%
\end{minipage}\hfill{}%
\begin{minipage}[c]{0.32\textwidth}%
\begin{flushleft}
{\footnotesize{}}%
\begin{tabular}{lcc}
\toprule 
\multirow{1}{*}{{\footnotesize{}Crime Type}} & {\footnotesize{}Tue vs Thu} & {\footnotesize{}Tue vs Sat}\tabularnewline
\midrule 
{\footnotesize{}Theft} & {\footnotesize{}0.428} & \textbf{\footnotesize{}4.2e-06}\tabularnewline
{\footnotesize{}Deceptive Pract.} & {\footnotesize{}0.313} & \textbf{\footnotesize{}0.001}\tabularnewline
{\footnotesize{}Battery} & {\footnotesize{}0.430} & \textbf{\footnotesize{}0.001}\tabularnewline
{\footnotesize{}Robbery} & {\footnotesize{}0.119} & \textbf{\footnotesize{}0.003}\tabularnewline
{\footnotesize{}Narcotics} & {\footnotesize{}0.854} & \textbf{\footnotesize{}0.004}\tabularnewline
{\footnotesize{}Criminal Dam.} & {\footnotesize{}0.855} & \textbf{\footnotesize{}0.02}\tabularnewline
{\footnotesize{}Other Offense} & {\footnotesize{}0.931} & {\footnotesize{}0.052}\tabularnewline
{\footnotesize{}Burglary} & {\footnotesize{}0.142} & {\footnotesize{}0.261}\tabularnewline
{\footnotesize{}Assault} & {\footnotesize{}0.997} & {\footnotesize{}0.38}\tabularnewline
{\footnotesize{}Motor Veh. Theft} & {\footnotesize{}0.858} & {\footnotesize{}0.416}\tabularnewline
\bottomrule
\end{tabular}\caption{\label{tab:chicago-crime}Chicago Crime analysis $p$-values. Bold
entries correspond to rejected hypotheses with the BH procedure at
FDR level 0.1.}
\par\end{flushleft}%
\end{minipage}
\end{table*}

\section{Real Data Examples}

\paragraph*{NHANES}

This dataset \cite{nhanes} contains physical activity pattern readings for 6839 individuals, corresponding to activity monitor intensity
values for $24\times60=1440$ minutes throughout the day, over 7 days.
We capture this activity pattern into a cylindrical histogram with
time and intensity dimensions, having 96 and 100
bins respectively. Since the time dimension is periodic, normalized
counts of this histogram can be considered as person-specific probability
distributions over the cylinder $S^{1}(T_{1})\times[0,T_{2})$, with
$T_{1}=96,T_{2}=100$. To check if activity patterns vary across different groups of individuals,
we first split individuals into age-specific groups: 6--15, 16--25,
...,76--85, then sample 100 males and 100 females from each split.
For our analysis, we consider $L=3$ indices along the two directions,
i.e. $\ell_{1},\ell_{2}=1,2,3$ and $D'=5$. We summarize the $p$-value
combination test results in Table \ref{tab:nhanes-age-groups}, \emph{below
the diagonal}. We control false discovery rate at $0.1$ by running the resulting $p$-values through the procedure of \citet{FDR}.
$IS\W$ detects statistically significant differences between all
pairs of groups, except the 36--45 and 46--55 groups. As expected,
the control $p$-values---obtained by mixing male and female samples
in each age group and splitting arbitrarily---do not concentrate
near zero.
More details are given in Appendix~\ref{subsec:app_nhanes}.

\paragraph*{Chicago Crime}
We use this dataset \cite{chicagocrime} to show a practical usage of our method on histograms over graphs. Each beat (geographic
area subdivision used by police) corresponds to a vertex, and two
vertices are connected by an edge if the corresponding beats share
a geographic boundary. For each crime type and day, the normalized
counts of that crime type for each beat gives a daily probability
distribution over the graph. Our goal is to compare the collection of
distributions of, say, theft occurring on Tuesday to those of Thursday
and Saturday. The Tuesday versus Thursday comparison is intended as
a null case, as we do not expect to see any differences between them
\cite{AKME}. Table~\ref{tab:chicago-crime} results $IS\W$ outcomes using 100-dimensional embeddings ($L=20, D'=5$).We detect statistically significant differences between Tuesday and Saturday patterns for six categories of crime, and as
expected, no differences between Tuesday and Thursday patterns.
For more details and another graph-based application, see Appendices~\ref{subsec:app_chic} and \ref{subsec:app_brain}, respectively.

\section{\label{sec:Conclusion}Conclusion}

In this paper we have introduced a novel class of sliced Wasserstein distances on manifolds and graphs, and applied the resulting $IS\W$ distance in the context of hypothesis tests for meta-distributions on general domains, due to the high relevance of this setup in real data situations. It is worth pointing out that the assumption-lean nature of $IS\W$ can enable its adoption by many other statistical and ML methods. 

The $IS\W$ embeddings can be used to do other hypothesis tests, such as tests for equality of covariances by extending the notion of Wasserstein covariances from \cite{wass_covariance} to general domains. The embeddings can also be used as input features for supervised learning problems over distributions.
In another direction, $IS\W$ is directly applicable as a loss function for generative modeling. For example, if one is generating distributions over a graph or manifold, $IS\W$ can serve as a loss function similarly to how other sliced distances are used in this context over Euclidean domains.

The proposed framework has a few limitations. There is scope of exploration for choosing the parameters $L$ and $D'$ in a principled manner. Empirical computation of eigenfunctions for general manifolds will introduce approximation errors that need to be tackled by expanding our theoretical results. 
In terms of potential societal impacts of our proposal, any difference between data distributions from different demographic groups found using our method should be evaluated in light of potential biases in the data collection stage. 

\bibliographystyle{icml2023}
\bibliography{icml_new}

\setcounter{thm}{0}

\newpage
\onecolumn
\section*{Appendix}
\appendix
For the appendix to be self-contained, we restate results (theory and experiments) in the main paper when necessary. Appendix~\ref{sec:appA} contains proofs of theoretical results and some implementation details. Details and results of experiments performed are in Appendix~\ref{sec:appB}. Code and data behind these experiments are at \url{https://github.com/shubhobm/isw}.

\section{Proofs and additional results}
\label{sec:appA}

\subsection{Proofs and Notes for Section~\ref{sec:prelims}}

\theoremstyle{plain}
\newtheorem*{prop34}{Proposition 3.4}
\begin{prop34}
\label{prop-defn-1}For $P,Q\in\P(\P(\X))$, the following equality
holds:
\begin{equation}
\mathbb{T}({P},{Q})=\E_{\mu\sim{P},\nu\sim{Q}}[\D^{2}(\mu,\nu)]-\frac{1}{2}\E_{\mu,\mu'\sim{P}}[\D^{2}(\mu,\mu')]-\frac{1}{2}\E_{\nu,\nu'\sim{Q}}[\D^{2}(\nu,\nu')],\label{eq:defn-1}
\end{equation}
where to avoid notational clutter we use $\mathcal{D}^{2}(\cdot,\cdot)$
as a shorthand for $(\mathcal{D}(\cdot,\cdot))^{2}$.
\end{prop34}
\begin{proof}
This is a straightforward application of the ``kernel trick'': using
the Hilbert property of the distance we can rewrite,
\begin{align*}
\E_{\mu\sim{P},\nu\sim{Q}} & [\Vert\eta(\mu)-\eta(\nu)\Vert_{\mathcal{\H}}^{2}]-\frac{1}{2}\E_{\mu,\mu'\sim{P}}[\Vert\eta(\mu)-\eta(\mu')\Vert_{\mathcal{\H}}^{2}]-\frac{1}{2}\E_{\nu,\nu'\sim{Q}}[\Vert\eta(\nu)-\eta(\nu')\Vert_{\mathcal{\H}}^{2}]\\
= & \E_{\mu\sim{P}}[\Vert\eta(\mu)\Vert_{\H}^{2}]+\E_{\nu\sim{Q}}[\Vert\eta(\nu)\Vert_{\mathcal{\H}}^{2}]-2\langle\E_{\mu\sim{P}}[\eta(\mu)],\E_{\nu\sim{Q}}[\eta(\nu)]\rangle_{\H}\\
- & \E_{\mu\sim{P}}[\Vert\eta(\mu)\Vert_{\H}^{2}]-\E_{\nu\sim{Q}}[\Vert\eta(\nu)\Vert_{\mathcal{\H}}^{2}]\\
+ & \langle\E_{\mu\sim{P}}[\eta(\mu)],\E_{\mu\sim{P}}[\eta(\mu)]\rangle_{\H}+\langle\E_{\nu\sim{Q}}[\eta(\nu)],\E_{\nu\sim{Q}}[\eta(\nu)]\rangle_{\H}\\
= & \Vert\E_{\mu\sim{P}}[\eta(\mu)]-\E_{\nu\sim{Q}}[\eta(\nu)]\Vert_{\H}^{2}=\mathbb{T}({P},{Q}).
\end{align*}
Which gives the sought equivalence.
\end{proof}


\subsection{Proofs and Notes for Section~\ref{subsec:defn}}

\theoremstyle{plain}
\newtheorem*{prop42}{Proposition 4.2}
\begin{prop42}
\label{Prop-well-defined-1}If $\X$ is a smooth compact $n$-dimensional
manifold and $\sum_{\ell}\lambda_{\ell}^{(n-1)/2}\alpha(\lambda_{\ell})<\infty$,
then $IS\W$ is well-defined. 
\end{prop42}
\begin{proof}
We use H\"ormander's bound on the supremum norm of the eigenfunctions:
\[
\Vert\phi_{\ell}\Vert_{\infty}\leq c\lambda_{\ell}^{(n-1)/4}\Vert\phi_{\ell}\Vert_{2},
\]
for some constant $c$ that depends on the manifold. By orthonormality
of the eigenfunctions we have $\forall\ell,\Vert\phi_{\ell}\Vert_{2}=1$.
Next, note that $\W(\phi_{\ell}\sharp\mu,\phi_{\ell}\sharp\nu)\leq2\Vert\phi_{\ell}\Vert_{\infty}$
as the maximum distance that the mass would be transported in any
transportation plan involving pushforwards via $\phi_{\ell}$ is upper
bounded by $2\Vert\phi_{\ell}\Vert_{\infty}$. As a result, every
term in the series defining $IS\W$ can be upper-bounded by the terms
of the following series:
\[
\sum_{\ell}4\Vert\phi_{\ell}\Vert_{\infty}^{2}\alpha(\lambda_{\ell})\leq\sum_{\ell}4c^{2}\lambda_{\ell}^{(n-1)/2}\alpha(\lambda_{\ell})\propto\sum_{\ell}\lambda_{\ell}^{(n-1)/2}\alpha(\lambda_{\ell}),
\]
which proves the claim by the direct comparison test for convergence
of series.
\end{proof}
\begin{rem}
When Weyl law applies, we have that $\lambda_{\ell}=\Theta(\ell^{2/n})$,
which allows us to replace the above condition by $\sum_{\ell}\ell^{(n-1)/n}\alpha(\lambda_{\ell})<\infty$.
For the diffusion kernel/distance choice of $\alpha(\lambda)=e^{-t\lambda}$
the series always converges independently of the manifold dimension.
For biharmonic choice of $\alpha(\lambda)=1/\lambda^{2}$, the sufficient
condition is the convergence of $\sum_{\ell}\ell^{(n-1)/n}/\lambda_{\ell}^{2}\sim\sum_{\ell}\ell^{(n-1)/n}/(\ell^{2/n})^{2}=\sum_{\ell}\ell^{(n-5)/n}$,
where we applied Weyl's asymptotic again. As a result, the biharmonic
choice of $\alpha$ is guaranteed to provide a well-defined $IS\W$
for 1 and 2-dimensional manifolds. Notice, however, that the H\"ormander's
bound used in the proof of the above proposition can be rather lax
in some of the settings that are practically relevant, such as the
product spaces of lines and circles (where all of the eigenfunctions
are bounded by a constant as can be seen from Table \ref{tab:Eigenvalues-and-eigenfunctions}),
and, thus, convergence for the biharmonic choice holds more widely.
\end{rem}

\theoremstyle{plain}
\newtheorem*{prop43}{Proposition 4.3}
\begin{prop43}
\label{prop:hilbertian-1}
If $\D$ is a Hilbertian probability distance
such that $IS\D$ is well-defined, then

(i) $IS\D$ is Hilbertian, and 

(ii) $IS\D$ satisfies the following metric properties: non-negativity,
symmetry, the triangle inequality, and $IS\D(\mu,\mu)=0$.
\end{prop43}
\begin{proof}
By Hilbertian property of $\D$, there exists a Hilbert space $\H^{0}$
and a map $\eta^{0}:\P(\R)\rightarrow\H^{0}$ such that $\D(\rho_{1},\rho_{2})=\Vert\eta^{0}(\rho_{1})-\eta^{0}(\rho_{2})\Vert_{\H^0}$
for all $\rho_{1},\rho_{2}\in\P(\R)$. Plugging this into the definition
of $IS\D$ we have $IS\D(\mu,\nu)=\Vert\eta(\mu)-\eta(\nu)\Vert_{\mathcal{\H}}$,
where $\H=\oplus_{\ell}\H^{0}$ and the $\ell$-th component of $\eta(\mu)$
is $\sqrt{\alpha(\lambda_{\ell})}\eta_{0}(\phi_{\ell}\sharp\mu)\in\H^{0}$.
The second part of Proposition \ref{prop:hilbertian-1} directly follows
from the Hilbert property.
\end{proof}

\setcounter{thm}{0}
\begin{prop}
\label{prop:ground-dist-1}
When $\mu=\delta_{x}(\cdot),\nu=\delta_{y}(\cdot)$
for two points $x,y\in\X$, we have $IS\W(\mu,\nu)=d(x,y)$, where
$d(\cdot,\cdot)$ is the spectral distance corresponding to the choice
of $\alpha(\cdot)$.
\end{prop}

\begin{proof}
We have $\phi_{\ell}\sharp\delta_{x}=\delta_{\phi_{\ell}(x)}$ and
similarly for $y$. Now $\W^{2}(\phi_{\ell}\sharp\mu,\phi_{\ell}\sharp\nu)=\W^{2}(\delta_{\phi_{\ell}(x)},\delta_{\phi_{\ell}(y)})=(\phi_{\ell}(x)-\phi_{\ell}(y))^{2}$.
This last equality follows from the fact that the 2-Wasserstein on
real line between delta measures is equal to the distance between
the two points. Then scaling and adding up gives exactly the kernel
distance $d(x,y)$ between the two points.
\end{proof}

\theoremstyle{plain}
\newtheorem*{prop44}{Proposition 4.4}
\begin{prop44}
\label{prop:MMD-equiv-1}Let $\D(\rho_{1},\rho_{2})=\vert\E_{x\sim\rho_{1}}[x]-\E_{y\sim\rho_{2}}[y]\vert$
for $\rho_{1},\rho_{2}\in\P(\R)$, then the corresponding intrinsic
sliced distance is equivalent to the MMD with the spectral kernel
$k(\cdot,\cdot)$.
\end{prop44}
\begin{proof}
We can rewrite the definition as follows:
\begin{align*}
IS\D^{2}(\mu,\nu)= & \sum_{\ell}\alpha(\lambda_{\ell})(\E_{x\sim\phi_{\ell}\sharp\mu}[x]-\E_{y\sim\phi_{\ell}\sharp\nu}[y])^{2}=\sum_{\ell}\alpha(\lambda_{\ell})(\E_{x\sim\mu}[\phi_{\ell}(x)]-\E_{y\sim\nu}[\phi_{\ell}(y)])^{2}\\
= & \sum_{\ell}\alpha(\lambda_{\ell})(\E_{x,x'\sim\mu}[\phi_{\ell}(x)\phi_{\ell}(x')]+\E_{y,y'\sim\nu}[\phi_{\ell}(y)\phi_{\ell}(y')]-2\E_{x\sim\mu,y\sim\nu}[\phi_{\ell}(x)\phi_{\ell}(y)])\\
= & \E_{x,x'\sim\mu}[\sum_{\ell}\alpha(\lambda_{\ell})\phi_{\ell}(x)\phi_{\ell}(x')]+\E_{y,y'\sim\nu}[\sum_{\ell}\alpha(\lambda_{\ell})\phi_{\ell}(y)\phi_{\ell}(y')]\\
 & -2\E_{x\sim\mu,y\sim\nu}[\sum_{\ell}\alpha(\lambda_{\ell})\phi_{\ell}(x)\phi_{\ell}(y)]\\
= & \E_{x,x'\sim\mu}[k(x,x')]+\E_{y,y'\sim\nu}[k(y,y')]-2\E_{x\sim\mu,y\sim\nu}[k(x,y)],
\end{align*}
where we used the spectral kernel $k(x,y)=\sum_{\ell}\alpha(\lambda_{\ell})\phi_{\ell}(x)\phi_{\ell}(y)$.
The last expression coincides with the MMD based on kernel $k(\cdot,\cdot)$;
see Lemma 6 in \cite{mmd}.
\end{proof}

\theoremstyle{plain}
\newtheorem*{prop45}{Proposition 4.5}
\begin{prop45}
\label{prop:stronger-than-MMD-1}
$MMD(\mu,\nu)\leq IS\W(\mu,\nu)$
when the same $\alpha(\cdot)$ is used in both constructions.
\end{prop45}
\begin{proof}
This follows directly from the fact that for $\rho_{1},\rho_{2}\in\P(\R)$
the inequality $\vert\E_{x\sim\rho_{1}}[x]-\E_{y\sim\rho_{2}}[y]\vert\leq\mathcal{W}_{1}(\rho_{1},\rho_{2})\leq\W(\rho_{1},\rho_{2})$
holds. Here the first inequality follows from the centroid bound \cite{GuibasEMD},
and the second inequality is the well-known ordering property of Wasserstein
distances \cite{VillaniBook}.
\end{proof}

\theoremstyle{plain}
\newtheorem*{thm46}{Theorem 4.6}
\begin{thm46}
\label{thm:metric-1}
If $\alpha(\lambda)>0$ for all $\lambda>0$, then $IS\W$ is a metric on $\P(\X)$.
\end{thm46}
\begin{proof}
In the light of the Proposition \ref{prop:hilbertian-1} it remains
only to prove that $IS\W(\mu,\nu)=0$ implies $\mu=\nu$. According
to Proposition \ref{prop:stronger-than-MMD-1}, $IS\W(\mu,\nu)=0$
yields $MMD(\mu,\nu)=0$. The assumption that $\alpha(\lambda)>0$
for all $\lambda>0$ implies that the spectral kernel $k(\cdot,\cdot)$
corresponding to $\alpha(\cdot)$ is universal \cite{10.5555/1248547.1248642}.
Universality implies the characteristic property \cite{mmd}, which
in turn means that $MMD(\mu,\nu)=0$ is equivalent to $\mu=\nu$,
proving the claim.
\end{proof}

\theoremstyle{plain}
\newtheorem*{prop47}{Proposition 4.7}
\begin{prop47}
\label{prop:lipschitz-1}
There exists a constant $c$ depending only
on $\X$ such that for all $\mu,\nu\in\P(\X)$ the inequality $IS\W(\mu,\nu)\leq c\W^{\X}(\mu,\nu)\sqrt{\sum_{\ell}\lambda_{\ell}^{(n+3)/2}\alpha(\lambda_{\ell})}$
holds; here, $n$ is the dimension of $\X$.
\end{prop47}
\begin{proof}
We remind $\W^{\X}$ is the 2-Wasserstein distance defined directly
$\P(\X)$ using the geodesic distance as the ground metric. The Neumann
eigenfunctions on compact manifolds satisfy the inequality $\Vert\nabla\phi_{\ell}\Vert_{\infty}\leq c_{1}\lambda_{\ell}\Vert\phi_{\ell}\Vert_{\infty}$,
see \cite{Hu2015}. Applying the bound used in the proof of convergence,
$\Vert\phi_{\ell}\Vert_{\infty}\leq c_{2}\lambda_{\ell}^{(n-1)/4}$,
we get that $\phi_{\ell}$ is Lipschitz with respect to the geodesic
distance on $\X$ with the Lipschitz constant bounded by $c\lambda_{\ell}\lambda_{\ell}^{(n-1)/4}=c\lambda_{\ell}^{(n+3)/4}$. 

Consider the optimal coupling between $\mu$ and $\nu$ whose cost
equals to $\W^{\X}(\mu,\nu)$. Note that this coupling straightforwardly
provides a coupling between the pushforwards $\phi_{\ell}\sharp\mu$
and $\phi_{\ell}\sharp\nu$. Using the Lipschitz property of eigenfunctions,
we see that the cost of the pushforward coupling is smaller than $c\lambda_{\ell}^{(n+3)/4}\W^{\X}(\mu,\nu)$.
Since any such coupling provides an upper bound on $\W(\phi_{\ell}\sharp\mu,\phi_{\ell}\sharp\nu)$,
we have $\W(\phi_{\ell}\sharp\mu,\phi_{\ell}\sharp\nu)\leq c\lambda_{\ell}^{(n+3)/4}\W^{\X}(\mu,\nu)$.
Plugging this into the formula for $IS\W$ we get the claimed bound.
\end{proof}
\theoremstyle{plain}
\newtheorem*{prop48}{Proposition 4.8}
\begin{prop48}
\label{prop:robust-1}
Let $\{\mu_{i}\}_{i=1}^{N}$ and $\{\nu_{i}\}_{i=1}^{N}$
be two collections of probability measures on $\P(\X)$, such that
$\forall i,\W^{\X}(\mu_{i},\nu_{i})\leq\epsilon$, then $\mathbb{T}(\{\mu_{i}\}_{i=1}^{N},\{\nu_{i}\}_{i=1}^{N})\leq C^{2}\epsilon^{2}$.
Here $C=c\sqrt{\sum_{\ell}\lambda_{\ell}^{(n+3)/2}\alpha(\lambda_{\ell})}$
from previous proposition and is assumed to be finite.
\end{prop48}
\begin{proof}
We have
\begin{align*}
\mathbb{T}(\{\mu_{i}\}_{i=1}^{N},\{\nu_{i}\}_{i=1}^{N})= & \left\Vert \frac{1}{N}\sum_{i=1}^{N}\eta(\mu_{i})-\frac{1}{N}\sum_{i=1}^{N}\eta(\nu_{i})\right\Vert _{\H}^{2}=\left\Vert \frac{1}{N}\sum_{i=1}^{N}(\eta(\mu_{i})-\eta(\nu_{i}))\right\Vert _{\H}^{2}\\
\leq & \frac{1}{N}\sum_{i=1}^{N}\Vert\eta(\mu_{i})-\eta(\nu_{i})\Vert_{\H}^{2}=\frac{1}{N}\sum_{i=1}^{N}IS\W^{2}(\mu_{i},\nu_{i})\leq\frac{1}{N}\sum_{i=1}^{N}(C\W^{\X}(\mu_{i},\nu_{i}))^{2}\\
\leq & \frac{1}{N}N(C\epsilon)^{2}=C^{2}\epsilon^{2}.
\end{align*}
\end{proof}

\subsection{Computational Details for Section~\ref{subsec:Approximate-Hilbert-Embedding}}
\label{subsec:app_compute}

The case of finite intervals is the building block for the general
case, so let us first consider the case of $\X=[0,T]$. We represent
a histogram over this interval by a discrete measure of the form $\mu=\sum w_{a}\delta_{x_{a}}$
with the histogram bin centers $x_{a}\in[0,T]$ and weights $w_{a}$
satisfying $\sum w_{a}=1$, where $a=1,2,...,A$. Note that it is
not required for the histograms in the collections to be supported
at the same bin locations. For a given histogram, let $\{x_{(a)},w_{(a)}\}_{a=1}^{A}$
be the locations sorted from smallest to largest and their corresponding
weights; since the bin locations are unique there will not be any
ties. The quantile function is computed via $F_{\mu}^{-1}(s):=\min\{x_{(a)}:\sum_{b\leq a}w_{(b)}>s\}$.
The approximate map $\eta_{D'}^{0}$ now can be computed using the
$s_{k}$-th quantile value $F_{\mu}^{\text{\textminus}1}(s_{k})$
for each value of $s_{k},k=1,...,D'$.

For a general domain $\X$, the histogram representation is the same
as above: $\sum w_{a}\delta_{x_{a}}$ with the histogram bin centers
$x_{a}\in\X$ and weights $w_{a}$ satisfying $\sum w_{a}=1$, where
$a=1,2,...,A$. The pushforward $\phi_{\ell}\sharp\mu$ gives a histogram
on the real line defined by $\sum w_{a}\delta_{\phi_{\ell}(x_{a})}$.
Note that while $x_{a}$ are distinct, their images under $\phi_{\ell}$
do not have to be distinct, so one re-aggregates the weights to obtain
$\sum_{a\in S}w'_{a}\delta_{\phi_{\ell}(x_{a})}$, where $S$ is a
subset of $1,2,...,A$ and $w'_{a}$ are the new weights. It is now
straightforward to compute the quantile function as before and build
the approximate map $(\eta_{D})_{\ell}$. Doing so for the different
values of $\ell$ and concatenating the resulting vectors gives $\eta_{D}$.

\begin{table}
\begin{centering}
\begin{tabular}{ccc}
\toprule 
$\X$ & Eigenvalues & Eigenfunctions\tabularnewline
\midrule 
$[0,T]$ & $(\frac{\pi\ell}{T})^{2}$ & $\sqrt{\frac{2}{T}}\cos\frac{\pi\ell x}{T}$\tabularnewline
$S^{1}(T)=[0,T]\mod T$ & $(\frac{2\pi\ell}{T})^{2}$ & $\sqrt{\frac{2}{T}}[\cos/\sin]\frac{2\pi\ell x}{T}$\tabularnewline
$[0,T_{1}]\times[0,T_{2}]$ & $(\frac{\pi\ell_{1}}{T_{1}})^{2}$+$(\frac{\pi\ell_{2}}{T_{2}})^{2}$ & $\sqrt{\frac{4}{T_{1}T_{2}}}\cos\frac{\pi\ell_{1}x}{T_{1}}\cos\frac{\pi\ell_{2}x}{T_{2}}$\tabularnewline
$S^{1}(T_{1})\times[0,T_{2}]$ & $(\frac{2\pi\ell_{1}}{T_{1}})^{2}+(\frac{\pi\ell_{2}}{T_{2}})^{2}$ & $\sqrt{\frac{4}{T_{1}T_{2}}}[\cos/\sin]\frac{2\pi\ell_{1}x}{T_{1}}\cos\frac{\pi\ell_{2}x}{T_{2}}$\tabularnewline
$S^{1}(T_{1})\times S^{1}(T_{2})$ & $(\frac{2\pi\ell_{1}}{T_{1}})^{2}+(\frac{2\pi\ell_{2}}{T_{2}})^{2}$ & $\sqrt{\frac{4}{T_{1}T_{2}}}[\cos/\sin]\frac{2\pi\ell_{1}x}{T_{1}}[\cos/\sin]\frac{\pi\ell_{2}x}{T_{2}}$\tabularnewline
$S^{2}$ & \multicolumn{2}{c}{Spherical harmonics \cite{spharmonics}}\tabularnewline
Graphs/Data Clouds/Meshes & \multicolumn{2}{c}{Eigen-decomposition of the Laplacian matrix}\tabularnewline
\bottomrule
\end{tabular}
\par\end{centering}
\caption{\label{tab:Eigenvalues-and-eigenfunctions}Eigenvalues and eigenfunctions
of the Laplace-Beltrami operator with Neumann boundary conditions
for simple manifolds. We exclude zero eigenvalue and the corresponding
constant eigenvector; thus, all indices $\ell,\ell_{1},\ell_{2}$
run over positive integers. The notation $[\cos/\sin]$ means picking
either the cosine or sine function\emph{---all choices must be used,
giving multiple eigenfunctions}.}
\end{table}

In practice, these computations can be carried out on a variety of
domains---analytic manifolds, manifolds discretized as point clouds
or meshes, and graphs. In most cases the spectral decomposition of
the Laplace-Beltrami operator or graph Laplacian has to be computed
numerically \cite{diffusion_map,eigencomp}. For applications that
involve simple manifolds, the eigenvalues and eigenfunctions can be
computed analytically. For completeness we list them in Table \ref{tab:Eigenvalues-and-eigenfunctions}.
Note that we benefit from the fact that the eigen-decomposition for
product spaces can be derived from the eigen-decompositions of the
components.

The choice of the function $\alpha(\cdot)$ determining the contributions
of each spectral band is problem specific. When working on manifolds
of low dimension, the choice of $\alpha(\cdot)$ that corresponds
to the biharmonic distance is convenient. While the diffusion distance
provides a general choice that works on manifolds of any dimension,
the biharmonic distance does not have any parameters to tune and was
shown to provide an excellent alternative to the geodesic distance
in low-dimensional settings \cite{Biharmonic}. When in doubt, inspecting
the behavior of the distance on the underlying domain will allow assessing
whether the distance is appropriate for the given problem.The importance
of relying on a well-behaved spectral distance was highlighted in
Proposition \ref{prop:ground-dist-1}.

\subsection{Proofs and Notes for Section~\ref{subsec:Resampling-Based-Test}}
\label{subsec:app_resamp_proofs}

We remind that we will be using the following test statistic for the
results that are discussed below:
\begin{equation}
\hat{\mathbb{T}}\equiv\sum_{i,j}\frac{IS\W^{2}(\mu_{i},\nu_{j})}{N_{1}N_{2}}-\sum_{i,j:i\neq j}\frac{IS\W^{2}(\mu_{i},\mu_{j})}{2N_{1}(N_{1}-1)}-\sum_{i,j:i\neq j}\frac{IS\W^{2}(\nu_{i},\nu_{j})}{2N_{2}(N_{2}-1)}.\label{eq:T-statistic-isd-1}
\end{equation}
\begin{prop}
\label{Prop-T-hat-limit-1} Assume conditions (i)-(iii) hold. Define
$N=N_{1}+N_{2}$, and assume that as $N_{1},N_{2}\rightarrow\infty$,
we have $N_{1}/N\rightarrow\rho_{1},N_{2}/N\rightarrow\rho_{2}=1-\rho_{1}$,
for some fixed $0<\rho_{1}<1$. Define a new measure $R$ as a scaled
mixture of the centered pushforward measures 
\[
R=\left(\frac{1}{\rho_{1}}+\frac{1}{\rho_{2}}\right)^{-1}\left[\frac{1}{\rho_{1}}\left(\eta\#P-\M_{\eta\#P}\right)+\frac{1}{\rho_{2}}\left(\eta\#Q-\M_{\eta\#Q}\right)\right]=\rho_{2}\left(\eta\#P-\M_{\eta\#P}\right)+\rho_{1}\left(\eta\#Q-\M_{\eta\#Q}\right).
\]
 Suppose $\gamma_{m},m=1,2,\ldots$ are the eigenvalues of 
\[
\frac{1}{\rho_{1}\rho_{2}}\int_{\H}\langle x,x'\rangle_{\H}\psi_{m}(x')dR(x')=\gamma_{m}\psi_{m}(x).
\]
Then under $H_{0}:\M_{\eta\#P}=\M_{\eta\#Q}$ we have
\begin{equation}
N\hat{\mathbb{T}}\leadsto\sum_{m=1}^{\infty}\gamma_{m}(A_{m}^{2}-1),\label{eq:H0lim-1}
\end{equation}
where $A_{m}$ are i.i.d. $\mathcal{N}(0,1)$ random variables. Under
$H_{1}:\M_{\eta\#P}\neq\M_{\eta\#Q}$ we have $\sqrt{N}(\hat{\mathbb{T}}-\mathbb{T})\leadsto N(0,\sigma_{1}^{2})$,
where 
\begin{eqnarray}
\sigma_{1}^{2} & = & 4\left[\frac{1}{\rho_{1}}\mathbb{V}_{\mu\sim P}\E_{\mu'\sim P}\langle\eta(\mu),\eta(\mu')\rangle_{\H}+\frac{1}{\rho_{2}}\mathbb{V}_{\nu\sim Q}\E_{\nu'\sim Q}\langle\eta(\nu),\eta(\nu')\rangle_{\H}+\right.\nonumber \\
 &  & \left.\frac{1}{\rho_{1}}\mathbb{V}_{\mu\sim P}\E_{\nu\sim Q}\langle\eta(\mu),\eta(\nu)\rangle_{\H}+\frac{1}{\rho_{2}}\mathbb{V}_{\nu\sim Q}\E_{\mu\sim P}\langle\eta(\mu),\eta(\nu)\rangle_{\H}\right].\label{eq:sigma1sq}
\end{eqnarray}
\end{prop}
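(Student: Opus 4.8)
The plan is to recognize $\hat{\mathbb{T}}$ as the unbiased two-sample estimator of the squared Maximum Mean Discrepancy for the kernel induced by the Hilbert embedding, and then transport the classical degenerate and non-degenerate $U$-statistic limit theory of \cite{mmd,serflingbook} to the infinite-dimensional setting. First I would rewrite $\hat{\mathbb{T}}$ using the polarization identity from the proof of Proposition \ref{prop-defn-1}: substituting $IS\W^{2}(\rho_{1},\rho_{2})=\|\eta(\rho_{1})\|_{\H}^{2}+\|\eta(\rho_{2})\|_{\H}^{2}-2\langle\eta(\rho_{1}),\eta(\rho_{2})\rangle_{\H}$ into \eqref{eq:T-statistic-isd-1}, all the squared-norm contributions cancel because of the $U$-statistic weights, leaving
\[
\hat{\mathbb{T}}=\frac{\sum_{i\neq j}k(\mu_{i},\mu_{j})}{N_{1}(N_{1}-1)}+\frac{\sum_{i\neq j}k(\nu_{i},\nu_{j})}{N_{2}(N_{2}-1)}-\frac{2\sum_{i,j}k(\mu_{i},\nu_{j})}{N_{1}N_{2}},\qquad k(\rho_{1},\rho_{2})=\langle\eta(\rho_{1}),\eta(\rho_{2})\rangle_{\H},
\]
which is the standard unbiased estimator of $\mathbb{T}(P,Q)=\|\M_{\eta\#P}-\M_{\eta\#Q}\|_{\H}^{2}$. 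As noted in the discussion preceding the proposition, assumptions (i)--(ii) together with H\"ormander's bound force the image $\eta(\P(\X))$ to be a bounded subset of $\H$; in particular $\E\,k(\mu,\mu')^{2}<\infty$ and the covariance operators $\Sigma_{P}=\mathrm{Cov}[\eta\#P]$, $\Sigma_{Q}=\mathrm{Cov}[\eta\#Q]$ are trace class, which is what licenses the $U$-statistic machinery in a general $\H$.

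For the null limit I would center at the common centroid $\M:=\M_{\eta\#P}=\M_{\eta\#Q}$ and set $\xi_{i}=\eta(\mu_{i})-\M$, $\zeta_{j}=\eta(\nu_{j})-\M$; since $\hat{\mathbb{T}}$ depends only on pairwise distances it is invariant under this common translation, so expanding it in the $\xi,\zeta$ variables gives, up to a term that is $o_{P}(N^{-1})$,
\[
\hat{\mathbb{T}}=\Big\|\tfrac{1}{N_{1}}\sum_{i}\xi_{i}-\tfrac{1}{N_{2}}\sum_{j}\zeta_{j}\Big\|_{\H}^{2}-\frac{1}{N_{1}(N_{1}-1)}\sum_{i}\|\xi_{i}\|_{\H}^{2}-\frac{1}{N_{2}(N_{2}-1)}\sum_{j}\|\zeta_{j}\|_{\H}^{2}.
\]
Then the central limit theorem in Hilbert space (legitimate since $\eta$ has bounded image, hence finite second moment) gives $\sqrt{N}\big(\tfrac{1}{N_{1}}\sum_{i}\xi_{i}-\tfrac{1}{N_{2}}\sum_{j}\zeta_{j}\big)\leadsto\mathcal{G}\sim\mathcal{N}(0,S)$ in $\H$ with $S=\tfrac{1}{\rho_{1}}\Sigma_{P}+\tfrac{1}{\rho_{2}}\Sigma_{Q}$; the continuous mapping theorem yields $N\|\cdot\|_{\H}^{2}\leadsto\|\mathcal{G}\|_{\H}^{2}$, and the Karhunen--Lo\`eve expansion of $\mathcal{G}$ gives $\|\mathcal{G}\|_{\H}^{2}\stackrel{d}{=}\sum_{m}\gamma_{m}A_{m}^{2}$ with $A_{m}$ i.i.d.\ $\mathcal{N}(0,1)$ and $\gamma_{m}$ the eigenvalues of $S$. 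Meanwhile the strong law makes $N$ times each diagonal term converge to $\tfrac{1}{\rho_{1}}\mathrm{tr}\,\Sigma_{P}$ and $\tfrac{1}{\rho_{2}}\mathrm{tr}\,\Sigma_{Q}$, whose sum is $\mathrm{tr}\,S=\sum_{m}\gamma_{m}<\infty$; combining gives $N\hat{\mathbb{T}}\leadsto\sum_{m}\gamma_{m}(A_{m}^{2}-1)$, i.e.\ \eqref{eq:H0lim-1}. Finally, $R$ is precisely the mixture putting weight $\rho_{2}$ on the centered $\eta\#P$ and $\rho_{1}$ on the centered $\eta\#Q$, so its covariance operator is $\rho_{2}\Sigma_{P}+\rho_{1}\Sigma_{Q}$ and $\tfrac{1}{\rho_{1}\rho_{2}}$ times it equals $S$; the Gram/covariance-operator duality then identifies the nonzero eigenvalues of the displayed integral operator on $L^{2}(R)$ with those of $S$, so the two descriptions of $\gamma_{m}$ coincide.

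Under $H_{1}$, $\hat{\mathbb{T}}$ is a \emph{non-degenerate} two-sample $U$-statistic: its Hoeffding/H\'ajek projection onto a single $P$-observation is $\mu\mapsto2\langle\eta(\mu),\M_{\eta\#P}-\M_{\eta\#Q}\rangle_{\H}$ (and analogously on the $Q$-side), which has strictly positive variance when $\M_{\eta\#P}\neq\M_{\eta\#Q}$, the degenerate edge case being excluded by the absolute continuity in (iii). The CLT for two-sample $U$-statistics \cite{serflingbook} then gives $\sqrt{N}(\hat{\mathbb{T}}-\mathbb{T})\leadsto\mathcal{N}(0,\sigma_{1}^{2})$, where the projection supplies the variance $\tfrac{4}{\rho_{1}}\mathbb{V}_{\mu\sim P}\big(\E_{\mu'\sim P}\langle\eta(\mu),\eta(\mu')\rangle_{\H}-\E_{\nu\sim Q}\langle\eta(\mu),\eta(\nu)\rangle_{\H}\big)+\tfrac{4}{\rho_{2}}\mathbb{V}_{\nu\sim Q}\big(\E_{\nu'\sim Q}\langle\eta(\nu),\eta(\nu')\rangle_{\H}-\E_{\mu\sim P}\langle\eta(\mu),\eta(\nu)\rangle_{\H}\big)$; expanding each variance-of-a-difference into its constituent variances (and, if one prefers to suppress the covariance cross-terms, keeping only those pieces) reproduces the displayed form of $\sigma_{1}^{2}$ in \eqref{eq:sigma1sq}.

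The routine parts are the algebra of the first step and the projection computation under $H_{1}$. The main obstacle is the null case: justifying the Hilbert-valued CLT and, above all, the bookkeeping that keeps the degenerate ``diagonal'' terms $\tfrac{1}{N_{1}(N_{1}-1)}\sum\|\xi_{i}\|_{\H}^{2}$ from contributing anything beyond the finite constant $\sum_{m}\gamma_{m}$---this is exactly where square-integrability of the pushforward meta-distributions (equivalently, trace-class-ness of $\Sigma_{P},\Sigma_{Q}$, guaranteed here by the bounded image of $\eta$) is indispensable---together with the identification of the limiting covariance operator with $\tfrac{1}{\rho_{1}\rho_{2}}$ times the covariance of $R$, so that the abstract eigenproblem in the statement and the concrete operator $S$ share the same spectrum.
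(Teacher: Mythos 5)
Your proof is correct, and for the null limit it takes a genuinely different route from the paper's. The paper expands the V-statistic $\|\hat{\M}_{\eta\#P}-\hat{\M}_{\eta\#Q}\|_{\H}^{2}$ coordinate-wise in the eigenbasis $\{\psi_m\}$ of the integral operator on $L_{2}(\H,R)$, checks $\E a_m=0$ and $\mathrm{Cov}(a_m,a_n)=(N\rho_1\rho_2)^{-1}\delta_{mn}$ by direct computation, and then invokes the CLT for degenerate V-statistics; you instead apply the Hilbert-space CLT to $\sqrt{N}\bigl(\tfrac{1}{N_1}\sum_i\xi_i-\tfrac{1}{N_2}\sum_j\zeta_j\bigr)$, use the continuous mapping theorem and a Karhunen--Lo\`eve expansion, and recover the $\gamma_m$ of the statement through the $T^{*}T$/$TT^{*}$ duality between the integral operator on $L^{2}(R)$ and the covariance operator $\tfrac{1}{\rho_1}\Sigma_P+\tfrac{1}{\rho_2}\Sigma_Q$ on $\H$. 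Your route is cleaner on one delicate point: the paper infers joint independence of the $a_m$ from marginal asymptotic normality plus vanishing pairwise covariances, which strictly speaking needs a joint multivariate CLT, whereas you get the joint Gaussian structure for free from the functional CLT; the price is having to justify the Hilbert-space CLT, which you correctly tie to the trace-class covariance coming from the bounded image of $\eta$. Your handling of the U-versus-V correction (the diagonal terms contributing exactly $\mathrm{tr}\,S=\sum_m\gamma_m$) matches the paper's ``Claim'' step. Under $H_1$ the arguments also differ: the paper applies separate CLTs to the three constituent U-statistics and adds the variances, while you compute the H\'ajek projection of the full two-sample U-statistic. Your projection variance, $\tfrac{4}{\rho_1}\mathbb{V}_{\mu\sim P}\bigl[\E_{\mu'\sim P}\langle\eta(\mu),\eta(\mu')\rangle_{\H}-\E_{\nu\sim Q}\langle\eta(\mu),\eta(\nu)\rangle_{\H}\bigr]$ plus the analogous $Q$-term, is in fact the rigorous asymptotic variance; it agrees with the displayed $\sigma_1^{2}$ only after the covariance cross-terms are dropped, and the paper's own derivation (summing the three variances while ignoring the dependence among U-statistics built from the same samples) makes the same omission, so your parenthetical caveat is well placed and, for the qualitative claim of asymptotic normality with finite variance, immaterial. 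One point worth making explicit is non-degeneracy under $H_1$: the projection has positive variance only if $\eta(\mu)$ (resp.\ $\eta(\nu)$) varies in the direction $\M_{\eta\#P}-\M_{\eta\#Q}$, which condition (iii) does not literally guarantee --- though the paper glosses over this as well.
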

\begin{proof}
Using the Hilbertianity of $IS\D$ (Proposition \ref{prop:hilbertian-1}),
we have 
\[
\begin{aligned}IS\D^{2}(\mu_{i},\mu_{j})= & \|\eta(\mu_{i})-\eta(\mu_{j})\|_{\H}^{2}\\
= & \|\eta(\mu_{i})\|_{\mathcal{\H}}^{2}+\|\eta(\mu_{j})\|_{\mathcal{\H}}^{2}-2\langle\eta(\mu_{i}),\eta(\mu_{j})\rangle_{\mathcal{\H}}
\end{aligned}
,
\]
Consequently
\[
\sum_{i,j:i\neq j}IS\D^{2}(\mu_{i},\mu_{j})=2(N_{1}-1)\sum_{i=1}^{N_{1}}\|\eta(\mu_{i})\|_{\H}^{2}-2\sum_{i,j:i\neq j}\langle\eta(\mu_{i}),\eta(\mu_{j})\rangle_{\H}.
\]
Similarly,
\begin{eqnarray*}
\sum_{i,j:i\neq j}IS\D^{2}(\nu_{i},\nu_{j}) & = & 2(N_{2}-1)\sum_{i=1}^{N_{2}}\|\eta(\nu_{i})\|_{\H}^{2}-2\sum_{i,j:i\neq j}\langle\eta(\nu_{i}),\eta(\nu_{j})\rangle_{\H},\\
\sum_{i,j}IS\D^{2}(\mu_{i},\nu_{j}) & = & N_{2}\sum_{i=1}^{N_{1}}\|\eta(\mu_{i})\|_{\H}^{2}+N_{1}\sum_{j=1}^{N_{2}}\|\eta(\nu_{j})\|_{\H}^{2}-2\sum_{i,j:i\neq j}\langle\eta(\mu_{i}),\eta(\nu_{j})\rangle_{\H}.
\end{eqnarray*}
Putting these back into Eq. (\ref{eq:T-statistic-isd-1}) after simplifying
and cancelling out the norm-square terms we have
\begin{align}
\hat{\mathbb{T}}= & \frac{1}{N_{1}(N_{1}-1)}\sum_{i,j:i\neq j}\langle\eta(\mu_{i}),\eta(\mu_{j})\rangle_{\H}+\frac{1}{N_{2}(N_{2}-1)}\sum_{i,j:i\neq j}\langle\eta(\nu_{i}),\eta(\nu_{j})\rangle_{\H}\nonumber \\
 & -\frac{2}{N_{1}N_{2}}\sum_{i,j}\langle\eta(\mu_{i}),\eta(\nu_{j})\rangle_{\H}.\label{eq:T-hat-simplify}
\end{align}
At this point, we replace the maps $\eta$ by their centered versions
$\tilde{\eta}(\mu)=\eta(\mu)-\M_{\eta\#P},\tilde{\eta}(\nu)=\eta(\nu)-\M_{\eta\#Q}$;
remember that the center of mass of $\eta\#P$ is denoted by $\M_{\eta\#P}$.
Accumulating the sample-level partial sums above the centering terms
cancel out under $H_{0}:\M_{\eta\#P}=\M_{\eta\#Q}$, so that each
$\eta$ can be replaced by $\tilde{\eta}$ in (\ref{eq:T-hat-simplify})
above. 

Denote $x_{i}\equiv\tilde{\eta}(\mu_{i}),y_{i}\equiv\tilde{\eta}(\nu_{i})$
as the Hilbert-embedded samples of $X\sim\tilde{\eta}\#P,Y\sim\tilde{\eta}\#Q$,
respectively. We remind now that $R$ is a mixture of the centered
pushforward measures: $R=\rho_{2}(\tilde{\eta}\#P)+\rho_{1}(\tilde{\eta}\#Q)$.
Let $L_{2}(\H,R)$ be the space of real-valued functions on $\H$
that are square integrable with respect to $R$. Now we can define
the following operator $S:L_{2}(\H,R)\rightarrow\H$,
\[
(Sf)(x):=\int_{\H}\langle x,x'\rangle_{\H}f(x')dR(x').
\]
Following condition (ii), $\langle\cdot,\cdot\rangle_{\H}$ is square-integrable
under $R$. The above operator is thus Hilbert-Schmidt, hence compact
\citep[Theorem VI.23]{ReedSimonBook}. Consequently, it permits an
eigenfunction decomposition with respect to measure $R$, $\langle x,x'\rangle_{\H}=\sum_{m=1}^{\infty}\gamma_{m}\psi_{m}(x)\psi_{m}(x')$,
for $x,x'\in\H$. Note that here $\psi_{m}:\H\rightarrow\R$ and 
\[
\int_{\H}\langle x,x'\rangle\psi_{m}(x')dR(x')=\gamma_{m}\psi_{m}(x),
\]

\[
\int_{\H}\psi_{m}(x)\psi_{n}(x)dR(x)=\delta_{mn}.
\]
Due to the centering of $\eta$ we also have when $\gamma_{m}\neq0$,
\[
\gamma_{m}\E_{X}[\psi_{m}(x)]=\int_{\H}\E_{X}[\langle x,x'\rangle_{\H}]\psi_{n}(x')dR(x')=0\quad\Rightarrow\quad\E_{X}[\psi_{m}(x)]=0.
\]
Similarly, $\E_{Y}[\psi_{m}(y)]=0.$ The V-statistic from the overall
sample can now be written as an infinite sum \citep[Section 5.5]{serflingbook}:
\[
\|\hat{\M}_{\eta\#P}-\hat{\M}_{\eta\#Q}\|_{\H}^{2}=\sum_{m=1}^{\infty}\gamma_{m}\left(\frac{1}{N_{1}}\sum_{i=1}^{N_{1}}\psi_{m}(x_{i})-\frac{1}{N_{2}}\sum_{i=1}^{N_{2}}\psi_{m}(y_{i})\right)^{2}:=\sum_{m=1}^{\infty}\gamma_{m}a_{m}^{2}.
\]
Our goal is to show that (a) $a_{m}\leadsto\mathcal{N}(0,(N\rho_{1}\rho_{2})^{-1})$,
for $\forall m$, and (b) $a_{m}$ and $a_{n}$ are independent when
$m\neq n.$

First note that
\[
\E(a_{m})=\E\left(\frac{1}{N_{1}}\sum_{i=1}^{N_{1}}\psi_{m}(x_{i})-\frac{1}{N_{2}}\sum_{i=1}^{N_{2}}\psi_{m}(y_{i})\right)=0.
\]
In addition we have,
\begin{eqnarray*}
Cov(a_{m},a_{n}) & = & \E(a_{m}a_{n})-\E(a_{m}).\E(a_{n})\\
 & = & \E(a_{m}a_{n})\\
 & = & \E\left(\frac{1}{N_{1}}\sum_{i=1}^{N_{1}}\psi_{m}(x_{i})-\frac{1}{N_{2}}\sum_{i=1}^{N_{2}}\psi_{m}(y_{i})\right)\left(\frac{1}{N_{1}}\sum_{i=1}^{N_{1}}\psi_{n}(x_{i})-\frac{1}{N_{2}}\sum_{i=1}^{N_{2}}\psi_{n}(y_{i})\right)\\
 & = & \E_{X}\left(\frac{1}{N_{1}^{2}}\sum_{i=1}^{N_{1}}\psi_{m}(x_{i})\psi_{n}(x_{i})\right)+\E_{Y}\left(\frac{1}{N_{2}^{2}}\sum_{i=1}^{N_{2}}\psi_{m}(y_{i})\psi_{n}(y_{i})\right)\\
 & = & \frac{1}{\rho_{1}N}\E_{X}\left(\frac{1}{N_{1}}\sum_{i=1}^{N_{1}}\psi_{m}(x_{i})\psi_{n}(x_{i})\right)+\frac{1}{\rho_{2}N}\E_{Y}\left(\frac{1}{N_{2}}\sum_{i=1}^{N_{2}}\psi_{m}(y_{i})\psi_{n}(y_{i})\right)\\
 & = & \frac{1}{N}\left[\frac{1}{\rho_{1}}\int_{\H}\psi_{m}(x)\psi_{n}(x)d(\tilde{\eta}\#P)(x)+\frac{1}{\rho_{2}}\int_{\H}\psi_{m}(y)\psi_{n}(y)d(\tilde{\eta}\#Q)(y)\right]\\
 & = & \frac{1}{N\rho_{1}\rho_{2}}\int_{\H}\psi_{m}(z)\psi_{n}(z)dR(z)\\
 & = & \frac{1}{N\rho_{1}\rho_{2}}\delta_{mn}.
\end{eqnarray*}
An application of CLT follows that (a) holds. This together with vanishing
covariance proves (b). Consequently, we can apply the CLT for degenerate
V-statistics \citep[Section 5.5.2]{serflingbook} to obtain the limiting
distribution, with $A_{m}\sim\mathcal{N}(0,1)$,
\[
N\|\hat{\M}_{\eta\#P}-\hat{\M}_{\eta\#Q}\|_{\H}^{2}\leadsto\sum_{m=1}^{\infty}\frac{\gamma_{m}}{\rho_{1}\rho_{2}}A_{m}^{2}.
\]

Let us now look at the difference between this V-statistic and our
U-statistic, i.e. $\hat{\mathbb{T}}$ in (\ref{eq:T-hat-simplify}).
We see that
\begin{eqnarray*}
\|\hat{\M}_{\eta\#P}-\hat{\M}_{\eta\#Q}\|_{\H}^{2}-\hat{\mathbb{T}} & = & \frac{1}{N_{1}^{2}}\sum_{i,j}\langle x_{i},x_{j}\rangle_{\H}+\frac{1}{N_{2}^{2}}\sum_{i,j}\langle y_{i},y_{j}\rangle_{\H}-\frac{2}{N_{1}N_{2}}\sum_{i,j}\langle x_{i},y_{j}\rangle_{\H}\\
 &  & -\frac{1}{N_{1}(N_{1}-1)}\sum_{i,j;i\neq j}\langle x_{i},x_{j}\rangle_{\H}+\frac{1}{N_{2}(N_{2}-1)}\sum_{i,j;i\neq j}\langle y_{i},y_{j}\rangle_{\H}+\frac{2}{N_{1}N_{2}}\sum_{i,j}\langle x_{i},y_{j}\rangle_{\H}\\
 & = & -\left[\frac{1}{N_{1}(N_{1}-1)}-\frac{1}{N_{1}^{2}}\right]\sum_{i,j;i\neq j}\langle x_{i},x_{j}\rangle_{\H}-\left[\frac{1}{N_{2}(N_{2}-1)}-\frac{1}{N_{2}^{2}}\right]\sum_{i,j;i\neq j}\langle y_{i},y_{j}\rangle_{\H}\\
 &  & +\left(\frac{1}{N_{1}^{2}}\sum_{i=1}^{N_{1}}\|x_{i}\|_{\H}^{2}+\frac{1}{N_{2}^{2}}\sum_{i=1}^{N_{2}}\|y_{i}\|_{\H}^{2}\right)\\
 & = & -K^{x}-K^{y}+B.
\end{eqnarray*}
We claim that $K^{x}=O_{p}(N_{1}^{-2}),K^{y}=O_{p}(N_{2}^{-2})$,
and $NB\stackrel{{P}}{\rightarrow}\sum_{m=1}^{\infty}\gamma_{m}(\rho_{1}\rho_{2})^{-1}$.
As a result,
\begin{eqnarray*}
N\left[\|\hat{\M}_{\eta\#P}-\hat{\M}_{\eta\#Q}\|_{\H}^{2}-\hat{\mathbb{T}}\right] & = & -NO_{p}(N_{1}^{-2})-NO_{p}(N_{2}^{-2})+\sum_{m=1}^{\infty}\frac{\gamma_{m}}{\rho_{1}\rho_{2}}+o_{p}(1)\\
 & = & \sum_{m=1}^{\infty}\frac{\gamma_{m}}{\rho_{1}\rho_{2}}+o_{p}(1),
\end{eqnarray*}
so that $N\hat{\mathbb{T}}\leadsto\sum_{m=1}^{\infty}\gamma_{m}(\rho_{1}\rho_{2})^{-1}(A_{m}^{2}-1)$,
and we conclude the proof by reassigning $\gamma_{m}\leftarrow\gamma_{m}(\rho_{1}\rho_{2})^{-1}$
to obtain (\ref{eq:H0lim-1}).

\textbf{Proof of Claim. }For the $K$-terms we have
\begin{eqnarray*}
K^{x} & = & \left[\frac{1}{N_{1}(N_{1}-1)}-\frac{1}{N_{1}^{2}}\right]\sum_{i,j;i\neq j}\langle x_{i},x_{j}\rangle_{\H}\\
 & = & \frac{1}{N_{1}^{2}(N_{1}-1)}\sum_{i,j;i\neq j}\langle x_{i},x_{j}\rangle_{\H}\\
 & = & \sum_{m=1}^{\infty}\gamma_{m}\frac{1}{N_{1}}\frac{1}{N_{1}(N_{1}-1)}\sum_{i,j;i\neq j}\psi_{m}(x_{i})\psi_{m}(x_{j})\\
 & = & \sum_{m=1}^{\infty}\gamma_{m}K_{m}^{x},
\end{eqnarray*}
where $K_{m}^{x}$ is defined as the inner sum. Since $\E_{X}\psi_{m}(x)=0$,
we have $\E_{X}(K_{m}^{x})=\frac{1}{N_{1}}[\E_{X}\psi_{m}(x)]^{2}=0$,
and
\begin{eqnarray}
Var_{X}(K_{m}^{x}) & = & \E_{X}[(K_{m}^{x})^{2}]\nonumber \\
 & = & \frac{1}{N_{1}^{2}}\E_{X}\left[\frac{1}{N_{1}^{2}(N_{1}-1)^{2}}\sum_{i\neq j}\sum_{l\neq k}\psi_{m}(x_{i})\psi_{m}(x_{j})\psi_{m}(x_{l})\psi_{m}(x_{k})\right]\label{eq:ln1}\\
 & = & \frac{1}{N_{1}^{2}}\E_{X}\left[\frac{1}{N_{1}^{2}(N_{1}-1)^{2}}\sum_{i\neq j}\psi_{m}^{2}(x_{i})\psi_{m}^{2}(x_{j})\right]\label{eq:ln2}\\
 & = & \frac{1}{N_{1}^{2}}.\frac{1}{N_{1}(N_{1}-1)}\left(\E_{X}[\psi_{m}^{2}(x)]\right)^{2}.\nonumber 
\end{eqnarray}
The cross terms---terms involving $l\neq i$ or $k\neq j$---vanish
due to the sample being iid and eigenfunctions having zero expectations.
The expectation in the last line is finite by assumption (ii), so
that $Var_{X}(K_{m}^{x})=O(N_{1}^{-4})$, giving $K_{m}^{x}=O_{p}(N_{1}^{-2})$.
Note that the assumption (ii) moreover implies the convergence of
the big-oh coefficients, leading to $K^{x}=\sum_{m=1}^{\infty}\gamma_{m}K_{m}^{x}=O_{p}(N_{1}^{-2})$.
Similarly we get $K^{y}=O_{p}(N_{2}^{-2})$.

For the term $B$, we have
\[
B=\frac{1}{N_{1}^{2}}\sum_{i=1}^{N_{1}}\|x_{i}\|_{\H}^{2}++\frac{1}{N_{2}^{2}}\sum_{i=1}^{N_{2}}\|y_{i}\|_{\H}^{2}=\sum_{m=1}^{\infty}\gamma_{m}\left[\frac{1}{N_{1}^{2}}\sum_{i=1}^{N_{1}}\psi_{m}^{2}(x_{i})+\frac{1}{N_{2}^{2}}\sum_{i=1}^{N_{2}}\psi_{m}^{2}(y_{i})\right]:=\sum_{m=1}^{\infty}\gamma_{m}C_{m}.
\]
Taking expectation,
\begin{eqnarray*}
\E_{X,Y}(C_{m}) & = & \frac{1}{\rho_{1}N}\int_{\H}\psi_{m}^{2}(x)d(\tilde{\eta}\#P)(x)+\frac{1}{\rho_{2}N}\int_{\H}\psi_{m}^{2}(y)d(\tilde{\eta}\#Q)(y)\\
 & = & \frac{1}{N\rho_{1}\rho_{2}}\int_{\H}\psi_{m}^{2}(z)dR(z)\\
 & = & \frac{1}{N\rho_{1}\rho_{2}}.
\end{eqnarray*}
Thus $\E_{X,Y}(NB)=\sum_{m}\gamma_{m}(\rho_{1}\rho_{2})^{-1}$. Finally,
\[
NB=\sum_{m=1}^{\infty}\gamma_{m}\left[\frac{1}{\rho_{1}N_{1}}\sum_{i=1}^{N_{1}}\psi_{m}^{2}(x_{i})+\frac{1}{\rho_{2}N_{2}}\sum_{i=1}^{N_{2}}\psi_{m}^{2}(y_{i})\right]\stackrel{{P}}{\rightarrow}\sum_{m=1}^{\infty}\gamma_{m}\left[\frac{1}{\rho_{1}}\E_{X}\psi_{m}^{2}(x)+\frac{1}{\rho_{2}}\E_{Y}\psi_{m}^{2}(y)\right]=\E_{X,Y}(NB)
\]
by the weak law of large numbers. This proves the claim for $B$.

\textbf{Alternative Distribution.} For the the limiting distribution
under $H_{1}$, notice that the first two terms in (\ref{eq:T-statistic-isd-1})
are the one-sample U-statistic calculated on the samples $\lbrace\mu_{i}\rbrace_{i=1}^{N_{1}}$
and $\lbrace\nu_{i}\rbrace_{i=1}^{N_{2}}$, respectively. Using the
CLT for non-degenerate U-statistics \citep[Section 5.5.1, Theorem A]{serflingbook},
we have
\begin{eqnarray*}
\sqrt{N_{1}}\left[\frac{\sum_{i,j:i\neq j}\langle\eta(\mu_{i}),\eta(\mu_{j})\rangle_{\H}}{N_{1}(N_{1}-1)}-\E_{\mu,\mu'\sim P}\langle\eta(\mu),\eta(\mu')\rangle_{\H}\right] & \leadsto & N\left(0,4\mathbb{V}_{\mu\sim P}\left[\E_{\mu'\sim P}\langle\eta(\mu),\eta(\mu')\rangle_{\H}\right]\right),\\
\sqrt{N_{2}}\left[\frac{\sum_{i,j:i\neq j}\langle\eta(\nu_{i}),\eta(\nu_{j})\rangle_{\H}}{N_{2}(N_{2}-1)}-\E_{\nu,\nu'\sim Q}\langle\eta(\nu),\eta(\nu')\rangle_{\H}\right] & \leadsto & N\left(0,4\mathbb{V}_{\nu\sim Q}\left[\E_{\nu'\sim Q}\langle\eta(\nu),\eta(\nu')\rangle_{\H}\right]\right).
\end{eqnarray*}
For the third summand, using an equivalent CLT for two-sample U-statistic
\citep[Theorem 2.1]{DEHLING2012124},
\begin{eqnarray*}
\sqrt{N}\left[\frac{\sum_{i,j}\langle\eta(\mu_{i}),\eta(\nu_{j})\rangle_{\H}}{N_{1}N_{2}}-\E_{\mu\sim P,\nu\sim Q}\langle\eta(\mu),\eta(\nu)\rangle_{\H}\right]\leadsto\\
N\left(0,\frac{1}{\rho_{1}}\mathbb{V}_{\mu\in P}\left[\E_{\nu\sim Q}\langle\eta(\mu),\eta(\nu)\rangle_{\H}\right]+\frac{1}{\rho_{2}}\mathbb{V}_{\nu\in Q}\left[\E_{\mu\sim P}\langle\eta(\mu),\eta(\nu)\rangle_{\H}\right]\right).
\end{eqnarray*}
We obtain (\ref{eq:sigma1sq}) by combining the above three results.
\end{proof}
The following result now ensures that approximations of $\hat{\mathbb{T}}$
using the top few eigenfunctions and a finite number of CDF embeddings
can be constructed with small approximation errors, provided the manifold
eigenvalues are declining suitably fast and the finite dimensional
$\eta_{D}(\cdot)$ is suitably smooth.
\begin{prop}
\label{Prop-LDbound-1}Suppose that (i), (ii) and (iii) hold. Then
we have $\sqrt{N}(\hat{\mathbb{T}}-\hat{\mathbb{T}}_{L_{N}})=o_{p}(1)$
and $\sqrt{N}(\hat{\mathbb{T}}_{L_{N}}-\tilde{\mathbb{T}}_{L_{N},D_{N}})=o_{p}(1)$
for the following choices of $L_{N},D_{N}$:

\[
L_{N}\geq\min_{L'}\left\lbrace L':\sum_{\ell=L'+1}^{\infty}\alpha_{\ell}\lambda_{\ell}^{(n+3)/2}\leq\frac{1}{N^{1+\delta}}\right\rbrace ,\quad D_{N}\geq kc^{2}N^{1+\delta}\sum_{l=1}^{L_{N}}\alpha_{\ell}\lambda_{\ell}^{(n-1)/2},
\]
where $\delta,k>0$ are constants depending only on $\X$.
\end{prop}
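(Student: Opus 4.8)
The plan is to control both differences by \emph{deterministic, sample-independent} bounds, so that each claim in fact holds in the stronger sense $o(1)$ (hence a fortiori $o_p(1)$) and no probabilistic input is needed. The reduction I would make at the outset is that $\hat{\mathbb{T}}-\hat{\mathbb{T}}_L$ and $\hat{\mathbb{T}}_L-\tilde{\mathbb{T}}_{L,D'}$ are each obtained from the statistic in Eq.~(\ref{eq:T-statistic-isd-1}) by replacing $IS\W^2(\cdot,\cdot)$ by an error kernel on $\P(\X)\times\P(\X)$, and the three empirical averages in Eq.~(\ref{eq:T-statistic-isd-1}) carry total weights $1,\tfrac12,\tfrac12$; consequently each of these differences is at most twice the supremum of the absolute value of the corresponding error kernel, uniformly over the samples.

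For the truncation step the error kernel is $g_L(\mu,\nu)=\sum_{\ell>L}\alpha_\ell\,\W^2(\phi_\ell\sharp\mu,\phi_\ell\sharp\nu)\ge 0$. Combining the Lipschitz estimate of Proposition~\ref{prop:lipschitz-1} with $\W^{\X}(\mu,\nu)\le\mathrm{diam}(\X)<\infty$ (compactness of $\X$) yields the uniform bound $\W^2(\phi_\ell\sharp\mu,\phi_\ell\sharp\nu)\le c^2\,\mathrm{diam}(\X)^2\,\lambda_\ell^{(n+3)/2}$, so $\sup_{\mu,\nu}g_L(\mu,\nu)\le c^2\,\mathrm{diam}(\X)^2\sum_{\ell>L}\alpha_\ell\lambda_\ell^{(n+3)/2}$. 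With $L=L_N$ as prescribed this tail is $\le N^{-(1+\delta)}$, and therefore $\sqrt{N}\,|\hat{\mathbb{T}}-\hat{\mathbb{T}}_{L_N}|\le 2c^2\,\mathrm{diam}(\X)^2\,N^{-1/2-\delta}\to 0$. (One could instead use H\"ormander's supremum bound $\|\phi_\ell\|_\infty^2\le c^2\lambda_\ell^{(n-1)/2}$ from the proof of Proposition~\ref{Prop-well-defined-1}, which gives the smaller exponent $(n-1)/2$; since the prescribed $L_N$ is only a lower bound this sharpening is unnecessary.)

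For the discretization step the error kernel is $h_{L,D'}(\mu,\nu)=\sum_{\ell=1}^{L}\alpha_\ell\bigl[\W^2(\phi_\ell\sharp\mu,\phi_\ell\sharp\nu)-\|\eta_{D'}^{0}(\phi_\ell\sharp\mu)-\eta_{D'}^{0}(\phi_\ell\sharp\nu)\|^2\bigr]$, and I would bound each summand via the Riemann-sum error for $\int_0^1\!\big(F^{-1}_{\phi_\ell\sharp\mu}(s)-F^{-1}_{\phi_\ell\sharp\nu}(s)\big)^2\,ds$ at the knots $s_k=k/(D'+1)$. The crucial point is that each $F^{-1}_{\phi_\ell\sharp\mu}$ is non-decreasing with range contained in $\phi_\ell(\X)$, so the difference $f_\ell:=F^{-1}_{\phi_\ell\sharp\mu}-F^{-1}_{\phi_\ell\sharp\nu}$ has $\|f_\ell\|_\infty=O(\|\phi_\ell\|_\infty)$ and $\mathrm{TV}(f_\ell)=O(\|\phi_\ell\|_\infty)$; hence the integrand $f_\ell^2$ is of bounded variation with $\mathrm{TV}(f_\ell^2)\le 2\|f_\ell\|_\infty\,\mathrm{TV}(f_\ell)$ and $\|f_\ell^2\|_\infty$ both $O(\|\phi_\ell\|_\infty^2)$. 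The elementary error estimate for a Riemann sum of a bounded-variation function (together with the harmless $1/D'$ versus $1/(D'+1)$ normalization correction) then gives a per-$\ell$ error of order $\|\phi_\ell\|_\infty^2/D'$; invoking H\"ormander's bound $\|\phi_\ell\|_\infty^2\le c^2\lambda_\ell^{(n-1)/2}$ and summing against $\alpha_\ell$ over $\ell\le L$ produces $\sup_{\mu,\nu}|h_{L,D'}(\mu,\nu)|\le C_0\,c^2\,(D')^{-1}\sum_{\ell=1}^{L}\alpha_\ell\lambda_\ell^{(n-1)/2}$ for some $\X$-dependent constant $C_0$. Taking $D=D_N$ as prescribed then makes $\sqrt{N}\,|\hat{\mathbb{T}}_{L_N}-\tilde{\mathbb{T}}_{L_N,D_N}|\le (2C_0/k)\,N^{-1/2-\delta}\to 0$.

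I expect the only non-mechanical step to be the discretization bound: one must observe that monotonicity of the quantile functions forces the integrand $\big(F^{-1}_{\phi_\ell\sharp\mu}-F^{-1}_{\phi_\ell\sharp\nu}\big)^2$ to be of bounded variation, control its total variation explicitly by a multiple of $\|\phi_\ell\|_\infty^2$ so that the $\ell$-indexed constants remain summable against $\alpha_\ell\lambda_\ell^{(n-1)/2}$ after H\"ormander's estimate, and then apply the elementary bounded-variation Riemann-sum inequality at the specific knots $s_k=k/(D'+1)$ while keeping track of the small normalization discrepancy. Once this is in place, both conclusions follow immediately, since all the error kernels are uniformly bounded over $\P(\X)\times\P(\X)$ and the prescribed $L_N,D_N$ push those bounds below $N^{-(1+\delta)}$, so that $\sqrt{N}$ times them vanishes.
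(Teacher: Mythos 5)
Your proof is correct, and the overall strategy — deterministic, sample-uniform bounds on the two error kernels, with the prescribed $L_N$ and $D_N$ pushing both below $N^{-(1+\delta)}$ so that $\sqrt{N}$ times them vanishes surely, not merely in probability — is the same as the paper's. The truncation half is essentially identical to the paper's argument (Lipschitz bound from Proposition \ref{prop:lipschitz-1} plus compactness to control $\W^{\X}$ by $\mathrm{diam}(\X)$). Where you genuinely diverge is the discretization half. The paper rewrites $\tilde{\mathbb{T}}_{L,D'}$ in terms of inner products of quantile functions and invokes a derivative-based Riemann-sum error bound, $\frac{k}{D'}\alpha_{\ell}\Vert(F_{\phi_{\ell}\sharp\mu}^{-1}F_{\phi_{\ell}\sharp\nu}^{-1})'\Vert_{\infty}$, asserting this is at most $\frac{2kc^{2}}{D'}\alpha_{\ell}\lambda_{\ell}^{(n-1)/2}$; it leans on assumption (iii) to guarantee the derivatives exist a.e. You instead work directly with the squared-distance integrand and bound the Riemann-sum error via total variation, using only that quantile functions are monotone with range in $\phi_{\ell}(\X)$, so that $\mathrm{TV}(f_{\ell}^{2})\leq2\Vert f_{\ell}\Vert_{\infty}\mathrm{TV}(f_{\ell})=O(\Vert\phi_{\ell}\Vert_{\infty}^{2})$. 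Your route buys two things: it does not use assumption (iii) at all for this proposition, and it avoids a soft spot in the paper's argument — absolute continuity gives a.e.\ existence of $(F^{-1})'$ but not a sup-norm bound on it (the derivative of a quantile function blows up wherever the pushforward density is small), so the paper's step from $\Vert(F^{-1}F^{-1})'\Vert_{\infty}$ to $2c^{2}\lambda_{\ell}^{(n-1)/2}$ is not actually justified by boundedness of the range alone, whereas your bounded-variation bound is. The rate in $1/D'$ and the summability against $\alpha_{\ell}\lambda_{\ell}^{(n-1)/2}$ are the same, so the prescribed $D_N$ works unchanged up to the $\X$-dependent constant.
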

As we mention in the discussion after condition (i), for the heat
kernel with tuning parameter $t$: $\alpha(\lambda)=\exp(-t\lambda)$,
the assumption (i) that $\sum_{\ell=1}^{\infty}\alpha_{l}\lambda_{\ell}^{(n+3)/2}<\infty$
holds. The bound on $D_{N}$ is a consequence of classical bounds
on Riemann sum approximation errors in terms of $\|\eta'\|{}_{\infty}$.
Absolute continuity of $\mu\sim P,\nu\sim Q$ ensures the existence
of $(F_{\phi_{\ell}\sharp\mu}^{-1})'(s),(F_{\phi_{\ell}\sharp\nu}^{-1})'(s)$
(where prime denotes the derivative) for Lebesgue-almost every $s\in[0,1]$
\citep[Lemma 2.3]{GavishEtal19}.
\begin{proof}
Notice that given $L_{N}$, summands in the expression $\hat{\mathbb{T}}-\hat{\mathbb{T}}_{L_{N}}$
are the tail sums $\sum_{\ell=L_{N}+1}^{\infty}\alpha_{l}\W^{2}(\phi_{\ell}\sharp\cdot,\phi_{\ell}\sharp\cdot)$
starting at the $L_{N}+1^{\text{th}}$ term. Using a similar approach
as the proof of Proposition \ref{prop:lipschitz-1}, this is bounded
above by a scalar multiple of the geodesic distance, specifically
$c\W^{\X}(\cdot,\cdot)\sqrt{\sum_{\ell=L_{N}+1}^{\infty}\alpha_{\ell}\lambda_{\ell}^{(n+3)/2}}$.
By assumption $\sum_{\ell=1}^{\infty}\alpha_{\ell}\lambda_{\ell}^{(n+3)/2}<\infty$,
so that given $\epsilon>0$ we can always choose a starting point
to make the tail sum $<\epsilon$. The choice of $L_{N}$ follows
by taking $\epsilon=N^{-(1+\delta)}$.

To obtain the choice of $D_{N}$, we first use a similar approach
to the proof of Proposition \ref{Prop-T-hat-limit-1} to simplify
$\tilde{\mathbb{T}}_{L,D'}$ for any $L,D'$:
\begin{align}
\tilde{\mathbb{T}}_{L,D'}=\sum_{\ell=1}^{L} & \left[\frac{1}{N_{1}(N_{1}-1)}\sum_{i,j:i\neq j}\eta_{D'}(\phi_{\ell}\sharp\mu_{i})^{T}\eta_{D'}(\phi_{\ell}\sharp\mu_{j})+\frac{1}{N_{2}(N_{2}-1)}\sum_{i,j:i\neq j}\eta_{D'}(\phi_{\ell}\sharp\nu_{i})^{T}\eta_{D'}(\phi_{\ell}\sharp\nu_{j})\right.\nonumber \\
 & \left.-\frac{2}{N_{1}N_{2}}\sum_{i,j}\eta_{D'}(\phi_{\ell}\sharp\mu_{i})^{T}\eta_{D'}(\phi_{\ell}\sharp\nu_{j})\right].\label{eq:T-tilde-simplify}
\end{align}
Recall that the inverse CDF transformation induced by $\eta_{0}(\phi_{\ell}\sharp\mu)\equiv F_{\phi_{\ell}\sharp\mu}^{-1}$
maps $[0,1]$ to a bounded interval that is the range of $\phi_{\ell}$,
and $\Vert\phi_{\ell}\Vert_{\infty}\leq c\lambda_{\ell}^{(n-1)/4}$
using H\"ormander's bound on the supremum norm of the eigenfunctions.
Using classical results on Riemann sum approximation errors \cite{riemann1,riemann2}
we thus have for any $\ell$:
\[
\left|\alpha_{\ell}\langle\eta_{0}(\phi_{\ell}\sharp\mu),\eta_{0}(\phi_{\ell}\sharp\nu)\rangle_{\H}-\eta_{D'}(\phi_{\ell}\sharp\mu)^{T}\eta_{D'}(\phi_{\ell}\sharp\nu)\right|\leq\frac{k}{D'}\alpha_{\ell}\left\Vert (F_{\phi_{\ell}\sharp\mu}^{-1}F_{\phi_{\ell}\sharp\nu}^{-1})'\right\Vert _{\infty}\leq\frac{2kc^{2}}{D'}\alpha_{\ell}\lambda_{\ell}^{(n-1)/2}.
\]
Given $L=L_{N}$, we simply choose $D'=D_{N}$ large enough to make
the right hand side above smaller than $N^{-(1+\delta)}$. While it
is possible to make the upper bound tighter using recent results (such
as \cite{riemann2}), the above coarser bound suffices for our purpose.
\end{proof}
We now state a version of Theorem~\ref{Thm:asy-tilde}, with specifications for $\gamma_{m},\sigma_{1}^{2},L_{N},D_{N}$
now available through the above two results.
\begin{thm}
\label{Thm:asy-tilde-1} Assume conditions (i)-(iii) hold. Define
$N=N_{1}+N_{2}$, and suppose that as $N_{1},N_{2}\rightarrow\infty$,
we have $N_{1}/N\rightarrow\rho_{1},N_{2}/N\rightarrow\rho_{2}=1-\rho_{1}$,
for some fixed $0<\rho_{1}<1$.With $L\geq L_{N},D'\geq D_{N}$ chosen
per Proposition \ref{Prop-LDbound-1}, under $H_{0}:\M_{\eta\#P}=\M_{\eta\#Q}$
we have
\[
N\tilde{\mathbb{T}}_{L,D'}\leadsto\sum_{m=1}^{\infty}\gamma_{m}(A_{m}^{2}-1),
\]
where $A_{m},\gamma_{m}$ are defined as in Proposition \ref{Prop-T-hat-limit-1}.
Further, under $H_{1}:\M_{\eta\#P}\neq\M_{\eta\#Q}$ we have $\sqrt{N}\left(\tilde{\mathbb{T}}_{L,D'}-\mathbb{T}\right)\leadsto N(0,\sigma_{1}^{2})$.
\end{thm}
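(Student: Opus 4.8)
The plan is to derive Theorem~\ref{Thm:asy-tilde-1} from Propositions~\ref{Prop-T-hat-limit-1} and~\ref{Prop-LDbound-1} by a Slutsky-type argument applied to the decomposition
\[
\tilde{\mathbb{T}}_{L,D'}-\mathbb{T}=\bigl(\hat{\mathbb{T}}-\mathbb{T}\bigr)+\bigl(\hat{\mathbb{T}}_{L}-\hat{\mathbb{T}}\bigr)+\bigl(\tilde{\mathbb{T}}_{L,D'}-\hat{\mathbb{T}}_{L}\bigr),
\]
where $\hat{\mathbb{T}}_{L}$ replaces $IS\W^{2}(\cdot,\cdot)$ in $\hat{\mathbb{T}}$ by its $L$-term truncation $\sum_{\ell=1}^{L}\alpha_{\ell}\W^{2}(\phi_{\ell}\sharp\cdot,\phi_{\ell}\sharp\cdot)$, and $\tilde{\mathbb{T}}_{L,D'}$ further replaces each $\W^{2}(\phi_{\ell}\sharp\cdot,\phi_{\ell}\sharp\cdot)$ by its $D'$-quantile Riemann-sum approximation. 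Proposition~\ref{Prop-T-hat-limit-1} already pins down the limiting law of the leading term $\hat{\mathbb{T}}$, with exactly the $\gamma_{m}$ and $\sigma_{1}^{2}$ named in the theorem, so the entire job is to show that the two remainder terms are negligible at the rate demanded by the hypothesis in force, and then invoke Slutsky's theorem.

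First I would record that the estimates inside the proof of Proposition~\ref{Prop-LDbound-1} are in fact deterministic and of order $N^{-(1+\delta)}$, uniformly over sample pairs. The per-pair truncation error $\sum_{\ell>L}\alpha_{\ell}\W^{2}(\phi_{\ell}\sharp\cdot,\phi_{\ell}\sharp\cdot)$ is bounded, via the Lipschitz estimate of Proposition~\ref{prop:lipschitz-1} together with $\W^{\X}(\cdot,\cdot)\le\mathrm{diam}(\X)$ (compactness of $\X$), by $c^{2}\,\mathrm{diam}(\X)^{2}\sum_{\ell>L}\alpha_{\ell}\lambda_{\ell}^{(n+3)/2}$, which is $\le c^{2}\,\mathrm{diam}(\X)^{2}N^{-(1+\delta)}$ once $L\ge L_{N}$; likewise the per-pair Riemann-sum error is $\le 2kc^{2}N^{-(1+\delta)}$ once $D'\ge D_{N}$, by the defining inequalities for $L_{N}$ and $D_{N}$. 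Both estimates are monotone in $L$ and $D'$, so they persist for every $L\ge L_{N}$, $D'\ge D_{N}$, and averaging over pairs shows that both $|\hat{\mathbb{T}}_{L}-\hat{\mathbb{T}}|$ and $|\tilde{\mathbb{T}}_{L,D'}-\hat{\mathbb{T}}_{L}|$ are $O(N^{-(1+\delta)})$.

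Under $H_{0}:\M_{\eta\#P}=\M_{\eta\#Q}$ we have $\mathbb{T}=\|\M_{\eta\#P}-\M_{\eta\#Q}\|_{\H}^{2}=0$, the U-statistic $\hat{\mathbb{T}}$ degenerates, and Proposition~\ref{Prop-T-hat-limit-1} gives $N\hat{\mathbb{T}}\leadsto\sum_{m=1}^{\infty}\gamma_{m}(A_{m}^{2}-1)$. Multiplying the $O(N^{-(1+\delta)})$ remainder bounds by $N$ yields $O(N^{-\delta})\to0$, so $N\tilde{\mathbb{T}}_{L,D'}=N\hat{\mathbb{T}}+o_{p}(1)$ and Slutsky's theorem delivers the mixture limit. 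Under $H_{1}$, Proposition~\ref{Prop-T-hat-limit-1} gives the non-degenerate CLT $\sqrt{N}(\hat{\mathbb{T}}-\mathbb{T})\leadsto N(0,\sigma_{1}^{2})$; scaling the remainder bounds by $\sqrt{N}$ gives $O(N^{-1/2-\delta})=o(1)$, and a second application of Slutsky's theorem gives $\sqrt{N}(\tilde{\mathbb{T}}_{L,D'}-\mathbb{T})\leadsto N(0,\sigma_{1}^{2})$.

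The hard part is precisely the degeneracy under $H_{0}$: since the limit there arises from the $O_{p}(N^{-1})$ fluctuations of the centered statistic, the truncation and discretization errors must be forced below order $N^{-1}$, not merely below $N^{-1/2}$ as would suffice under $H_{1}$. This is exactly why the thresholds $L_{N},D_{N}$ in Proposition~\ref{Prop-LDbound-1} carry the polynomial safety factor $N^{1+\delta}$ with $\delta>0$, rather than being chosen only so that the truncation error tends to zero. Verifying that this faster, $N$-scale control genuinely holds — which rests on the deterministic, compactness-driven uniform bounds of the previous paragraph and on their monotonicity in $L$ and $D'$ — is the single step requiring real care; everything else is bookkeeping plus two appeals to Slutsky's theorem.
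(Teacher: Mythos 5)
Your proposal is correct and follows essentially the same route as the paper, whose proof of this theorem is simply the one-line combination of Propositions \ref{Prop-T-hat-limit-1} and \ref{Prop-LDbound-1} with Slutsky's theorem. You go slightly further in a useful way: you correctly observe that the $\sqrt{N}$-negligibility asserted in the \emph{statement} of Proposition \ref{Prop-LDbound-1} would not by itself suffice under $H_{0}$, where the scaling is by $N$, and you rightly fall back on the deterministic $O(N^{-(1+\delta)})$ per-pair bounds inside that proposition's proof (together with their monotonicity in $L$ and $D'$) to get the required $O(N^{-\delta})=o(1)$ control --- a point the paper leaves implicit.
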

\begin{proof}
This a combination of Propositions \ref{Prop-T-hat-limit-1} and \ref{Prop-LDbound-1},
and Slutsky's theorem. 
\end{proof}

We conclude with a proof of a specified version of Theorem~\ref{thm:perm-power}, which gives power guarantee of the test based on $\tilde{\mathbb{\mathbb{T}}}_{L,D'}$
for contiguous alternatives.
\begin{thm}
\label{thm:perm-power-1}
Assume conditions (i)-(iii) hold, and let
$L,D'$ be chosen as in Theorem \ref{Thm:asy-tilde-1}. Then for the
sequence of contiguous alternatives $H_{1N}$ such that $N\|\delta_{N}\|_{\mathcal{H}}^{2}\rightarrow\infty$,
the test based on $\tilde{\mathbb{T}}_{L,D'}$ is consistent for any
$\alpha\in(0,1)$, that is as $N\rightarrow\infty$ the asymptotic
power approaches 1.
\end{thm}
\begin{proof}
It is enough the prove consistency using $\hat{\mathbb{T}}$, as the
difference between $\hat{\mathbb{T}}$ and $\tilde{\mathbb{T}}_{L,D'}$
is negligible by choice of $L,D'$. To do so we utilize proof techniques
similar to Theorem 13 in \citet{mmd}. Define $c_{N}:=N^{1/2}\|\delta_{N}\|_{\mathcal{H}}$,
and expand the simplified centered version of the test statistic in
(\ref{eq:T-hat-simplify}) but under $H_{1}$ so that the centering
terms do not cancel out:
\begin{align}
\hat{\mathbb{T}}_{c}= & \frac{1}{N_{1}(N_{1}-1)}\sum_{i,j:i\neq j}\langle\eta(\mu_{i})-\M_{\eta\#P},\eta(\mu_{j})-\M_{\eta\#P}\rangle_{\H}\nonumber \\
 & +\frac{1}{N_{2}(N_{2}-1)}\sum_{i,j:i\neq j}\langle\eta(\nu_{i})-\M_{\eta\#Q},\eta(\nu_{j})-\M_{\eta\#Q}\rangle_{\H}\label{eq:T-statistic-centered}\\
 & \left.-\frac{2}{N_{1}N_{2}}\sum_{i,j}\langle\eta(\mu_{i})-\M_{\eta\#P},\eta(\nu_{j})-\M_{\eta\#Q}\rangle_{\H}\right].\nonumber 
\end{align}
The centered pushforwards have the same Hilbert centroids, thus as
$N\rightarrow\infty$ by Proposition \ref{Prop-T-hat-limit-1},
\[
N\hat{\mathbb{T}}_{c}\leadsto\sum_{m=1}^{\infty}\gamma_{m}(A_{m}^{2}-1):=S.
\]
Subtracting $\hat{\mathbb{T}}_{c}$ from $\hat{\mathbb{T}}$ and its
expansion in Eq. (\ref{eq:T-statistic-isd-1}) on the left and right
hand respectively, then simplifying we have
\begin{eqnarray}
N(\hat{\mathbb{T}}-\hat{\mathbb{T}}_{c}) & =N & \left[-\frac{1}{N_{1}}\sum_{i=1}^{N_{1}}\langle\delta_{N},\eta(\mu_{i})-\M_{\eta\#P}\rangle_{\H}+\frac{1}{N_{2}}\sum_{i=1}^{N_{2}}\langle\delta_{N},\eta(\nu_{i})-\M_{\eta\#Q}\rangle_{\H}+\frac{\langle\delta_{N},\delta_{N}\rangle_{\mathcal{H}}}{2}\right]\nonumber \\
 & =N & \left[\frac{\|\delta_{N}\|_{\mathcal{H}}}{N_{1}}\sum_{i=1}^{N_{1}}\left\langle \frac{\delta_{N}}{\|\delta_{N}\|_{\mathcal{H}}},\eta(\mu_{i})-\M_{\eta\#P}\right\rangle _{\mathcal{H}}\right.\nonumber \\
 &  & \left.-\frac{\|\delta_{N}\|_{\mathcal{H}}}{N_{2}}\sum_{i=1}^{N_{2}}\left\langle \frac{\delta_{N}}{\|\delta_{N}\|_{\mathcal{H}}},\eta(\nu_{i})-\M_{\eta\#Q}\right\rangle _{\mathcal{H}}+\frac{\|\delta_{N}\|_{\mathcal{H}}^{2}}{2}\right].\label{eq:diff-hat}
\end{eqnarray}
Given $N$ the inner products $\langle\delta_{N}/\|\delta_{N}\|_{\mathcal{H}},\eta(\mu_{i})-\M_{\eta\#P}\rangle_{\mathcal{H}}$
are i.i.d. random variables with mean 0, so by CLT then using $\|\delta_{N}\|_{\mathcal{H}}=c_{N}N^{-1/2}$
we get 
\[
\frac{1}{\sqrt{N_{1}}}\sum_{i=1}^{N_{1}}\left\langle \frac{\delta_{N}}{\|\delta_{N}\|_{\mathcal{H}}},\eta(\mu_{i})-\M_{\eta\#P}\right\rangle _{\mathcal{H}}\leadsto U\quad\Rightarrow\quad\frac{N\|\delta_{N}\|_{\mathcal{H}}}{N_{1}}\sum_{i=1}^{N_{2}}\left\langle \frac{\delta_{N}}{\|\delta_{N}\|_{\mathcal{H}}},\eta(\nu_{i})-\M_{\eta\#Q}\right\rangle _{\mathcal{H}}\leadsto\frac{c_{N}}{\sqrt{\rho_{1}}}U,
\]
where $U$ is the zero mean Gaussian random variable that is the limiting
distribution of the above inner product sum. Similarly we have
\[
\frac{N\|\delta_{N}\|_{\mathcal{H}}}{N_{2}}\sum_{i=1}^{N_{2}}\left\langle \frac{\delta_{N}}{\|\delta_{N}\|_{\mathcal{H}}},\eta(\nu_{i})-\M_{\eta\#Q}\right\rangle _{\mathcal{H}}\leadsto\frac{c_{N}}{\sqrt{\rho_{2}}}V,
\]
where $V$ is also Gaussian, zero mean, and independent of $U$. Putting
everything together in the right hand side of (\ref{eq:diff-hat}),
and using $\|\delta_{N}\|_{\mathcal{H}}=c_{N}N^{-1/2}$, given the
threshold $t_{\alpha}$ for a level-$\alpha$ test 
\[
P_{H_{N}}\left(N\hat{\mathbb{T}}>t_{\alpha}\right)\rightarrow P\left[S+c_{N}\left(\frac{U}{\sqrt{\rho_{1}}}-\frac{V}{\sqrt{\rho_{2}}}\right)+\frac{c_{N}^{2}}{2}>t_{\alpha}\right].
\]
By assumption $c_{N}^{2}\rightarrow\infty$, so the asymptotic power
approaches 1 as $N\rightarrow\infty$.
\end{proof}

\subsection{Proofs and Notes for Section \ref{subsec:Combination-Approach}}
\label{subsec:app_comb_proofs}

To guarantee size control when using the the harmonic mean $p$-value
we establish a version of Theorem 1 from \citet{CauchyP}. Assume that
a test statistic $Z\in\R^{D}$ has null distribution with zero mean
and every pair of coordinates of $Z$ follows bivariate Gaussian distribution.
Compute the coordinate-wise two-sided $p$-values $p_{k}=2(1-\Phi(\vert Z_{k}\vert))$
where $\Phi$ is the standard Gaussian CDF. 
\theoremstyle{plain}
\newtheorem*{thm54}{Theorem 5.4}
\begin{thm54}
Let $p_{k},k=1,...,D$ be the null $p$-values as above and $p^{H}$
computed via harmonic mean approach, then 
\[
\lim_{\alpha\to0}\frac{\mathrm{Prob}\{p^{H}\leq\alpha\}}{\alpha}=1.
\]
\end{thm54}
\begin{proof}
The proof of Theorem 1 from \cite{CauchyP} hinges on Lemma 3 in their
supplemental material. We show that Lemma 3 holds for the harmonic
mean combination method. Note that the multiplication by $\pi$ present
in Lemma 3 cancels out when inverse cotangent with a multiplier of
$1/\pi$ is applied later on; so it is not relevant to the flow of
the proof. 

To this end, consider the functions $p(x)=2(1-\Phi(\vert x\vert))$
and $h(x)=1/p(x)$. We need to prove the following three statements: 

(1) for any $|x|>\Phi^{-1}(3/4)$,
\[
\frac{\cos[p(x)\pi]}{p(x)}\leq h(x)\leq\frac{1}{p(x)}
\]

(2) For any constant $0<|a|<1$, we have 
\[
\lim_{x\to+\infty}\frac{h(x)}{x^{2}h(ax)}>c_{a}>0,
\]
where $c_{a}$ is some constant only dependent on $a$.

(3) Suppose that $X_{0}$ has standard normal distribution, then we
have 
\[
P\{h(X_{0})\geq t\}=\frac{1}{t}+O(1/t^{3}).
\]

Statement (1) is trivial, as $h(x)=1/p(x)$ by definition and the
cosine function is upper bounded by one. Statement (2) holds by the
same argument as in the supplement of \cite{CauchyP}. Statement (3)
follows from the fact that when $X_{0}$ is standard normal, then
$p(x)$ is a null $p$-value, and so 
\[
P\{h(X_{0})\geq t\}=P\{p(X_{0})\leq1/t\}=\frac{1}{t}.
\]
Note that there is no $O(1/t^{3})$ term at all, but we kept the form
of the statement the same as in \cite{CauchyP}.

Now, the proof of Theorem 1 from \cite{CauchyP} with weights $\omega_{k}=1/D,k=1,2,...,D$
goes through to give 
\[
P\left\{ \frac{1}{D}\sum\frac{1}{p_{k}}\geq t\right\} =\frac{1}{t}+o(1/t).
\]
Note that $p^{H}=H\left(D/(\frac{1}{p_{1}}+\frac{1}{p_{2}}+\cdots+\frac{1}{p_{D}})\right)$,
where the function $H$ has a known form described in \cite{HarmonicP}
and satisfies $H(x)/x\to1$ as $x\to0$. Thus, as $\alpha\to0$, we
have 
\[
P\left\{ p^{H}\leq\alpha\right\} \asymp P\left\{ \frac{1}{D}\sum\frac{1}{p_{k}}\geq1/\alpha\right\} \asymp\frac{1}{1/\alpha}+o(\frac{1}{1/\alpha})\asymp\alpha.
\]
\end{proof}

\section{Details of numerical experiments}
\label{sec:appB}

\subsection{Synthetic data}
\label{subsec:app_simulations}
We compare the performance of our tests on data from a number of domains
with several existing methods, and settings of the embedding parameters
$L,D'$. For evaluation, we use empirical power at different degrees
of departure from the null hypothesis, calculated by averaging the
proportion of rejections at level $\alpha=0.05$ over 1000 independent
datasets with samples divided into two groups of sizes $n_{1}=60,n_{2}=40$.
To ensure the tests are well-calibrated, we also calculate nominal
sizes assuming the two sample groups are drawn from the same random
meta-distribution. We calculate eigenvalues and eigenfunctions using
analytical expressions provided in Table~\ref{tab:Eigenvalues-and-eigenfunctions}. We fix $\alpha(\lambda)=e^{-\lambda}$
(i.e. heat kernel with $t=1$) for all experiments, in order to avoid unfair advantage from tuning this parameter when comparing to baselines. Also, when using the p-value combination test each t-test scales the mean difference by standard deviation, so the weight functions cancel out anyway. In general, when $t$ is small, the $\hat{\mathbb{T}}$ statistic is more democratic between low and high frequency eigenfunctions, allowing to capture finer details of the underlying domain (assuming large enough 
$L$). When $t$ is large, $\hat{\mathbb{T}}$ focuses on low frequency eigenfunctions so more global differences dominate the computation of 
$\hat{\mathbb{T}}$.

\paragraph*{Finite intervals}

To obtain our base measures $\mu_{i},\nu_{i},$ we generate bin probabilities
as (shifted and normalized) values of the function $f(t_{j})=\mu(t_{j})+\alpha(t_{j})$
at $m=30$ fixed design points $t_{j}=j/(m+1),j=\{1,2,\ldots,m\}$,
and
\begin{eqnarray*}
\mu(t_{j}) & = & 1.2+2.3\cos(2\pi t_{j})+4.2\sin(2\pi t_{j}),\\
\alpha(t_{j}) & = & \epsilon_{0}+\sqrt{2}\epsilon_{1}\cos(2\pi t_{j})+\sqrt{3}\epsilon_{2}\sin(2\pi t_{j}),
\end{eqnarray*}
where $\epsilon_{0,}\epsilon_{1},\epsilon_{2}\sim N(0,1)$ clipped
between $[-3,3]$. Group 1 and 2 samples are obtained as $\mu_{i}(\cdot)\equiv f(\cdot)$
and $\nu_{i}(\cdot)\equiv f(\cdot)+\delta$ respectively, where $\delta\in[0,4]$
is a constant. To make the sample functions non-negative, we shift
all functions by $M=3(1+\sqrt{2}+\sqrt{3})$. Finally, as the $m$-length
vector of bin counts for a sample, we generate a random vector from
the Multinomial distribution with 1000 trials, $m$ outcomes and the
outcome probabilities proportional to the shifted functional observations
corresponding to that sample. 

We use embedding dimensions $L=3,D'=10$ to compare our method against
11 functional ANOVA tests---for brevity we report results for 3 of
them which use different methodological approaches (see Appendix for
complete results). All methods maintain nominal size for $\delta=0$
(Figure \ref{fig:power} a). While the combination test (ISD comb)
based on our proposal outperformed all the other tests across all
values of $\delta$, the bootstrap test that uses the overall $\mathbb{T}$
statistic (ISD T boot) performs better than Fmaxb but worse than others.
Table \ref{tab:func-outputs} shows the outputs for the other 8 competing
methods from the R package \texttt{fdANOVA} for the finite intervals
synthetic data setting\footnote{See \url{https://www.rdocumentation.org/packages/fdANOVA/versions/0.1.2/topics/fanova.tests} for full names of all methods.}.

\setlength{\tabcolsep}{2pt}

\begin{figure}[t]
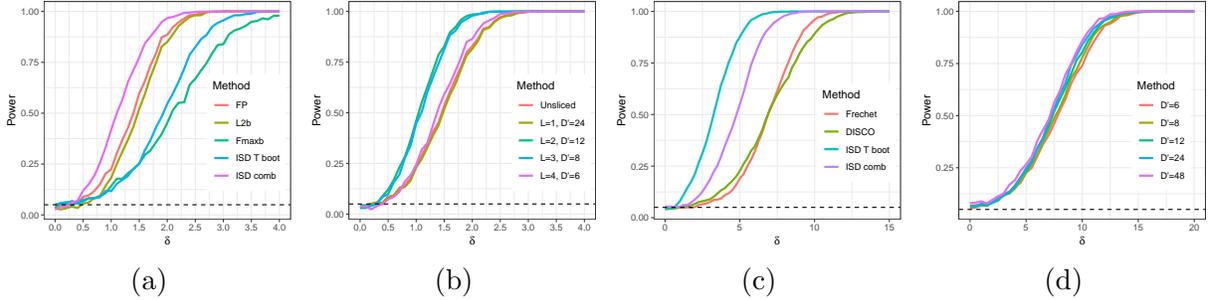

\begin{centering}
\begin{tabular}{cccc}
\includegraphics[width=0.24\textwidth]{Figs/funcsamp} & \includegraphics[width=0.24\textwidth]{Figs/funcall} & \includegraphics[width=0.24\textwidth]{Figs/circ} & \includegraphics[width=0.24\textwidth]{Figs/cylL4}\tabularnewline
(a) & (b) & (c) & (d)\tabularnewline
\end{tabular}
\par\end{centering}
\caption{
\label{fig:power-1}
Performance on synthetic finite interval and manifold
data. Finite interval: (a) comparison with existing methods---a test
based on basis function representation (FP) \cite{FPtest}, a sum-type
$\ell_{2}$ norm-based test (L2b) \cite{L2btest}, and a max-type
test \cite{Fmaxbtest} that uses the maximum of coordinate-wise $F$
statistic (Fmaxb); (b) unsliced vs. different settings of $(L,D')$.
Manifold data: (c) circular data, comparing with Fr\'echet ANOVA \cite{10.1093/biomet/asz052},
and the DISCO nonparametric test \cite{disco}; (d) harmonic combination
tests on cylindrical data for $L=4$. Dotted lines indicates nominal
size of all tests ($\alpha=0.05$).}
\end{figure}

\setlength{\tabcolsep}{10pt}

\begin{table}[ht!]
\centering{}%
\scalebox{1}{
\begin{tabular}{ccccccccc}
\toprule 
$\delta$ & CH & CS & L2N & L2b & FN & FB & Fb & GPF\tabularnewline
\midrule 
0 & 0.031 & 0.03 & 0.033 & 0.024 & 0.031 & 0.028 & 0.033 & 0.026\tabularnewline
0.1 & 0.025 & 0.024 & 0.03 & 0.044 & 0.027 & 0.03 & 0.041 & 0.021\tabularnewline
0.2 & 0.026 & 0.029 & 0.037 & 0.06 & 0.033 & 0.034 & 0.058 & 0.025\tabularnewline
0.3 & 0.036 & 0.041 & 0.044 & 0.067 & 0.041 & 0.04 & 0.067 & 0.033\tabularnewline
0.4 & 0.034 & 0.035 & 0.036 & 0.057 & 0.034 & 0.035 & 0.056 & 0.032\tabularnewline
0.5 & 0.051 & 0.052 & 0.058 & 0.091 & 0.056 & 0.057 & 0.088 & 0.044\tabularnewline
0.6 & 0.056 & 0.066 & 0.066 & 0.089 & 0.061 & 0.066 & 0.088 & 0.051\tabularnewline
0.7 & 0.07 & 0.083 & 0.083 & 0.121 & 0.084 & 0.081 & 0.119 & 0.064\tabularnewline
0.8 & 0.085 & 0.097 & 0.095 & 0.151 & 0.093 & 0.094 & 0.144 & 0.081\tabularnewline
0.9 & 0.118 & 0.142 & 0.14 & 0.2 & 0.144 & 0.137 & 0.194 & 0.118\tabularnewline
1 & 0.158 & 0.182 & 0.176 & 0.232 & 0.183 & 0.173 & 0.228 & 0.154\tabularnewline
1.1 & 0.215 & 0.247 & 0.246 & 0.303 & 0.251 & 0.242 & 0.301 & 0.212\tabularnewline
1.2 & 0.27 & 0.31 & 0.303 & 0.375 & 0.311 & 0.3 & 0.368 & 0.27\tabularnewline
1.3 & 0.328 & 0.363 & 0.357 & 0.438 & 0.37 & 0.353 & 0.43 & 0.324\tabularnewline
1.4 & 0.395 & 0.432 & 0.432 & 0.504 & 0.436 & 0.423 & 0.499 & 0.394\tabularnewline
1.5 & 0.488 & 0.52 & 0.514 & 0.592 & 0.521 & 0.511 & 0.586 & 0.483\tabularnewline
1.6 & 0.534 & 0.595 & 0.576 & 0.652 & 0.593 & 0.566 & 0.647 & 0.544\tabularnewline
1.7 & 0.628 & 0.677 & 0.669 & 0.723 & 0.678 & 0.661 & 0.719 & 0.631\tabularnewline
1.8 & 0.704 & 0.737 & 0.727 & 0.789 & 0.748 & 0.725 & 0.785 & 0.707\tabularnewline
1.9 & 0.785 & 0.823 & 0.812 & 0.869 & 0.827 & 0.806 & 0.867 & 0.793\tabularnewline
2 & 0.83 & 0.849 & 0.844 & 0.88 & 0.85 & 0.841 & 0.875 & 0.832\tabularnewline
2.1 & 0.865 & 0.888 & 0.881 & 0.916 & 0.887 & 0.878 & 0.915 & 0.872\tabularnewline
2.2 & 0.903 & 0.922 & 0.916 & 0.946 & 0.928 & 0.912 & 0.946 & 0.907\tabularnewline
2.3 & 0.938 & 0.95 & 0.944 & 0.964 & 0.951 & 0.944 & 0.963 & 0.944\tabularnewline
2.4 & 0.958 & 0.973 & 0.967 & 0.977 & 0.972 & 0.966 & 0.976 & 0.964\tabularnewline
2.5 & 0.974 & 0.98 & 0.976 & 0.985 & 0.981 & 0.975 & 0.985 & 0.974\tabularnewline
2.6 & 0.977 & 0.981 & 0.979 & 0.987 & 0.981 & 0.978 & 0.986 & 0.977\tabularnewline
2.7 & 0.989 & 0.996 & 0.992 & 0.997 & 0.996 & 0.992 & 0.997 & 0.991\tabularnewline
2.8 & 0.997 & 0.998 & 0.997 & 0.998 & 0.998 & 0.997 & 0.998 & 0.996\tabularnewline
2.9 & 0.996 & 0.997 & 0.996 & 0.999 & 0.997 & 0.996 & 0.999 & 0.997\tabularnewline
3 & 0.998 & 1 & 0.999 & 1 & 1 & 0.999 & 1 & 0.999\tabularnewline
\bottomrule
\end{tabular}
}
\caption{\label{tab:func-outputs}Outputs for other methods in the functional
curves synthetic data setting.}
\end{table}

We also compare the $p$-value combination test based on an \emph{unsliced}
24-dimensional inverse CDF embedding with sliced $IS\W$-based tests
(Figure \ref{fig:power} b). We use multiple pairs of $(L,D')$ values,
all of them giving overall embeddings of dimension $D=LD'=24$. The
performance of an $IS\W$-based test that uses slicing over only the
first eigenfunction is almost as good as the unsliced version. With
more eigenfunctions, the powers first improve considerably, then become
similar to the unsliced version again.

\paragraph*{Manifold domains}

We consider data from distributions on circles and cylinders. For
circular data, we take von Mises distributions with randomly chosen
parameters as our samples. For an angle $x$ (measured in radians),
the von Mises probability density function is given by $f(x|\mu,\kappa)=\exp[\kappa\cos(x-\mu)](2\pi I_{0}(\kappa))^{-1}$,
where $I_{0}(\kappa)$ is the modified Bessel function of order 0.
We fix $\kappa=2$, and use $\mu\equiv\mu_{i}\sim N(0,0.1^{2})$,
$\mu\equiv\nu_{i}\sim N(\delta,0.1^{2})$ for samples from group 1
and 2 respectively---with $\delta\in[0,15]\times\pi/180$ (i.e. 0
to 15 degrees converted to radians). As each observation vector, we
take 100 random draws from each sample-specific distribution. For
our embeddings, we use $L=10,D'=20$, and so our final embedding dimension
is $10\times20\times2=400$. Since the competing methods cannot handle
circular geometry directly, to implement them we cut the circle into
an interval. Figure \ref{fig:power} (c) shows that all methods maintain
nominal size, but both our tests maintain considerably higher power
than existing methods for all $\delta$.

We generate cylindrical data in the form of samples of a bivariate
random vector $(\Theta,X)$, using the cylindrical density function
proposed by \cite{MardiaSutton}:

\[
f(\theta,x)=\frac{e^{\kappa\cos(\theta-\mu)}}{2\pi I_{0}(\kappa)}\frac{1}{\sqrt{2\pi}\sigma_{c}}e^{-\frac{(x-\mu_{c})^{2}}{2\sigma_{c}^{2}}},
\]
clipping values of the $X$-coordinate between the bounded interval
$[0,2\pi]$. This distribution has the parameters $\mu\in[-\pi,\pi],\mu_{0}\in\mathbb{R},\kappa\geq0,\rho_{1}\in[0,1),\rho_{2}\in[0,1),\sigma>0$,
where $\mu,\kappa$ denote parameters for the (circular) marginal
along the $\Theta$-coordinate. and given $\Theta=\theta$, $X$ is
sampled from $N(\mu_{c},\sigma_{c}^{2})$, with

\begin{eqnarray*}
\mu_{c} & = & \mu+\sqrt{\kappa}\sigma\left\lbrace \rho_{1}(\cos\theta-\cos\mu)+\rho_{2}(\sin\theta-\sin\mu)\right\rbrace ,\\
\sigma_{c} & = & \sigma^{2}(1-\rho^{2}),\rho=(\rho_{1}^{2}+\rho_{2}^{2})^{1/2}.
\end{eqnarray*}
In our experiments, we fix $\rho_{1}=\rho_{2}=0.5,\sigma=1,\kappa=2$
across both populations. As random samples of distributions, we draw
$\mu,\mu_{0}\sim\text{{Unif}}(0,1)$ and $\mu,\mu_{0}\sim\text{{Unif}}(\delta,\delta+1)$
for samples of group 1 and 2 respectively, with $n_{1}=60,n_{2}=40$.
We repeat the above for $\delta\in[0,30]$ degrees converted to radians,
and obtain bivariate histograms corresponding to each sample distribution
from 500 random draws from that distribution. To evaluate the effects
of choosing $L,D'$ we calculate our embeddings for $L\in\lbrace2,3,4,5\rbrace,D'\in\lbrace6,8,12,24,48\rbrace$.
The choice of $L$ has small effect on performance, so we report results
for $L=4$ in Figure \ref{fig:power} (d). Higher values of $D'$
result in some increase in power.

\paragraph*{Discussion}
Our $IS\W$-based method is able to exploit the non-euclidean nature
of the problems and and their generality beyond mean comparison more
effectively than competing methods, which are based on mean comparison
on functional data/densities (frechet ANOVA, all functional ANOVA
methods), and/or L2 distance-based comparisons (all functional ANOVA
methods, DISCO). Regarding the optimal choice of embedding dimensions,
while proving theorem \ref{Thm:asy-tilde} we show that (Proposition
10 therein) choosing both $L$ and $D$ above certain thresholds ensures close approximation to the population
test statistic. For the combination test, adding more dimensions to
the embedding can have a two-fold effect: a) probing more dimensions
can help with finding differences, but b) every dimension adds another
test and so potentially leads to loss of power. Thus, for the combination
test, there must be an optimal data dependent choice of the embedding
dimension, which can potentially be found via split testing procedures.
We leave this to future work.

\paragraph*{Computational Complexity}
Assume an underlying graph $G=(V,E)$, and we use $L$ slices, $D'$ quantiles to calculate $IS\W$. Each distribution consists of $V$ atoms (distribution lives on graph vertices). Then, computation of our embeddings includes the following steps:

\begin{enumerate}
\item The computation of eigenvectors/values of symmetric sparse matrices is a well-studied problem with stable and efficient algorithms available, e.g. ARPACK providing an implementation of the Arnoldi method in MATLAB, Python, or R. These methods incur linear time complexity in the size of graph: $O(L(|V|+|E|)$. This computation is done only once per domain, and the overhead is negligible ($<<1$ second) for our graph experiments.

\item Hilbert embedding computations require computing $L$ pushforwards and $D'$ quantiles, which need sorting. This complexity is $O(L(|V|\log |V|+|D'|)$.

\item Testing using p-value combination requires computing t-test p-values for $LD'$ dimensions, with overall complexity of $O(LD'N)$ where $N$ is total sample size.
\end{enumerate}

In contrast, Sobolev Transport (ST) requires computing the Sliced ST Distance, based on computing shortest paths from randomly selected nodes in a graph. This has complexity $O(k(|V|\log |V|+|E|)$ \citep[pg. 4]{sobolev} Implementing permutation tests means repeating the above computation $P$ times, with 
$P=1000$ being a typical choice.

\subsection{NHANES data on physical activity monitoring}
\label{subsec:app_nhanes}
As our first real data application, we analyze the Physical Activity
Monitor (PAM) data from the 2005-2006 National Health and Nutrition
Examination Survey (NHANES)\footnote{\url{https://wwwn.cdc.gov/Nchs/Nhanes/2005-2006/PAXRAW_D.htm}}.
This contains physical activity pattern readings for a large number
of people collected over 1 week period on a per-minute granularity.
After basic pre-processing steps to ensure no missing entries, as
well as data reliability and well-calibrated activity monitors, we
use data from 6839 individuals. The data for each individual corresponds
to device intensity value from the PAM for $24\times60=1440$ minutes
throughout the day, for 7 days.

\setlength{\tabcolsep}{1pt}

\begin{figure}[t]
\begin{centering}
\begin{tabular}{cccc}
\includegraphics[width=0.3\textwidth]{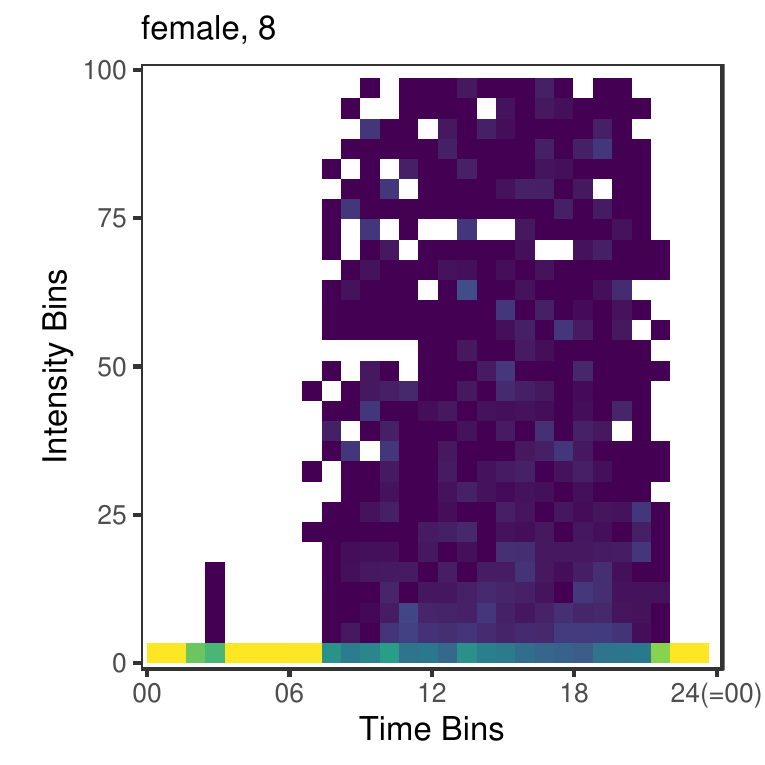} & \includegraphics[width=0.3\textwidth]{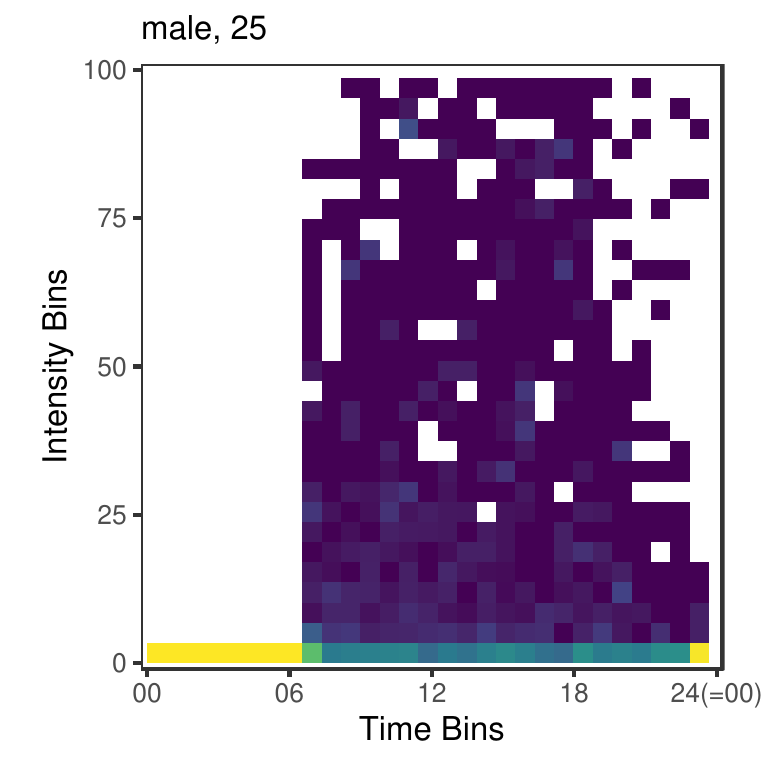} & \includegraphics[width=0.3\textwidth]{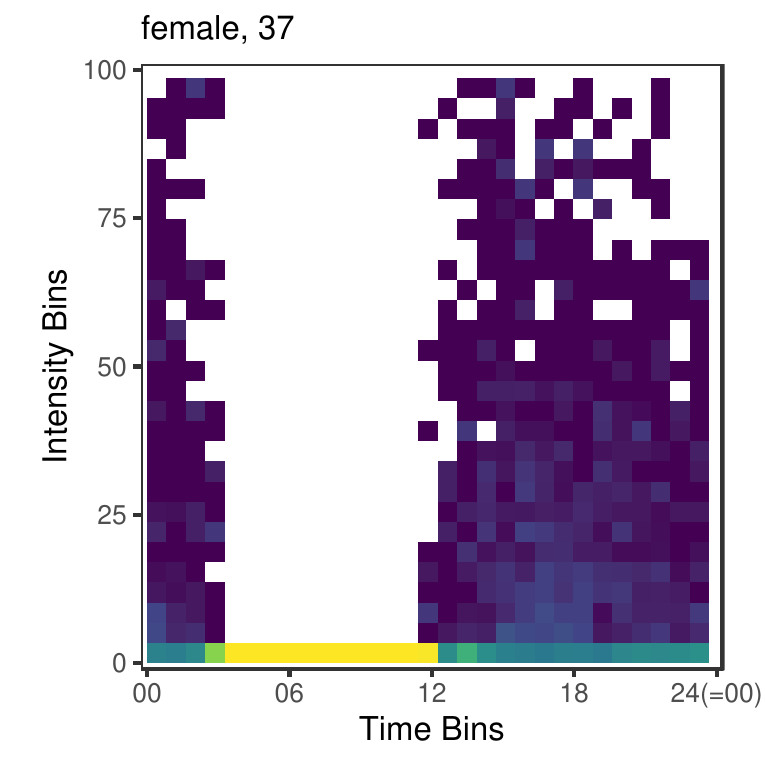} & \includegraphics[width=0.06\textwidth]{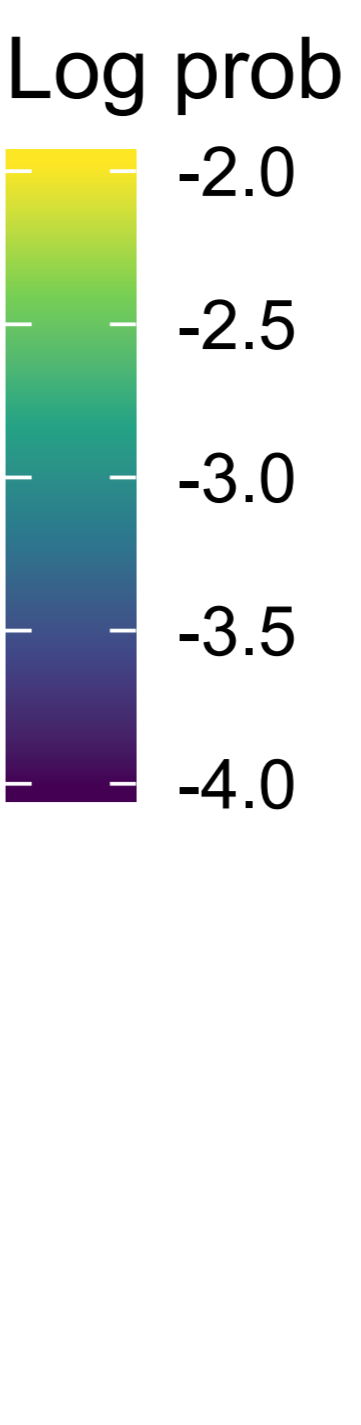}\tabularnewline
\end{tabular}
\par\end{centering}
\caption{\label{fig:nhanes_examples}Activity histograms for three individuals
from NHANES dataset. There are 100 bins in the intensity and 96 in
the time dimension; we show hour of day on the time axis. The time
dimension is periodic where 00:00 is identified with 24:00, giving
rise to a cylindrical histogram domain.}
\end{figure}

\begin{figure}[t]
\begin{centering}
\begin{tabular}{cccc}
\includegraphics[width=0.3\textwidth]{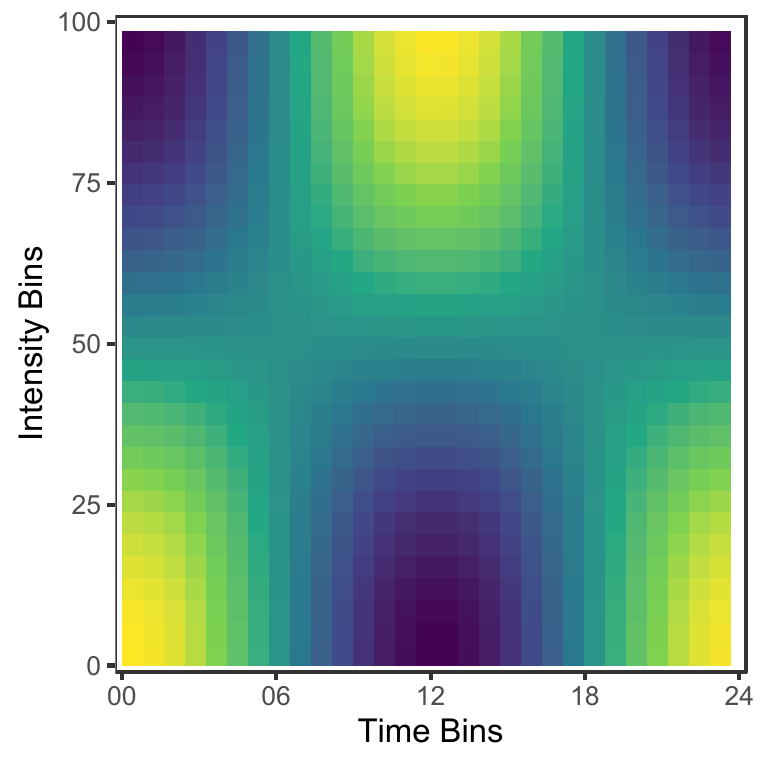} & \includegraphics[width=0.3\textwidth]{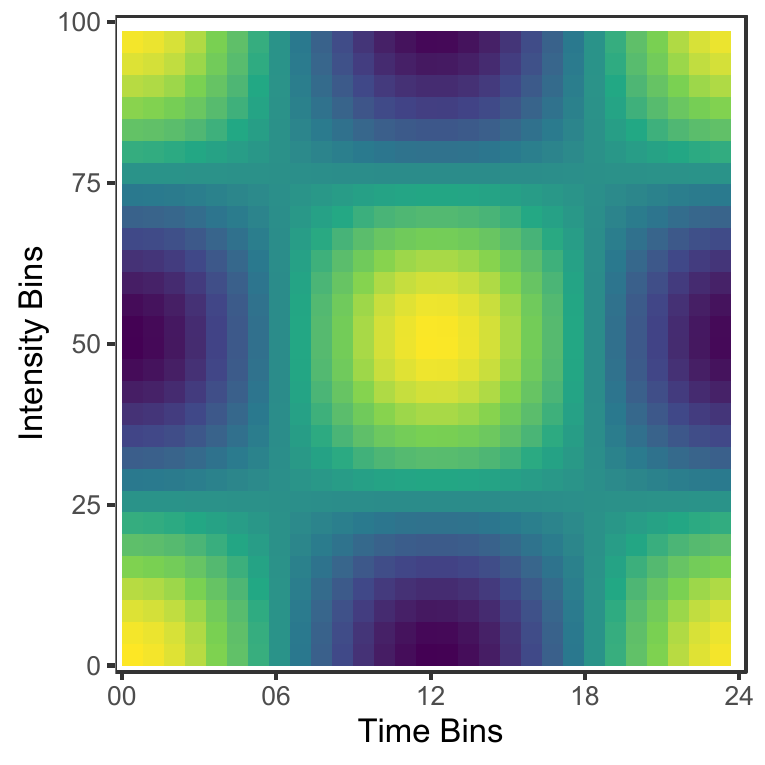} & \includegraphics[width=0.3\textwidth]{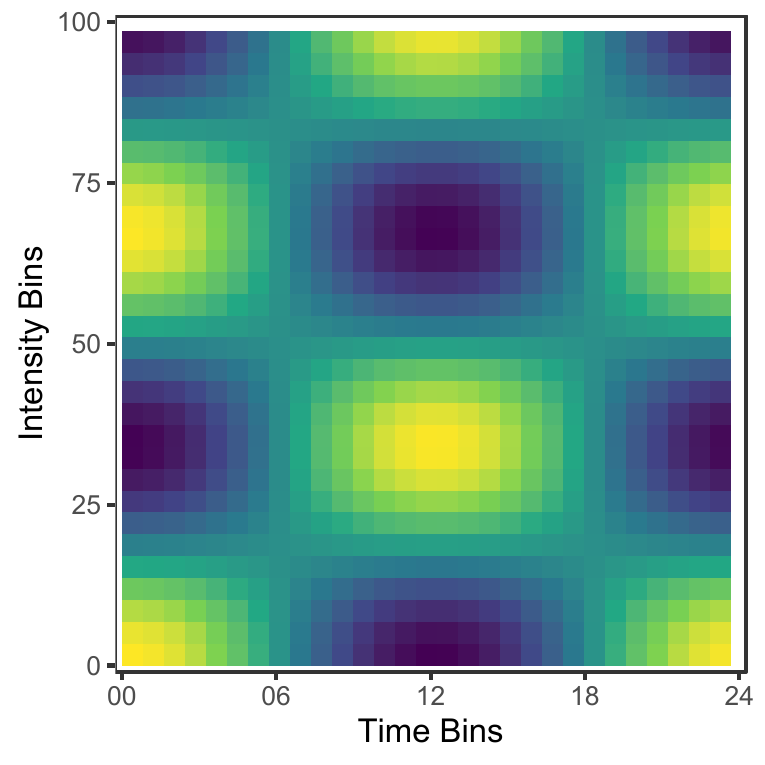} & \includegraphics[width=0.06\textwidth]{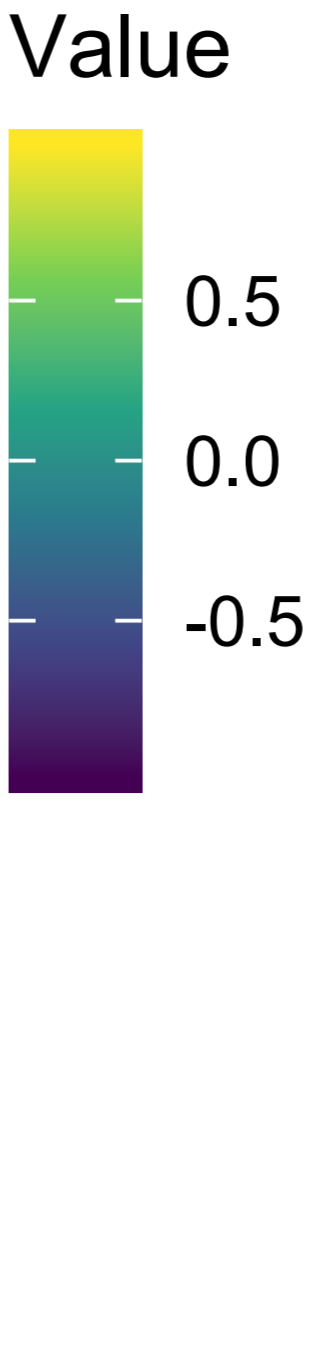}\tabularnewline
\end{tabular}
\par\end{centering}
\caption{\label{fig:nhanes_eigenfunctions}Three eigenfunctions for the NHANES
histogram domain normalized by the maximum absolute value. Note that
the eigenfunctions are periodic in the time direction (i.e. match
when glued over the side cut) but not in the intensity direction,
reflecting the cylindrical geometry of the underlying domain.}
\end{figure}

For each individual we can capture their activity patterns into a
cylindrical histogram with time and intensity dimensions. For each
observation, its time during the day is discretized into 15-minute
intervals giving 96 bins for the time dimension; its intensity value
(capped at 1,000) is discretized into a 100 equidistant bins. Since
the time dimension is periodic, we obtain a histogram over the cylinder
$S^{1}(T_{1})\times[0,T_{2})$, with $T_{1}=96,T_{2}=100$. Normalized
counts can thus be considered as person-specific probability distributions;
several examples are shown in Figure \ref{fig:nhanes_examples}. Note
that flattening the domain by cutting the cylinder will arbitrarily
split activity patterns (see especially Figure \ref{fig:nhanes_examples},
Female 37) and will lead to inefficiencies due to horizontal variability.

We apply the proposed methodology to check if the activity patterns
vary across different groups of individuals obtained as follows. We
first split the overall dataset based on the individual's age using
the following inclusive ranges: 6--15, 16--25, ...,76--85; this
covers all the ages in the dataset. From each split we sample 100
males and 100 females to avoid gender imbalance driving the results.
Thus, we end up with 8 age groups with 200 individuals per group.
Our goal is to compare these 8 groups' activity patterns by conducting
pair-wise tests. 

To perform our analysis we compute the eigenvalues and eigenfunctions
as per the 4th row of Table \ref{tab:Eigenvalues-and-eigenfunctions}
using $\ell_{1}=1,2,3$ and $\ell_{2}=1,2,3$, giving a total of $L=2\times3\times3=18$
eigenfunctions; three of the resulting eigenfunctions are shown in
Figure \ref{fig:nhanes_eigenfunctions}. We consider a $D'=5$ dimensional
embedding for the inverse CDF transformation, hence the final embedding
dimension after the slicing construction is $D=LD'=18\times5=90$. 

We summarize the results in Table \ref{tab:nhanes-age-groups}, \emph{below
the diagonal}. The $p$-values are obtained via the harmonic mean
combination approach. We run the Benjamini-Hochberg \cite{FDR} procedure
on the resulting $p$-values at the false discovery rate of $0.1$,
and the rejected hypotheses are indicated by the $p$-values in bold.
Our method detects statistically significant differences between all
pairs of groups, except 46--55 versus 36--45 and 56-65 groups. As
a control experiment, we provide our method with null cases and display
the $p$-values in Table \ref{tab:nhanes-age-groups}, \emph{above
the diagonal}. The null cases are obtained by combining the individuals
from the two comparison groups and splitting it arbitrarily (i.e.
mixing the two age groups). As expected, the $p$-values of the control
comparisons do not concentrate near zero.

\setlength{\tabcolsep}{1pt}

\begin{table}[t]
\begin{centering}
\begin{tabular}{ccccccccc}
\toprule 
Age Groups & 6--15 & 16--25 & 26--35 & 36--45 & 46--55 & 56--65 & 66--75 & 76--85\tabularnewline
\midrule 
6--15 &  & 0.979  & 0.31  & 0.383  & 0.297  & 0.905  & 0.921  & 0.326 \tabularnewline
16--25 & \textbf{3.7e-11 } &  & 0.998  & 0.963  & 0.443  & 0.872  & 0.442  & 0.529 \tabularnewline
26--35 & \textbf{4.6e-20 } & \textbf{1.0e-05 } &  & 0.987  & 0.818  & 0.93  & 0.731  & 0.992 \tabularnewline
36--45 & \textbf{3.2e-26 } & \textbf{3.5e-11 } & \textbf{0.01 } &  & 0.945  & 0.984  & 0.974  & 0.327 \tabularnewline
46--55 & \textbf{6.6e-27 } & \textbf{8.4e-16 } & \textbf{0.002 } & 0.377  &  & 0.832  & 0.618  & 0.844 \tabularnewline
56--65 & \textbf{2.4e-32 } & \textbf{7.5e-20 } & \textbf{3.1e-04 } & \textbf{0.042}  & 0.977  &  & 0.509  & 0.98 \tabularnewline
66--75 & \textbf{5.4e-45 } & \textbf{1.6e-16 } & \textbf{7.7e-06 } & \textbf{1.6e-04 } & \textbf{0.001 } & \textbf{0.011 } &  & 0.557 \tabularnewline
76--85 & \textbf{3.4e-52 } & \textbf{1.4e-23 } & \textbf{1.4e-15 } & \textbf{2.7e-12 } & \textbf{9.7e-16 } & \textbf{1.4e-09 } & \textbf{2.1e-06 } & \tabularnewline
\bottomrule
\end{tabular}
\par\end{centering}
\centering{}\caption{\label{tab:nhanes-age-groups} Comparing the activity intensity of
different age groups based on the NHANES dataset. Below diagonal:
$p$-values corresponding to the actual data comparisons. Above diagonal:
null $p$-values obtained by combining and randomly splitting the
two involved groups. The entries in boldface correspond to the rejected
hypotheses with the BH procedure at the FDR level of 0.1.}
\end{table}

Curiously, our method can be used ``off-label'' to conduct \emph{functional
data analyses} over different dimensions of the NHANES dataset. For
example, one can concentrate on a single day of activity intensity
data which gives a curve over the 24-hour circle. Since activity intensity
is a non-negative number, these curves can be normalized so as to
obtain probability distributions. Now we can use our methodology to
detect pair-wise differences across groups. While this has the benefit
of accounting for underlying geometry of data, it loses the absolute
magnitude information due to the normalization. Clearly the appropriateness
of such an analysis would depend on the goal of the exercise and the
particular research question attached to that goal; our proposal provides
a framework that is flexible enough to handle data of different modalities.

\subsection{Chicago Crime}
\label{subsec:app_chic}

\begin{figure}
\begin{centering}
\begin{tabular}{ccc}
\includegraphics[width=0.3\textwidth]{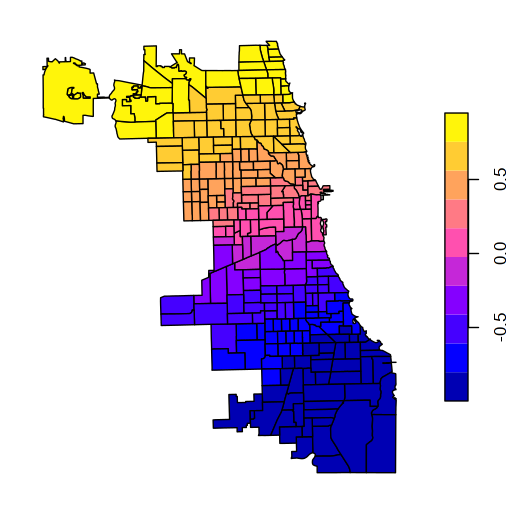} & \includegraphics[width=0.3\textwidth]{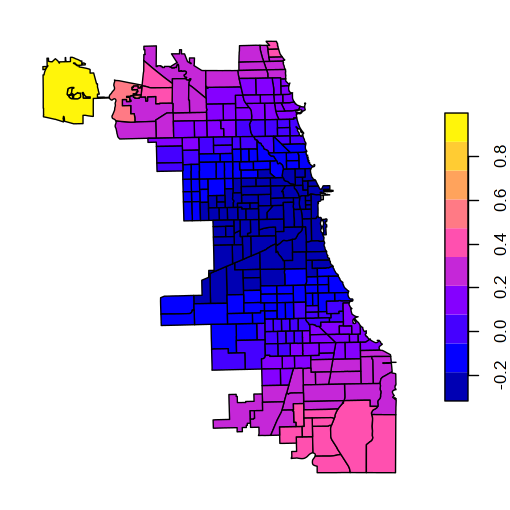} & \includegraphics[width=0.3\textwidth]{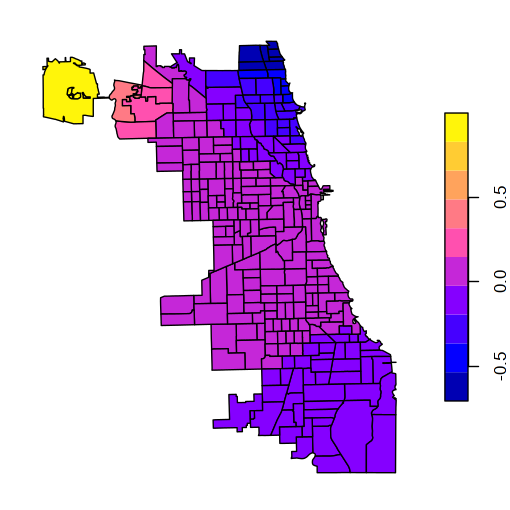}\tabularnewline
(a) $\phi_{1}(\cdot)$ & (b) $\phi_{2}(\cdot)$ & (c) $\phi_{3}(\cdot)$\tabularnewline
\end{tabular}
\par\end{centering}
\caption{\label{fig:beat-map}First three eigenvectors of the Laplacian are
shown for the beat adjacency graph, mapped back to the geographic
locations. All of the eigenvectors are normalized by the maximum absolute
value. The spatial smoothness of the eigenvectors---somewhat masked
here due to the discrete colormap---is crucial to efficiently capturing
horizontal variability of the data (i.e. distribution shifts over
the graph). The boundaries of beats are shown based on the shape file
from Chicago Data portal.}
\end{figure}

We demonstrate the use of our methodology on histograms over graphs.
In this experiment, we use the Chicago Crimes 2018 dataset\footnote{\url{data.cityofchicago.org}}
which captures incidents of crime in the City of Chicago. We base
our analysis on the type of crime, the beat (geographic area subdivision
used by police, see Figure \ref{fig:beat-map}) where the incident
took place, and the date of the incident. To capture the spatial aspect
of the data we build a graph with one vertex per beat; two vertices
are connected by an edge if the corresponding beats share a geographic
boundary. For each crime type and day, we capture the total count
of that crime type for each beat; after normalizing this gives a daily
probability distribution over the graph. Our goal is to compare the
collection of distributions of, say, theft occurring on Tuesday to
those of Thursday and Saturday. The Tuesday versus Thursday comparison
is intended as a null case, as we do not expect to see any differences
between them \cite{AKME}. 

We build the un-normalized Laplacian of the beat adjacency graph,
and compute its lowest frequency $L=20$ eigenvalues and eigenvectors.
The first three eigenvectors are plotted in Figure \ref{fig:beat-map}.
The number of inverse CDF values used in the embedding is $D'=5$,
which gives rise to $D=100$ dimensional embedding. The results of
comparisons are shown in the last two columns of Table \ref{tab:chicago-crime};
the $p$-values are obtained via the harmonic mean combination approach.
We run the Benjamini-Hochberg \cite{FDR} procedure on the 20 resulting
$p$-values at the false discovery rate of $0.1$, and the rejected
hypotheses are indicated by the $p$-values in bold. As expected,
no differences were detected between Tuesday and Thursday patterns.
On the other hand, we see that there are statistically significant
differences between Tuesday and Saturday patterns in the following
categories of crime: theft, deceptive practice, battery, robbery,
narcotics, and criminal damage. 

\setlength{\tabcolsep}{1pt}

\begin{table}[t]
\begin{centering}
\begin{tabular}{lcccccccc}
\toprule 
\multirow{2}{*}{Crime Type} & \multicolumn{2}{c}{Tuesday} & \multicolumn{2}{c}{Thursday} & \multicolumn{2}{c}{Saturday} & Tue vs Thu & Tue vs Sat\tabularnewline
\cmidrule{2-9} \cmidrule{3-9} \cmidrule{4-9} \cmidrule{5-9} \cmidrule{6-9} \cmidrule{7-9} \cmidrule{8-9} \cmidrule{9-9} 
 & $N$ & $\mathrm{\overline{count}}$ & $N$ & $\mathrm{\overline{count}}$ & $N$ & $\mathrm{\overline{count}}$ & $p$-value & $p$-value\tabularnewline
\midrule 
Theft  & 52  & 178.7  & 52  & 182.9  & 52  & 180.2  & 0.452  & \textbf{4.7e-06 }\tabularnewline
Deceptive Practice  & 51  & 55.8  & 52  & 54.9  & 52  & 44.4  & 0.255  & \textbf{4.2e-04 }\tabularnewline
Battery  & 52  & 125.8  & 52  & 123.0  & 52  & 154.9  & 0.374  & \textbf{0.001 }\tabularnewline
Robbery  & 50  & 25.2  & 50  & 25.1  & 52  & 28.1  & 0.130  & \textbf{0.002 }\tabularnewline
Narcotics  & 51  & 36.0  & 51  & 34.6  & 50  & 36.9  & 0.890  & \textbf{0.008 }\tabularnewline
Criminal Damage  & 52  & 70.0  & 52  & 73.7  & 52  & 83.0  & 0.901  & \textbf{0.03 }\tabularnewline
Other Offense  & 52  & 49.5  & 52  & 48.4  & 52  & 44.1  & 0.670  & 0.037 \tabularnewline
Burglary  & 52  & 34.0  & 52  & 33.1  & 52  & 29.1  & 0.157  & 0.183 \tabularnewline
Motor Vehicle Theft  & 52  & 27.9  & 52  & 26.2  & 51  & 28.1  & 0.923  & 0.365 \tabularnewline
Assault  & 52  & 57.2  & 52  & 59.3  & 52  & 52.4  & 0.996  & 0.617 \tabularnewline
\bottomrule
\end{tabular}
\par\end{centering}
\caption{\label{tab:chicago-crime}Results on Chicago Crime 2018 dataset. The
entries in bold correspond to the rejected hypotheses with the BH
procedure at the FDR level of 0.1. The $N$ column captures the number
of days passing the filtering criteria, and the $\mathrm{\overline{count}}$
column shows the average per-day crime count.}
\end{table}

\subsection{Brain Connectomics}
\label{subsec:app_brain}
In this example, we consider two publicly available brain connectomics
datasets \cite{func_conn_data1,func_conn_data2} distributed as a
part of the R package \texttt{graphclass}\footnote{\url{http://github.com/jesusdaniel/graphclass}}.
Both are based on resting state functional magnetic resonance imaging
(fMRI): COBRE has data on 54 schizophrenics and 70 controls, and UMich
with 39 schizophrenics and 40 controls. The datasets capture the pairwise
correlations between 264 regions of interest (ROI) of Power parcellation
\cite{Power2011} and can be considered as a 264 node graph (263 nodes
for COBRE as ROI 75 is missing) with positive and negative edge weights.

We define three probability measures supported on the nodes of the
graph. For each ROI we take the sum of absolute values of all its
correlations with the remaining ROIs. Now we have a positive number
assigned to each node capturing its overall connectivity to the rest
of the graph and we normalize to obtain a measure; this construction
will be referred to as ``all correlations''. Note that each scanned
subject gives rise to a separate ``all correlations'' probability
measure on the same underlying node set. The ``positive correlations''
and ``negative correlations'' constructions are based on keeping
respectively only positive or only negative correlations and aggregating
as above.

We also need a fixed base graph for the computation of the Laplacian
eigen-decomposition; this graph should capture the spatial connectivity
of the ROIs which is relevant due to the smooth nature of the blood
oxygenation level dependent (BOLD) signal that is used for computing
the correlations. To this end, we obtain the coordinates for the centers
of the 264 ROIs\footnote{\url{www.jonathanpower.net/2011-neuron-bigbrain.html}}
and build the base graph by connecting each ROI to its nearest 8 ROIs.We
compute the lowest frequency $L=20$ eigenvalues and eigenvectors
of the corresponding un-normalized Laplacian. The number of inverse
CDF values used in the embedding is $D'=5$, which gives rise to $D=100$
dimensional embedding.

\begin{table}
\begin{centering}
\begin{tabular}{cccc}
\toprule 
\multirow{1}{*}{Dataset} & \multicolumn{1}{c}{All correlations} & \multicolumn{1}{c}{Positive correlations} & \multicolumn{1}{c}{Negative correlations}\tabularnewline
\midrule 
COBRE & 0.0084 & 0.00019 & 0.0019\tabularnewline
UMich & 0.609 & 0.116 & 0.022\tabularnewline
\bottomrule
\end{tabular}
\par\end{centering}
\caption{\label{tab:Results-on-brain}Comparison results between the schizophrenic
and control groups for brain connectomics datasets.}
\end{table}

Table \ref{tab:Results-on-brain} shows the result of comparing the
schizophrenic group to the control group for both of the datasets;
the $p$-values are obtained via the harmonic mean combination approach.
We can see that our approach detects statistically significant differences
between the two groups in COBRE dataset in all of the three types
of measures on graphs. In contrast, for UMich dataset, the difference
is detected only in the negative correlations and loses significance
when corrected for multiple testing. This is potentially caused by
the higher inhomogeneity of the UMich dataset that was pooled across
five different experiments spanning seven years \cite{func_conn_data2}.
An interesting aspect of our analysis is that due to normalization
(to obtain probability measures) the total sum of connectivity is
factored out by the proposed method. As a result, the detected differences
are not related to the well-known change in the overall connectivities
between the two groups, but rather to distributional changes in marginal
connectivity strengths.




\end{document}